\documentclass[11pt]{amsart}

\usepackage{hyperref}
\usepackage{xspace}

\usepackage[utf8]{inputenc}

\usepackage{amssymb,amsmath,amsxtra,stmaryrd,verbatim,framed,wasysym}

\usepackage{color}
\usepackage[usenames,dvipsnames,svgnames,table]{xcolor}
\usepackage{ulem}
\usepackage{subfigure}
\usepackage{graphics}

\newtheorem{theorem}{Theorem}
\newtheorem{corollary}[theorem]{Corollary}
\newtheorem{proposition}[theorem]{Proposition}
\newtheorem{lemma}[theorem]{Lemma}

\theoremstyle{definition}
\newtheorem{definition}[theorem]{Definition}

\theoremstyle{remark}
\newtheorem{remark}[theorem]{Remark}
\newtheorem{example}[theorem]{Example}

\newcommand{\varletter}[1]{\mathcal{#1}}
\newcommand{\Q}{\varletter{Q}}
\newcommand{\varS}{\varletter{S}}
\newcommand{\T}{\varletter{T}}
\newcommand{\F}{\varletter{F}}
\newcommand{\K}{\varletter{K}}
\newcommand{\B}{\varletter{B}}

\newcommand{\V}{\varletter{V}}
\newcommand{\varP}{\varletter{P}} 

\newcommand{\dletter}[1]{\mathbb{#1}}
\newcommand{\DD}{\dletter{D}} 
\newcommand{\ZZ}{\dletter{Z}}
\newcommand{\CC}{\dletter{C}}  
\newcommand{\NN}{\dletter{N}} 
\newcommand{\PP}{\dletter{P}} 

\newcommand{\norm}[1]{\left\Vert#1\right\Vert}
\newcommand{\inprod}[2]{\left\langle #1, #2 \right\rangle}
\newcommand{\kinprod}[2]{\left\langle \mathsf{#1}\mid \mathsf{#2} \right\rangle}

\newcommand{\ket}[1]{\left\vert\mathsf{#1}\right\rangle} 
\newcommand{\tek}[1]{\left\langle\mathsf{#1}\right\vert} 
\newcommand{\cb}[1]{\mathsf{CB}(#1)}
\newcommand{\SPAN}[1]{\mathsf{span}(#1)}
\newcommand{\CoC}{\widetilde{\CC}}

\newcommand{\partialto}{\rightharpoonup}
\newcommand{\CConf}{\mathfrak{C}}
\newcommand{\GConf}{\mathfrak{C}}
\newcommand{\FConf}{\mathfrak{F}}
\newcommand{\SConf}[1]{\GConf^{#1}}
\newcommand{\TConf}[1]{\overline{\GConf}^{#1}}
\newcommand{\qConf}{\mathfrak{q}\GConf}
\newcommand{\qFConf}{\mathfrak{q}\FConf}
\newcommand{\cqConf}{q\mathcal{C}}

\newcommand{\nstring}[1]{\underline{#1}}

\newcommand{\rdelta}{\overline\delta}
\newcommand{\rgamma}{\overline\gamma}

\newcommand{\lub}{\bigsqcup}

\newcommand{\BeV}{B\&\!V\xspace}

\newcommand{\val}{\mathbf{val}}

\newcommand{\qcf}{\mathcal{QCF}}

\newcommand{\pd}[1]{\mathbf{P}_{#1}}
\newcommand{\prob}[2]{\pd{#1}(#2)}

\newcommand{\outobs}[3]{\ket{#1}\downarrow_{#2}\ket{#3}}
\newcommand{\obsout}[2]{#1\downarrow_{#2}}
\newcommand{\PR}{\mathsf{Pr}}
\newcommand{\pr}[1]{\PR\{#1\}}

\newcommand{\abextra}[1]{\overline{#1}}
\newcommand{\aball}[1]{\widehat{#1}}
\newcommand{\symextra}[1]{\overline{#1}}

\newcommand{\dind}[2]{\genfrac{}{}{0pt}{1}{#1}{#2}}

\title{Towards A Theory Of Quantum Computability} 

\date{April 10, 2015}

\author[S. Guerrini]{Stefano Guerrini$^*$} \address{Stefano Guerrini\\
  LIPN, UMR 7030 CNRS, Institut Galilée, Université Paris13, Sorbonne
  Paris Cité} 

\email{stefano.guerrini@univ-paris13.fr}

\thanks{$^*$
  Partially supported by the Project ELICA (ref.~ANR-14-CE25-0005), of
  the ANR program ``Fondements du numérique (DS0705) 2014''.}

\author[S. Martini]{Simone Martini$^{\S}$} \address{Simone Martini\\ Dipartimento di
  Informatica -- Scienza e Ingegneria, Università di Bologna, and Inria Sophia-Antipolis}

\email{simone.martini@unibo.it} \thanks{$^{\S}$ Partially supported by the Italian ``National Group for Algebraic and Geometric Structures, and their Applications'' (GNSAGA-INDAM)}

\author[A. Masini]{Andrea Masini$^{\P}$}

\address{Andrea Masini\\ Dipartimento di Informatica, Università di Verona}

\email{andrea.masini@univr.it}

\thanks{$^{\P}$ Partially written while at LIPN, Institut Galilée,
  Université Paris13, Sorbonne Paris Cité as visiting researcher.}

\begin{document}

\begin{abstract} 
  We propose a definition of quantum computable functions as mappings
  between superpositions of natural numbers to probability
  distributions of natural numbers.  Each function is obtained as a
  limit of an infinite computation of a quantum Turing machine.  The
  class of quantum computable functions is recursively enumerable,
  thus opening the door to a quantum computability theory which may
  follow some of the classical developments.
\end{abstract}

\maketitle

\section{Introduction}

Despite the availability of a large corpus of results\footnote{
See, in the large literature, \cite{Deu85,MiOh05,NiOz02,NishOza10,Yao93} for fundamentals
results, \cite{BerVa97} for the foundations of quantum complexity, or 
\cite{AltGra05,DLZorzi15,DMZ10,DLMZentcs11,DLMZmscs08,DKP07,MVZ-jmvl,Sel04c,SelVal06,Zorzi15} for
more language oriented papers.
}, quantum
computability still lacks a general treatment akin to classical
computability theory. Taking as a reference model (quantum) Turing
machines, one of the main obstacles is that while it is obvious how to
understand a classical Turing machine (TM) as a device computing a
numerical function, the same is not so for a quantum Turing machine
(QTM).

In a na\"\i{}ve, but suggestive way, a QTM may be described as a
classical TM which, at any point of its computation, evolves into
several different classical configurations, each characterised by a
complex \emph{amplitude}.  Such different configurations should be
imagined as simultaneously present, ``in \emph{superposition}''---a
simultaneity formally expressed as a weighted sum
$\sum d_i\mathsf{C}_i$ of classical configurations $C_i$, with complex
coefficients $d_i$.  Even when starting from a classical
configuration, in a QTM, there is not a single result, but a
superposition from which we can read off several numerical ``results''
with certain probabilities\footnote{ We cannot observe the entirety of
  a superposition without destroying it. But if we insist on observing
  it, then we will obtain one of the classical configurations
  $\mathsf{C_i}$, with probability $|d_i|^2$.}.  Moreover, QTMs never
have genuinely finite computations and one should therefore find a way
to define when and how a result can be read off.

In this paper we propose a notion of ``function computable by a
quantum Turing machine,'' as a mapping between superpositions of
initial classical configurations\footnote{More precisely, the domain
  of our functions will be the Hilbert space $\ell_1^2(\NN)$ of square
  summable, denumerable sequences of complex numbers, with unitary norm.} to
probability distributions of natural numbers, which are obtained (in
general) as a limit of an infinite QTM computation.

Before reaching this point, however, we must go back to the basics,
and look to the very notion of QTM. Because, if it is true that
configurations of a QTM are superpositions $\sum d_i\mathsf{C}_i$ of
classical configurations, quantum physics principles impose severe
constraints on the possible evolutions of such machines. First, in any
superposition $\sum d_i\mathsf{C}_i$, we must have $\sum |d_i|^2 =1$.
Second, there cannot be any finite computation---we may of course
name a state as ``final'' and imagine that we read the result of the
computation (whatever this means) when the QTM enters such a state,
but we must cope with the fact that the computation will go on
afterwards. Moreover, since any computation of a QTM must be
reversible~\cite{Ben73}, in the sense that the operator describing the evolution of
the QTM must be invertible, we cannot neither force the machine to
loop on its final configuration. On the other hand, because of
reversibility, even the initial configuration must have a predecessor.
Summing up, an immediate consequence of all the above considerations
is that every state must have at least one incoming and one outgoing
transition and that such transitions must accord to several
constraints forced by quantum physics principles.  In particular, transitions must enter the initial state---since a priori
it might be reached as the evolution from a preceding configuration
--- and exit the final state---allowing the machine configuration to
evolve even after it has reached the final result.

The reversibility physical constraints are technically expressed by the
requirement that the \emph{evolution operator} of a QTM be
\emph{unitary}.  If we now want to use a QTM to compute some result,
we are still faced with the problem of when (and how, but for the
moment let's postpone this) reading such a final result, given that
the computation evolves also from the final state, and that, without
further constraints, it might of course evolve in many, different possible ways.
Bernstein and Vazirani in their seminal paper~\cite{BerVa97} (from now
on we shall refer to this paper as ``\BeV'') first define (non
unitary) QTMs; select then the ``well-formed'' QTMs as the
unitary-operator ones; and define finally ``well-behaved'' QTMs as
those which produce a superposition in which all classical
configurations are simultaneously (and for the first time) in the
final state. What happens after this moment, it is not the concern of
\BeV.

Our goal is to relax the requirement of simultaneous ``termination'',
allowing meaningful computations to reach superpositions in which some
classical configurations are final (and give a ``result''), and some
are not. Those which are not final, should be allowed to continue the
computation, possibly reaching a final state later.  The ``final
result'' will then be defined as a limit of this process.
In order to assure that at every step of the
computation the superposition of the final configurations is a valid
approximation of the limit result, we must guarantee that once a final
state is entered, the ``result'' that we read off is not altered by
the further (necessary, by unitarity) evolution. To obtain this, we
restrict the transition function out of a final state, without
violating unitarity. We obtain this goal by ``marking'' the symbols on
the tape---once a final state is entered, the machine remains in
that final state, replaces the symbol $a$ under the head with the same
marked symbol $\symextra{a}$, and moves to the right. Dually (to
preserve unitarity), when the machine is in an initial state, it
may rollback to another initial configuration in which the symbol $a$
to the left of the head is replaced by the marked symbol
$\symextra{a}$ and the head is on it; that is, looking at the
corresponding forward transition, when in the initial state, if the
machine head reads a marked $\symextra{a}$, then the machine remains
in that initial state, replaces the symbol with the unmarked $a$, and
moves to the right. The role of the extra marked symbols is restricted
to these ``final'' and  ``initial'' evolutions. In particular, there are no transitions out of a final
state when reading a marked symbol, or which enter an initial state
writing a marked symbol, and no transitions at all---involving marked symbols---entering or exiting
a state that is neither final nor initial. That these machines
correctly induce a unitary operator is the content of
Theorem~\ref{thm:time-evol-unitary}. In the following, marked symbols
will be called the \emph{extra symbols} and we will generalise this
discussion from a final (initial) state to a set of \emph{target}
(\emph{source}) states.

After the definition of QTMs and of the corresponding functions, we
will discuss their expressive power, comparing them to the QTMs
studied in the literature.  The QTMs of \BeV form a robust class, but
meaningful computations are defined only for classical inputs (a
single natural number with amplitude 1). Moreover, their QTMs
``terminate'' synchronously---either all paths in superpositions
enter a final state at the same time, or all of them diverge. As a
consequence, there is no chance to study---and give meaning---to
infinite computations.  More important, the class of ``sensible'' QTMs
(in \BeV's terminology: the well-formed, well-behaved, normal form
QTMs) is not recursively enumerable, since the constraint of
``simultaneous termination'' is undecidable. 

In Deutsch's original
proposal~\cite{Deu85}, any quantum TM has an explicit termination bit
which the machine sets when entering a final configuration. While it
is guaranteed that final probabilities are preserved, the observation
protocol requires that one checks termination at every step,
since the machine may well leave the final state (and change the tape).
Deutsch's machines could in principle be used to
define meaningful infinite computations, but we know of no such an
attempt.

In our analysis: (i) there is no termination bit: a quantum
configuration is a genuine superposition of classical configurations;
(ii) any computation path (that is, any element of the superposition)
evolves independently from the others: any path may terminate at its
own time, or may diverge; (iii) infinite computations are meaningful;
(iv) we may observe the result of the computation in a way akin to
Deutsch's one, but with the guarantee that once a final state is entered,
the machine will not change the corresponding ``result'' during a subsequent
computation;
(v) the class of QTMs is recursively enumerable, thus opening the
door to a quantum computability theory which may follow some of the
classical developments.

\section{Quantum Turing Machines}

In this section we define quantum Turing machines. We assume the
reader be familiar with classical Turing machines (in case,
see~\cite{Davis58}).

In all the paper we shall assume that the \emph{tape alphabet} $\Sigma$ is
finite and contains at least the symbols $1$ and $\Box$: $1$ will be used
to code natural numbers in unary notation; $\Box$ will be the blank symbol.
For any $n\in\NN$, $\nstring{n}$ will denote the string $1^{n+1}$. The
greek letters $\alpha, \beta$, eventually indexed, will denote strings
in $\Sigma^*$; $\lambda$ will denote the empty string; 
$\alpha\beta$ will denote the concatenation of $\alpha$ and $\beta$.

\subsection{QTM} 

Given a tape alphabet $\Sigma$, we associate an \emph{extra tape
  symbol} $\symextra{a}$ to any $a\in\Sigma$ (including the blank
$\Box$); $\abextra{\Sigma} =\{\symextra{a} \mid a \in \Sigma\}$ is the
\emph{extra tape alphabet}.  Finally,
$\aball{\Sigma} = \Sigma \cup \abextra{\Sigma}$.  As we discussed in
the introduction, extra tape symbols will only appear in computations
involving initial or final states.

As in \BeV, we assume that a machine move always implies a displacement
of the machine head, to the left ($L$) or the right ($R$) on the tape;
$\DD=\{L,R\}$ is the set of the displacements.

For $I$ denumerable, $\ell^2(I)$ is the Hilbert space of square summable,
$I$-indexed sequences of complex numbers
  $$\left\{\phi:I \rightarrow\CC \mid \sum_{C\in I }|\phi(C)|^2 <
    \infty\right\}, $$ 
equipped with an \emph{inner product}
  $\kinprod{.}{.}$ and the
  \emph{euclidean norm} $\norm{\phi}=\sqrt{\kinprod{\phi}{\phi}}$. See
  Appendix~\ref{sec:HS} for more details.

\begin{definition}[Quantum Turing machines]\label{def:QTM}
  Given a finite set of states $\Q$ and an alphabet $\Sigma$, a Quantum
  Turing Machine (QTM) is a tuple
  $$M=\langle \Sigma, \Q, \Q_s, \Q_t, \delta, q_i, q_f\rangle$$
  where
  \begin{itemize}
  \item $\Q_s\subseteq \Q$ is the set of \emph{source states} of $M$, and
    $q_i\in \Q_s$ is a distinguished source state named the \emph{initial
      state} of $M$;
  \item $\Q_t\subseteq \Q$ is the set of \emph{target states} of $M$, and
     $q_f \in \Q_t$ is a distinguished target state named the
    \emph{final state} of $M$;
  \item if we define $\Q_0 = \Q \setminus (\Q_s\cup \Q_t)$, then
    $\delta=\delta_0\cup\delta_s\cup\delta_t$ is the \emph{quantum
      transition function} of $M$, defined as the union of the
    following three functions with disjoint domains:
    \begin{align*}
      \delta_0 &: ((\Q_0\cup \Q_s)\times\Sigma) \to \ell^2((\Q_0\cup \Q_t) \times \Sigma \times \DD) \\
      \delta_s &: (\Q_s\times\abextra{\Sigma}) \to \ell^2(\Q_s \times \Sigma \times \DD) \\
      \delta_t &: (\Q_t\times\Sigma) \to \ell^2(\Q_t \times \abextra{\Sigma} \times \DD)
    \end{align*}
  \item the \emph{source transition function} $\delta_s$ is defined  by
      $$\delta_s(q_s,\symextra{a})(q_s',b,d) = 
      \begin{cases}
      1 & \mbox{if $(q_s',b,d) = (q_s,a,R)$} \\
      0 & \mbox{otherwise}
    \end{cases}$$ for every $(q_s,\symextra{a})\in \Q_s\times\abextra{\Sigma}$ and every
    $(q_s',b,d)\in \Q_s\times\Sigma\times\DD$
  \item the \emph{target transition function} $\delta_t$ is defined by
    $$\delta_t(q_t,a)(q_t',\symextra{b},d) =
    \begin{cases}
      1 & \mbox{if $(q_t',\symextra{b},d) = (q_t,\symextra{a},R)$} \\
      0 & \mbox{otherwise}
    \end{cases}$$
    for every $(q_t,a)\in \Q\times\Sigma$ and every
    $(q_t',\symextra{b},d)\in \Q_t\times\abextra{\Sigma}\times\DD$
  \item the \emph{main transition function} $\delta_0$ 
    satisfies the following \emph{local unitary conditions} 
    \begin{enumerate}
    \item for any $(q,a)\in (\Q_0\cup \Q_s) \times \Sigma$
      $$\sum_{(p,b,d)\in (\Q_0\cup \Q_t)\times\Sigma\times\DD}
      |\delta_0(q,a)(p,b,d)|^2=1$$ 
    \item for any $(q,a), (q',a')\in (\Q_0\cup \Q_s)\times\Sigma$ with $(q,a)\neq(q',a')$\\
      $${\sum_{(p,b,d)\in (\Q_0\cup \Q_t)\times\Sigma\times\DD} \delta_0(q',a')(p,b,d)^*\delta_0(q,a)(p,b,d)=0}$$
    \item for any  $(q,a,b), (q',a',b')\in (\Q_0\cup \Q_s)\times\Sigma^2$\\
      $${\sum_{p\in (\Q_0\cup \Q_t)} \delta_0(q',a')(p,b',L)^*\delta_0(q,a)(p,b,R)=0}$$
    \end{enumerate}
  \end{itemize}
\end{definition}

We remark that the domains and codomains of the three transition
functions $\delta_0$, $\delta_s$ and $\delta_t$ are  disjoint. Indeed,
if we define
\begin{align*}
  \varS_0 &=  (\Q_0 \cup \Q_s) \times \Sigma 
  & \varS_s &= \Q_s \times \abextra{\Sigma}
  & \varS_t &=  \Q_t \times \Sigma \\
  \T_0 &=  (\Q_0 \cup \Q_t) \times \Sigma 
  & \T_s &= \Q_s \times \Sigma
  & \T_t &=  \Q_t \times \abextra{\Sigma}
\end{align*}
we can then write, in a compact way
$$\delta_x =\varS_x \to  \ell^2(\T_x \times \DD)
\qquad\qquad 
\mbox{for $x\in\{0,s,t\}$}$$
and see that,  for $x,y\in\{0,s,t\}$,
$$ \varS_x \cap \varS_y = \emptyset
\qquad\qquad\qquad
 \T_x \cap \T_y = \emptyset$$
when $x\neq y$.


\subsection{Configurations}
A configuration of a (classical) TM is a triple formed  by
the content of the tape, the state of the machine and the position of
the tape head. As usual, we assume that only a finite portion of the
tape contains non-blank symbols. We may therefore represent such a configuration
as a triple
$\langle\alpha,q,\beta\rangle \in \aball{\Sigma}^*\times \Q\times
\aball{\Sigma}^*$ where:
\begin{enumerate} 
\item $q$ is the current state; 
\item $\beta$ is the right content of the tape and its first symbol is
  the one under the head (we say also: in the current cell).  That is,
  $\beta=u\beta'$, where the \emph{current symbol} $u$ is the content
  of the \emph{current cell} (i.e.\ the one pointed by the tape head)
  and $\beta'$ is the longest string on the tape ending with a symbol
  different from $\Box$ and whose first symbol (if any) is written in
  the cell immediately to the right of the current cell; by
  convention, when the current cell and all the right content of the
  tape is empty, we shall also write
  $\langle \alpha, q, \lambda\rangle$ instead of
  $\langle \alpha, q, \Box \rangle$; 
\item $\alpha$ is the left content of the tape. That is, it is either
  the empty string $\lambda$, or it is the longest string 
  on the tape starting with a symbol different from $\Box$, and whose
  last symbol is written in the cell immediately to the left of the
  current cell.
\end{enumerate}

According to this definition, in a configuration
$\langle \alpha, q, \beta\rangle$ the string $\alpha$ does not start
with a blank $\Box$, and $\beta$ does not end with a blank.  In the
following it will be useful to manipulate configurations which are
extended with blanks to the right (of the right content) or to the
left (of the left content). For this, we equate configurations up to
the three equivalence relations induced by the following equations
\begin{gather*}
  \alpha \simeq_l \Box\alpha
  \qquad\qquad\beta \simeq_r \beta\Box\\
  \langle \alpha, q, \beta \rangle \simeq 
  \langle \alpha', q, \beta' \rangle \qquad\qquad 
  \mbox{when $\alpha\simeq_l\alpha'$ 
    and $\beta\simeq_r\beta'$}
\end{gather*}

We now turn to QTMs. Observe first that while cells containing the
blank symbol $\Box$ are considered empty, cells containing the extra
symbol $\symextra{\Box}$ are not empty and should not be ignored on
the left/right side of the tape.  Moreover, in view of the particular
evolution required for source/target states, and the special role of
extra symbols, some of the triples $\langle \alpha, q, \beta\rangle$
cannot occur as actual configurations in a computation of a QTM. We
limit our QTMs to configurations where extra symbols appear in
$\alpha\beta$ only when the current state is a source (target) state
and, moreover all the extra symbols are immediately to the right
(left) of the tape head.

\begin{definition}[configurations]~\label{Def:config}
  Let $M=(\Sigma, \Q, \Q_s, \Q_t,\delta, q_0, q_f)$ be a QTM. A \emph{configuration} of $M$ is a triple
  $\langle\alpha,q,\beta\rangle \in 
  \aball{\Sigma}^*\times \Q\times\aball{\Sigma}^*$ s.t.
  \begin{enumerate}
  \item if $q \not\in \Q_s \cup \Q_t$, then $\alpha\beta\in\Sigma^*$,
    that is, the tape does not contain extra symbols;
  \item if $q \in  \Q_s$, then $\alpha\in\Sigma^*$ and
    $\beta\in\abextra{\Sigma}^*\Sigma^*$;
  \item dually, if $q \in \Q_t$, then $\beta\in\Sigma^*$ and
    $\alpha\in\Sigma^*\abextra{\Sigma}^*$.
  \end{enumerate}
  The set of configurations of $M$ is denoted by $\GConf_M$. A
  configuration of $M$ is a \emph{source}/\emph{target configuration}
  when the corresponding state is a source/target state, moreover, it
  is a \emph{final}/\emph{initial configuration} when the current
  state is final/initial. By $\FConf_M$ we shall denote the set of the
  final configurations of $M$.
\end{definition}

In the following, the index $M$ in $\GConf_M$ and in the other names
indexed by the machine might drop when clear from the context.


\subsection{Quantum configurations}
As already discussed in the introduction, the evolution of a QTM is
described by superpositions of configurations (as defined in
Definition~\ref{Def:config}).  If
$\B \subseteq \CConf_{M}$ is a set of configurations,
superpositions are elements of the Hilbert space $\ell^2(\B)$
(see, e.g.,~\cite{Con90,RomanBook}).  Quantum configurations of a QTM
$M$ are those elements of $\ell^2(\GConf_M)$ with unitary norm. We
remark that, since there is no bound on the size of the tape in a
configuration, the Hilbert space of the configurations must be
infinite dimensional.

\begin{definition}[quantum configurations]
  Let $M$ be a QTM.  The elements of the set
  $\qConf_M = \{\phi\in \ell^2(\GConf_M) \mid \sum_{C\in
    \GConf_M}|\phi(C)|^2 = 1\}$
  are the \emph{q-configurations} (quantum configurations) of $M$.
\end{definition} 

We shall use Dirac notation (see Appendix~\ref{sec:HS}) for the
elements $\phi,\psi$ of $\qConf_M$, writing them $\ket{\phi}, \ket{\psi}$.

\begin{definition}
  For any set of configurations
  $\B\subseteq \CConf_{M}$ and any $C\in\B$
  let $\ket{C}:\B \to \CC$ be the function
  \[
  \ket{C}(D)= \left\{
    \begin{array}{ll}
      1\;\;&\mbox{if}\;C=D\\
      0\;\;&\mbox{if}\;C\neq D.
    \end{array}
  \right. 
  \]
  The set $\cb{\B}$ of all such functions is a Hilbert basis
  for $\ell^2(\B)$ (see, e.g., \cite{NiOz02}). In particular,
  following the literature on quantum computing, $\cb{\GConf_M}$ is
  called the \emph{computational basis} of $\ell^2(\GConf)$.  Each
  element of the computational basis is called \emph{base
    q-configuration}.
\end{definition}

With a little abuse of language we shall write $\ket{C}\in\ket{\phi}$ when
$\phi(C)\neq 0$.
 The \emph{span of} $\cb{\B}$, denoted by
$\SPAN{\cb{\B}}$, is the set of the finite linear
combinations with complex coefficients of elements of
$\cb{\B}$; $\SPAN{\B}$ is not a Hilbert space,
although $\ell^2(\B)$ is the (unique, up to isomorphism)
\emph{completion} of $\SPAN{\B}$.  Moreover, each unitary
operator $U$ on $\SPAN{\B}$ has a unique 
unitary extension on $\ell^2(\B)$~\cite{BerVa97}.

For a list of the main definitions, properties and notations on
Hilbert spaces with denumerable basis, see Appendix~\ref{sec:HS}. In
particular, subsection~\ref{ssec:dirac-notation} presents a synoptic
table of the so-called Dirac notation that we shall use in the paper.

\subsection{Time evolution operator}

For any QTM $M$ with alphabet $\Sigma$, and  space of states $\Q$, the step
function
$\gamma_{M}: \CConf_{M} \times \Q \times \aball{\Sigma}
\times \DD \to \CConf_{M}$
is the map that, given a configuration of the tape and a triple
$(p, b, d)$ describing a ``classical'' step of a Turing
machine, replaces the symbol in the current cell with the symbol $p$,
moves the head on the $d$ direction, and sets the machine into the new
state $p$. Formally:
$$\gamma_{M}(\langle \alpha w, q, u \beta \rangle, p,v,d) \simeq
\begin{cases}
  \langle \alpha wv, p, \beta \rangle & \mbox{when $d = R$} \\
  \langle \alpha, p, wv\beta \rangle & 
  \mbox{when $d = L$.}
\end{cases}
$$
The evolution of a QTM $M=\langle \Q, \Sigma, \delta, q_0,
q_f\rangle$ can then be defined as a map on 
q-configurations. Following the three-parts definition of the transition function,
let
$$\SConf{x}_M = \{\langle\alpha, q, u \beta\rangle \in \GConf_M 
\mid (q,u) \in \varS_x \}
\qquad\qquad
\mbox{with $x\in\{0,s,t\}.$}
$$
It is easily seen that $\SConf{0}_M$, $\SConf{s}_M$, and $\SConf{t}_M$ are a
partition of $\GConf_M$ (they are pairwise disjoint, because $\varS_0$, $\varS_s$, and $\varS_t$ are pairwise disjoint). 
Therefore,
given
$$C = \langle \alpha, q, u\beta\rangle \in \GConf_M \qquad\qquad 
C_{p,v,d}=\gamma_{M}(C,p,v,d)$$
we can define
$$W_M(\ket{C}) = \sum_{(p,v,d)\in \T_x\times\DD}
\delta_x(q,u)(p,v,d)\,\ket{C_{p,v,d}}
\qquad \mbox{when $C \in \SConf{x}_M$.}
$$

\begin{proposition}\label{prop:WM-welldef}
  $W_M(\ket{C}) \in \SPAN{\cb{\GConf}}$, for any
  $C\in\GConf$. Then, $W_M$ naturally extends to an automorphism on
  the linear space of 
  q-configurations
  $$W_M:\SPAN{\cb{\GConf}}\to\SPAN{\cb{\GConf}}.$$
\end{proposition}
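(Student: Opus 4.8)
I need to show two things: (i) for every base q-configuration $\ket{C}$, the vector $W_M(\ket{C})$ lies in $\SPAN{\cb{\GConf}}$, i.e. it is a \emph{finite} linear combination of base q-configurations; and (ii) the resulting linear map $W_M : \SPAN{\cb{\GConf}} \to \SPAN{\cb{\GConf}}$ is well-defined and is an automorphism (a linear bijection). The key point behind finiteness: although $\delta_x(q,u)$ is a priori an arbitrary element of $\ell^2(\T_x\times\DD)$, the "classical" Turing-machine reading of quantum transition functions (as in \BeV) always assumes that $\delta_0(q,a)$ has finite support — in fact the local unitarity conditions (1)-(3) of Definition~\ref{def:QTM} would be vacuous-to-verify / the machine would not be finitely describable otherwise. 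So I would first record (or assume, citing the standing convention inherited from \BeV) that each $\delta_x(q,u)$ has finite support; then $W_M(\ket{C})$ is a finite sum of scalars times vectors $\ket{C_{p,v,d}}$, and each $C_{p,v,d}=\gamma_M(C,p,v,d)$ is again a genuine configuration of $M$ — this is the real content to check.

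Let me spell out step (i) more carefully, since it is where the subtlety lies. Fix $C=\langle\alpha,q,u\beta\rangle\in\SConf{x}_M$. For each $(p,v,d)$ in the (finite) support of $\delta_x(q,u)$, I must verify $C_{p,v,d}:=\gamma_M(C,p,v,d)$ is in $\GConf_M$, i.e. satisfies the three clauses of Definition~\ref{Def:config} governing where extra symbols may appear. This splits into the three cases $x\in\{0,s,t\}$:
\begin{itemize}
\item If $x=0$, then $(q,u)\in\varS_0=(\Q_0\cup\Q_s)\times\Sigma$, the codomain of $\delta_0$ forces $(p,v,d)\in(\Q_0\cup\Q_t)\times\Sigma\times\DD$, so we overwrite the current cell with a non-extra symbol $v\in\Sigma$ and move into a state in $\Q_0\cup\Q_t$. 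Since $C$ itself has no extra symbols adjacent to the head in an illegal position — and here one uses that $C\in\SConf{0}_M$ means the current symbol $u$ is non-extra, so by Definition~\ref{Def:config} the only extra symbols in $\alpha\beta$ would be forced by $q$ being source/target, but $\delta_0$'s domain allows $q\in\Q_s$... — I need to check the two sub-cases $q\in\Q_0$ (no extra symbols anywhere, stays that way) and $q\in\Q_s$ (extra symbols possibly just to the right of the head, but the head moves \emph{right} only if... actually $d$ can be $L$ or $R$). The cleanest argument: when $q\in\Q_s$ and $(q,u)\in\varS_0$ with $u\in\Sigma$ non-extra, Definition~\ref{Def:config}(2) says $\beta\in\abextra{\Sigma}^*\Sigma^*$ but the current symbol $u$ is the first symbol of $u\beta$ and is non-extra, forcing $\beta\in\Sigma^*$; so in fact $C$ has \emph{no} extra symbols, and neither does $C_{p,v,d}$, which has $p\in\Q_0\cup\Q_t$ — landing in clause (1) or (3) with $\alpha'\in\Sigma^*$, fine.
\item If $x=s$, then $u=\symextra{a}$ and $\delta_s$ forces $(p,v,d)=(q,a,R)$: the machine stays in the same source state $q\in\Q_s$, replaces $\symextra{a}$ by $a\in\Sigma$, moves right. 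One checks the new configuration still has the form required by clause (2): all remaining extra symbols in $\beta$ (which were to the right of the head) are now still immediately to the right of the new head position, and $\alpha a\in\Sigma^*$ since $\alpha\in\Sigma^*$ and $a\in\Sigma$.
\item If $x=t$, dual to the previous case using $\delta_t$ and clause (3).
\end{itemize}
In every case $C_{p,v,d}\in\GConf_M$, so $W_M(\ket{C})\in\SPAN{\cb{\GConf}}$. Linearity then extends $W_M$ uniquely to all of $\SPAN{\cb{\GConf}}$.

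**The automorphism claim.** To see $W_M$ is a linear \emph{automorphism} of $\SPAN{\cb{\GConf}}$ I must exhibit it as a bijection. Injectivity plus surjectivity will follow from the local unitary conditions together with the rigid (deterministic, reversible) shape of $\delta_s,\delta_t$. Concretely, I would build the candidate inverse $W_M^{-1}$ by running the machine \emph{backwards}: for $x=s$ the backward step reads $a$ in a source state and writes $\symextra{a}$ moving left; for $x=t$ it reads $\symextra{b}$ in a target state and writes $b$ moving left; for the main part one uses that conditions (1)-(2) say the columns $\{\delta_0(q,a)\}_{(q,a)\in\varS_0}$ are orthonormal in $\ell^2(\T_0\times\DD)$ and condition (3) rules out the "overlap" collisions between $R$-moves and $L$-moves that would otherwise let two distinct predecessor configurations map to overlapping successors — this is exactly the classical \BeV\ lemma that the three conditions make the partially-defined "time evolution" a well-defined partial isometry with dense range, hence (after checking the source/target blocks glue correctly at the $\Q_s,\Q_t$ boundaries, where $\varS_0$ includes $\Q_s\times\Sigma$ and $\T_0$ includes $\Q_t\times\Sigma$) an invertible map on the span. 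I would verify $W_M^{-1}W_M=W_MW_M^{-1}=\mathrm{id}$ on basis vectors $\ket{C}$ by the above case analysis, the cross-terms vanishing precisely because of conditions (2) and (3).

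**Main obstacle.** The routine part is the configuration-bookkeeping in step (i); the genuinely delicate part is step (ii), specifically checking that the newly-introduced source/target machinery ($\Q_s,\Q_t$ with $q_i\ne q_f$, the deterministic $\delta_s,\delta_t$, and the fact that $\delta_0$ has $\Q_s\times\Sigma$ in its domain and $\Q_t\times\Sigma$ in its codomain) does not break the classical \BeV\ argument at the "seams" — i.e. that no backward collision can occur between a main transition landing in a target state and a $\delta_t$-transition, nor between a $\delta_s$-transition and a main transition leaving a source state. This amounts to observing that the images $\T_0,\T_s,\T_t$ of the three transition blocks are pairwise disjoint (as the excerpt already notes), so the three pieces of $W_M$ have orthogonal ranges and can be inverted independently; the heart of the matter then reduces cleanly to the \BeV\ unitarity lemma for $\delta_0$ alone, which conditions (1)-(3) are designed to supply. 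I would therefore structure the proof as: (a) finiteness and well-definedness via the case analysis above; (b) orthogonality of the three range-blocks $\T_x\times\DD$; (c) invoke/adapt the \BeV\ characterization for the $\delta_0$-block and note the $\delta_s,\delta_t$-blocks are trivially invertible by their explicit deterministic form; (d) conclude bijectivity on $\SPAN{\cb{\GConf}}$.
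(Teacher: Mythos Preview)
Your case analysis in part (i) is essentially the paper's proof: it splits on $x\in\{0,s,t\}$ and verifies in each case that $C_{p,v,d}\in\GConf_M$, then extends by linearity. One unnecessary detour: your concern about finite support of $\delta_x(q,u)$ is moot, since $\T_x\times\DD$ is a \emph{finite} set ($\Q$ and $\Sigma$ are finite by hypothesis), so every element of $\ell^2(\T_x\times\DD)$ has finite support automatically---no extra convention is needed.

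The substantive divergence is part (ii). The paper's proof of this proposition does \emph{not} establish bijectivity: after the case analysis it simply says ``$W_M$ uniquely extends to an automorphism on $\SPAN{\cb{\GConf}}$ by linearity'' and stops. The word ``automorphism'' is used loosely---only ``endomorphism'' is actually justified at this point. Invertibility is deferred to Theorem~\ref{thm:time-evol-unitary}, whose proof (Appendix~\ref{sec:proof-local-cond}) builds the adjoint $W_M^*$ via the reverse step function $\rgamma$, shows $U_M^*U_M=1$ from the local unitary conditions, and then argues $U_MU_M^*=1$ by a separate counting argument. Your sketch---construct a backward map, use disjointness of $\T_0,\T_s,\T_t$ to decouple the three blocks, handle $\delta_s,\delta_t$ by their explicit deterministic form, and invoke the \BeV\ conditions for the $\delta_0$-block---is precisely the shape of that later proof. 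So your plan is not wrong, but you are packing into Proposition~\ref{prop:WM-welldef} the real work that the paper postpones to the unitarity theorem; for the present proposition the paper is content with well-definedness plus linear extension.
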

\begin{proof}
  Let $C$ and $C_{p,v,d}$ be as in the definition of $W_M$.
  \begin{enumerate}
  \item Let $C\in\SConf{0}_M$. If $(p,v)\in \T_0$, then
    $C_{p,v,d}\in\SConf{t}_M$ when $p\in \Q_t$, and
    $C_{p,v,d}\in\SConf{0}_M$ otherwise. Thus,
    $C_{p,v,d}\in\GConf_M$, for every $(p,v,d)\in
    \T_0\times\DD$.

  \item Let $C\in\SConf{s}_M$, that is, $u=\symextra{a}$ for some
    $a\in\Sigma$ and $C
    =\langle\alpha,q,\symextra{a}\symextra{\gamma}\beta\rangle$, where
    $\alpha,\beta,\gamma\in\Sigma^*$. If $(p,v)\in \T_s$, then
    $\delta(q,\symextra{a})(p,v,d)\neq 0$ iff $v=a$ and $d=R$, namely,
    $W_M(\ket{C})=\ket{C_{p,a,R}}$ with $C_{p,a,R}=\langle\alpha
    a,p,\symextra{\gamma}\beta\rangle$. Then, if the symbol
    $\symextra{a}$ replaced by $a$ was the last extra symbol on the
    tape, that is $\gamma=\lambda$, then $C_{p,a,R}\in\SConf{0}_M$,
    otherwise $C_{p,a,R}\in\SConf{s}_M$. In any case, $W_M(\ket{C})\in
    \SPAN{\cb{\GConf}}$.

  \item Let $C\in\SConf{t}_M$, that is, $u=a$ for some $a\in\Sigma$ and
    $C =\langle\alpha\symextra{\gamma},q,\beta\rangle$, where
    $\alpha,\beta,\gamma\in\Sigma^*$. If $(p,v)\in \T_t$, then
    $\delta(q,a)(p,v,d)\neq 0$ iff $v=\symextra{a}$ and $d=R$; namely,
    $W_M(\ket{C})=\ket{C_{p,\symextra{a},R}}$ with
    $C_{p,\symextra{a},R}=\langle\alpha \symextra{\gamma}\symextra{a},
    p, \beta\rangle$. Then, $C_{p,a,R}\in\SConf{t}_M$ and
    $W_M(\ket{C})\in \SPAN{\cb{\GConf}}$.
  \end{enumerate}
  
  Summing up, $W_M(\ket{C})\in \SPAN{\cb{\GConf}}$ in any case. Thus,
  $W_M$ uniquely extends to an automorphism on $\SPAN{\cb{\GConf}}$ by
  linearity: that is 
  $$W_M(\sum_{C\in\GConf_M} k_C \ket{C}) = \sum_{C\in\GConf_M} k_C
  W_M(\ket{C}).$$
\end{proof}

By completion, $W_M$  extends in a unique way to an operator
on the Hilbert space of q-configurations.

\begin{definition}[time evolution operator]\label{Def:TimeEvolutionOp}
  The \emph{time evolution operator} of $M$ is the unique extension
  $$U_M:\ell^2(\GConf_M)\to \ell^2(\GConf_M)$$
  of the linear operator
  $W_M:\SPAN{\cb{\GConf}}\to\SPAN{\cb{\GConf}}$.
\end{definition}

\begin{theorem}\label{thm:time-evol-unitary}
  The time evolution operator of a QTM is unitary.
\end{theorem}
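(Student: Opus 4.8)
The plan is to show that $U_M$ is unitary by verifying that $W_M$ preserves the inner product on $\SPAN{\cb{\GConf}}$ and is onto (up to dense image), and then invoking the standard fact — cited in the excerpt — that a norm-preserving surjective linear operator on $\SPAN{\cb{\B}}$ extends uniquely to a unitary operator on the completion $\ell^2(\B)$. So the real content is an isometry check on the computational basis: one must show that for all configurations $C, C' \in \GConf_M$,
\[
\kinprod{W_M C}{W_M C'} = \begin{cases} 1 & \text{if } C = C' \\ 0 & \text{if } C \neq C'. \end{cases}
\]
By linearity and the fact that $\cb{\GConf}$ is an orthonormal basis, establishing this on base q-configurations suffices.

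The natural way to organize the computation is along the partition $\GConf_M = \SConf{0}_M \sqcup \SConf{s}_M \sqcup \SConf{t}_M$, exactly as in the proof of Proposition~\ref{prop:WM-welldef}. First I would handle the ``diagonal'' case $C, C'$ in the same block. For the source block $\SConf{s}_M$ and the target block $\SConf{t}_M$ this is almost immediate: there $W_M$ sends each base q-configuration to a single base q-configuration (the maps $\delta_s,\delta_t$ being deterministic, $0$/$1$-valued), and the assignment is visibly injective because the marked/unmarked rewriting of the symbol adjacent to the head can be undone; so $W_M$ restricted to these blocks is just a bijection between basis vectors, hence an isometry. For the main block $\SConf{0}_M$, orthonormality of the images is precisely what the \emph{local unitary conditions} (1) and (2) of Definition~\ref{def:QTM} assert — condition (1) gives $\norm{W_M C}=1$, condition (2) gives $\kinprod{W_M C}{W_M C'}=0$ when $C\neq C'$ \emph{provided} the two configurations differ only in the triple $(q,u)$ under the head (same surrounding tape); the case where they differ elsewhere on the tape is trivial because then the target configurations $C_{p,v,d}$ and $C'_{p',v',d'}$ can never coincide. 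Condition (3) is the one needed to rule out the subtle collision where $C$ makes a right move and $C'$ makes a left move yet they land on the same configuration — this is the classical ``$R$ vs.\ $L$ overlap'' that \BeV\ identified, and condition (3) is exactly tailored to kill the corresponding cross terms.

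The step I expect to be the main obstacle is the \emph{cross-block} orthogonality: showing $\kinprod{W_M C}{W_M C'}=0$ whenever $C$ and $C'$ lie in different blocks among $\SConf{0}_M, \SConf{s}_M, \SConf{t}_M$. Here the local unitary conditions say nothing directly, so one argues structurally: by the analysis in Proposition~\ref{prop:WM-welldef}, $W_M$ maps $\SConf{s}_M$ into $\SConf{0}_M\cup\SConf{s}_M$ but the image configurations in $\SConf{0}_M$ carry no extra symbols to the right of the head having just consumed the last one, whereas images of $\SConf{0}_M$ configurations that land in $\SConf{0}_M$ come from tapes with no extra symbols at all, and images of $\SConf{0}_M$ into $\SConf{t}_M$ have their head at the far right of a block of extra symbols with none to the left — one checks these image sets are pairwise disjoint as subsets of $\GConf_M$, so the corresponding Dirac vectors are orthogonal and all cross terms vanish. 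The key lever is that the disjointness of the domains $\varS_0,\varS_s,\varS_t$ and of the codomains $\T_0,\T_s,\T_t$ (remarked just after Definition~\ref{def:QTM}) propagates, via $\gamma_M$, to a disjointness of the reachable configuration sets; once that bookkeeping is in place, surjectivity of $W_M$ onto $\SPAN{\cb{\GConf}}$ follows from the same case analysis run backwards (every base q-configuration is $W_M$ of some explicit base q-configuration, using invertibility of $\delta_0$ on its block guaranteed by conditions (1)–(3), and the literal invertibility of the rewriting in the $s,t$ blocks), and the unitary extension to $\ell^2(\GConf_M)$ is then immediate.
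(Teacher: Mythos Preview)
Your isometry argument is correct and matches the paper's: conditions (1)--(2) give orthonormality of the $W_M$-images within the $\SConf{0}_M$ block, condition (3) kills the $R$/$L$ collision cross terms, the $s,t$ blocks are handled by the deterministic form of $\delta_s,\delta_t$, and cross-block orthogonality follows from the disjointness of $\T_0,\T_s,\T_t$ (the paper packages this via a dual partition $\TConf{0},\TConf{s},\TConf{t}$ of $\GConf_M$, but the content is the same).

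The gap is in surjectivity. Your claim that ``every base q-configuration is $W_M$ of some explicit base q-configuration'' is false on the main block: when $\delta_0$ is genuinely quantum, $W_M\ket{C}$ is a superposition and the preimage of a basis vector is not a basis vector. More importantly, ``invertibility of $\delta_0$ on its block guaranteed by conditions (1)--(3)'' is unjustified: those conditions assert only that the columns of $\delta_0$, viewed as a matrix from $\varS_0$ to $\T_0\times\DD$, are orthonormal, and since this matrix is not square ($|\varS_0|\neq 2|\T_0|$ in general) orthonormal columns do not give surjectivity. This is precisely the infinite-dimensional obstacle the paper flags: an isometry of $\ell^2(\GConf_M)$ need not be onto. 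The paper's proof supplies the missing idea. Once $U_M^*U_M=1$, one knows $P=U_MU_M^*$ is an orthogonal projection, so $\tek{C}P\ket{C}\le 1$ for every $C$. For $C\in\TConf{0}$ the paper forms the finite family $\B_C$ of configurations obtained from $C$ by varying the state and the two symbols adjacent to the head over $(\Q_0\cup\Q_t)\times\Sigma^2$, and computes directly---using condition (3) to annihilate the $d\neq d'$ terms and condition (1) to normalise---that $\sum_{C'\in\B_C}\tek{C'}P\ket{C'}=|\B_C|$. Since each of the $|\B_C|$ summands is at most $1$, each equals $1$, and hence $P=1$. This averaging trick, or something equivalent, is what your argument is missing.
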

\begin{proof}
  The proof is a variant of the one given by~\BeV and by Nishimura and
  Ozawa~\cite{NishOza10}.  In particular, we prove first that $U_M$ is
  an isometry of $\ell^2(\GConf)$, and then that, in this particular
  case, this implies that $U_M$ is unitary (which, in general, holds
  for finite dimensional Hilbert spaces only---in the infinite
  dimensional case an isometry might not be surjective).  The full
  details of the proof are given in
  Appendix~\ref{sec:proof-local-cond}.
\end{proof}

In Appendix~\ref{sec:proof-local-cond} we shall not only show
that the unitary local conditions imply the unitarity of the time
evolution operator (i.e., Theorem~\ref{thm:time-evol-unitary}), but
that they are also necessary.  We remark that, this is not just a
simple adaptation to our case of the already known proofs for \BeV
QTM; in fact, we also simplify the argument that allows to show that
the isometry of $U_M$ implies its unitarity.

Since the time evolution operator of a QTM is unitary, it preserves
the norm of its argument, hence it maps q-configurations into
q-configurations.

\begin{proposition}
  Let $M$ be a QTM. If $\ket{\phi}\in\qConf$, then
  $U_M\ket{\phi}\in\qConf$.
\end{proposition}

\begin{definition}[initial and final configurations]
  A q-configuration $\ket{\phi}=\sum_i e_i\ket{C_i}$ is \emph{initial}
  if all the $C_i$ are initial, and it is \emph{final} if all the
  $C_i$ are final.  Moreover, $\qFConf$ is the set of final
  q-configurations, and we shall denote by $\ket{\nstring{n}}$ the
  initial configuration $\langle\lambda,q_0,\nstring{n}\rangle$.
\end{definition}

\begin{definition}[computations]
  Let $M$ be a QTM and let $U_M$ be its time evolution
  operator.  For an initial q-configuration $\ket{\phi}$, the
  \emph{computation} of $M$ on $\ket{\phi}$ is the denumerable sequence
  $\{\ket{\phi_i}\}_{i\in \NN}$ s.t. 
  \begin{enumerate}
  \item $\ket{\phi_0}=\ket{\phi}$;
  \item $\ket{\phi_{i}}=U_M^i\ket{\phi}$.
  \end{enumerate}
\end{definition}

Clearly, any computation of a QTM $M$ is univocally determined by its
initial q-configuration. The computation of $M$ on initial
q-configuration $\ket{\phi}$ will be denoted by $K_{\ket{\phi}}^M$.

\begin{lemma}\label{lem:pers-final-conf}
  Let $C\in\FConf$ be a final configuration without extra symbols; in
  particular, let $C=\langle \beta, q_f,\alpha\rangle$ with
  $\alpha\beta\in\Sigma^*$ and $\alpha=a_1\ldots a_l$. For every
  $k\geq 0$, let
  $$C[k]= 
  \begin{cases}
    \langle \beta, q_f,\symextra{a}_1\ldots\symextra{a}_{k}a_{k+1}\ldots a_l\rangle
    & \quad \mbox{for } 0 \leq k < l\\
    \langle \beta, q_f,\symextra{a}_1\ldots\symextra{a}_l\symextra{\Box}^{k-l} \rangle
    & \quad \mbox{otherwise}
  \end{cases}$$
  that is, for $k>0$, $C[k]$ is obtained from $C$ by replacing the
  current symbols and the first $k-1$ symbols to its right by the
  corresponding extra symbols.
  \begin{enumerate}
  \item $\ket{C[j]} = U_M^{j-i}\ket{C[i]}$, for every $i,j\geq
    0$. More generally, for every $i,j\geq 0$ and every $\phi\in\qConf$,
    $$\inprod{U^{j-i}_M \ket{\phi}}{\ket{C[j]}} = \kinprod{\phi}{C[i]}$$
  \item $U_M^{-j-i}\ket{C[i]}\in\ell^2(\GConf\setminus\FConf)$, for
    every $i\geq 0$ and $j>0$. 
  \item Let $\ket{\phi}\in\ell^2(\GConf\setminus\FConf)$. For every
    $i,j\geq 0$, we have that:
    \begin{enumerate}
    \item $\ket{C[i]}\in U^j\ket{\phi}$ only if $i < j$.
    \item $\ket{C[i]}\not\in U^{-j}\ket{\phi}$, that is,
      $U^{-j}\ket{\phi}\in\ell^2(\GConf\setminus\FConf)$
    \end{enumerate}
  \end{enumerate}
\end{lemma}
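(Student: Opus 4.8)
The plan is to exploit the fact that the target transition function $\delta_t$ is completely deterministic: in a target (in particular, final) configuration the machine never branches, it merely marks the current symbol and shifts right. First I would establish the base case of item~(1), namely $U_M\ket{C[i]}=\ket{C[i+1]}$ for every $i\ge 0$. Unfolding the definition of $W_M$ on $C[i]$ --- which lies in $\SConf{t}_M$, since its state is $q_f\in\Q_t$ and, being a target configuration, its current symbol lies in $\Sigma$ --- and using that $\delta_t(q_f,u)(p,v,d)=1$ exactly when $(p,v,d)=(q_f,\symextra{u},R)$, one gets $W_M(\ket{C[i]})=\ket{\gamma_M(C[i],q_f,\symextra{u},R)}=\ket{C[i+1]}$ (the two displayed cases $k<l$ and $k\ge l$ just record whether the head is still over an original symbol or over blanks; both yield $C[i+1]$). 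Iterating gives $\ket{C[j]}=U_M^{\,j-i}\ket{C[i]}$ for $j\ge i$, and since $U_M$ is unitary (Theorem~\ref{thm:time-evol-unitary}) hence invertible, the identity extends to all $i,j\ge 0$. The ``more general'' statement then follows by moving the operator across the inner product: $\inprod{U_M^{\,j-i}\ket{\phi}}{\ket{C[j]}}=\inprod{\ket{\phi}}{(U_M^{\,j-i})^{*}\ket{C[j]}}=\inprod{\ket{\phi}}{U_M^{\,i-j}\ket{C[j]}}=\kinprod{\phi}{C[i]}$, using $(U_M^{\,j-i})^{*}=U_M^{\,i-j}$ and the base case.

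Next I would isolate the observation that drives items~(2) and~(3): \emph{$U_M$ maps every final configuration to a final configuration carrying exactly one more extra symbol}. Indeed, a final configuration has state $q_f\in\Q_t$ and (being a target configuration) current symbol in $\Sigma$, so it belongs to $\SConf{t}_M$; the step $\delta_t$ keeps the state equal to $q_f$, hence the image is again final, and it writes $\symextra{u}$ over the current symbol $u$, so the image has one extra symbol more than its predecessor. Consequently, for any final $C'$ and any $k\ge 0$, $U_M^{k}\ket{C'}$ is a single base final configuration with at least $k$ extra symbols; in particular, for $k\ge 1$ it is never extra-symbol free, so it is distinct from $C[0]=C$.

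With these facts, items~(2) and~(3) become inner-product bookkeeping. For~3(a): if $i\ge j$ then, by item~(1), $U_M^{-j}\ket{C[i]}=\ket{C[i-j]}$, a final configuration, so the coefficient of $\ket{C[i]}$ in $U^{j}\ket{\phi}$ is $\inprod{U^{j}\ket{\phi}}{\ket{C[i]}}=\inprod{\ket{\phi}}{U^{-j}\ket{C[i]}}=\kinprod{\phi}{C[i-j]}=0$ because $\phi\in\ell^2(\GConf\setminus\FConf)$; contrapositively, $\ket{C[i]}\in U^{j}\ket{\phi}$ forces $i<j$. For~3(b): for any final $C'$, $\inprod{U^{-j}\ket{\phi}}{\ket{C'}}=\inprod{\ket{\phi}}{U^{j}\ket{C'}}$, and $U^{j}\ket{C'}$ is a base final configuration by the observation above, so the inner product vanishes; thus $U^{-j}\ket{\phi}$ has no final configuration in its support, i.e.\ $U^{-j}\ket{\phi}\in\ell^2(\GConf\setminus\FConf)$, and in particular no $\ket{C[i]}$ occurs. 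Finally, for~(2): item~(1) gives $U_M^{-j-i}\ket{C[i]}=U_M^{-j}\bigl(U_M^{-i}\ket{C[i]}\bigr)=U_M^{-j}\ket{C[0]}$, and for any final $C'$, $\inprod{U_M^{-j}\ket{C[0]}}{\ket{C'}}=\inprod{\ket{C[0]}}{U_M^{\,j}\ket{C'}}=0$, since $j>0$ and, by the observation, $U_M^{\,j}\ket{C'}$ is a base final configuration with at least one extra symbol, hence different from $C[0]$; therefore $U_M^{-j-i}\ket{C[i]}\in\ell^2(\GConf\setminus\FConf)$.

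The routine part is checking that the single-step action on $C[i]$ is as claimed, unwinding the $\simeq$-equivalences and the blank-padding cases. The only genuinely delicate point is the observation of the second paragraph, i.e.\ verifying that no configuration outside $\SConf{t}_M$ can be a $U_M$-predecessor of an extra-symbol-free final configuration --- equivalently, that $U_M$ strictly increases the extra-symbol count on $\FConf$ --- since this is exactly the mechanism by which ``once a final state is entered, the result is frozen''. Everything else is bounded-operator algebra with $U_M$ unitary, so moving $U_M^{\pm k}$ across $\kinprod{\cdot}{\cdot}$ is legitimate despite the infinite dimension.
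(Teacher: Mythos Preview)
Your proof is correct and follows the same overall strategy as the paper: establish the single-step identity $U_M\ket{C[i]}=\ket{C[i+1]}$ from the deterministic form of $\delta_t$, extend by unitarity, and then shuttle powers of $U_M$ across the inner product. The one difference worth noting is in items~(2) and~(3b): the paper parametrises an arbitrary final configuration as $D[k]$ for some extra-symbol-free final $D$ and then invokes the inner-product identity of item~(1) to get $\kinprod{C[i]}{D[k+j+i]}=0$ and $\kinprod{\phi}{C[i+j]}=0$, whereas you argue directly from the observation that $U_M$ sends any final configuration to a single final base configuration with one more extra symbol. Your route is marginally cleaner, since it does not rely on the (true but unstated) fact that every final configuration arises as some $D[k]$; the paper's route, on the other hand, makes items~(2) and~(3) literal one-line corollaries of item~(1).
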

\begin{proof}
  \mbox{}
  \begin{enumerate}
  \item By the definition of the transition function, we can easily
    see that $\ket{C[i+1]}=U_M\ket{C[i]}$, for every $i\geq
    0$. Therefore,
    $\ket{C[j]}=U_M^j\ket{C[0]}=U_M^{j-i}U_M^i\ket{C[0]}=U_M^{j-i}\ket{C[i]}$,
    for every $i,j\geq 0$. Then, since $U_M$ is unitary,
    $\inprod{U^{j-i}_M \ket{\phi}}{\ket{C[j]}} = \tek{\phi} U^{i-j}_M
    \ket{C[j]} = \kinprod{\phi}{C[i]}$.
  \item Let $D$ be any final configuration without extra symbols. By
    the previous item,
    $\inprod{U^{-j-i}_M \ket{C[i]}}{\ket{D[k]}} =
    \kinprod{C[i]}{D[k+j+i]}=0$,
    for every $k\geq 0$, since $j>0$ and then $k+j+i>i$.  Therefore,
    since $D[k]\not\in U^{-j-i}_M \ket{C[i]}$, for any $D$ and any
    $k$, we conclude that
    $U_M^{-j-i}\ket{C[i]}\in\ell^2(\GConf\setminus\FConf)$.
  \item They are immediate consequences of the previous two
    items. Indeed,
    $\inprod{U^j_M \ket{\phi}}{\ket{C[i]}} =
    \kinprod{\phi}{C[i-j]]}=0$
    for $i \geq j$, and
    $\inprod{U^{-j}_M \ket{\phi}}{\ket{C[i]}} =
    \kinprod{\phi}{C[i+j]]}=0$, for $j\geq 0$,
since
    $\ket{\phi}\in\ell^2(\GConf\setminus\FConf)$.
  \end{enumerate}
\end{proof}

\begin{remark}\label{rem:final-stable}
  The previous lemma shows that the final configurations reached along
  a computation are stable and do not interfere with other branches
  of the computation in superposition, which may enter into a final
  configuration later. Indeed, given a configuration
  $\ket{\phi} = \ket{\phi_f} +\ket{\phi_{nf}}$, where
  $\ket{\phi_f}\in\qFConf$ and $\ket{\phi_{nf}}$ does not contain any
  final configuration, let
  $\psi=U^i\ket{\phi}=U^i \ket{\phi_f} + U^i\ket{\phi_{nf}}$.  Any
  final configuration in $U^i\ket{\phi_{nf}}$ contains less than $i$
  extra symbols, while any final configuration $C[k]$ with $k$ extra
  symbols in $\ket{\phi}$ is mapped into a configuration $C[i+k]$
  with $i+k$ extra symbols, without changing the value associated to
  the configuration, since $\val(C[k])=\val(C[k+i])$. Moreover, $C[k]$ and
  $C[i+k]$ have the same coefficient in $\ket{\phi}$ and $\ket{\psi}$,
  respectively, since
  $\kinprod{\psi}{C[i+k]}=\inprod{U^i\ket{\phi_f}}{\ket{C[i+k]}}=
  \kinprod{\phi}{C[k]}$.
\end{remark}

\subsection{A comparison with Bernstein and Vazirani's QTMs: part 1}

We refer to \BeV for the precise definitions of the QTMs used in that
paper. For the sake of readability, we informally recall the notion of
what they call \textit{well formed, stationary, normal form QTMs}
(B\&V-QTMs in the following).


A B\&V-QTM $M=\langle \Sigma, \Q, \delta, q_0, q_f\rangle$ is defined
as our QTM (with one source state and one target state only) with the
following differences:
\begin{enumerate}
\item the set of configurations coincides with all possible classical
  configurations, namely all the set $\Sigma^*\times \Q\times\Sigma^*$.
\item no superposition is allowed in the initial q-configuration (it
  must be a classical configuration $\langle\alpha,q,\beta\rangle$
  with amplitude 1);
\item let $\ket{C}$ be such an initial configuration and let\\
  \centerline{%
    $k=\min\{j \mid \mbox{$U_{M}^{j}\ket{C}$ contains a final
      configuration}\}$%
  } %
  If such a $k$ exists, then (i) all the configurations in
  ${U_M}^k\ket{C}$ are final; (ii) for all $i < k, {U_M}^i\ket{C}$
  does not contain any final configuration. We say in this case that
  the QTM halts in $k$ steps in $U_M^k\ket{C}$;
\item if a QTM halts, then the tape head is on the start cell of the
  initial configuration;
\item there are no extra symbols and the transitions out of the final
  state or into the initial state are replaced by loops from the final
  state into the initial state, that is, $\delta(q_f,a)(q_0,a,R) = 1$
  for every $a\in\Sigma$. Therefore, because of the local unitary
  conditions, that must hold in the final state also, these are the
  only outgoing transitions from the final state and the only incoming
  state into the initial state, that is, $\delta(q_f,a)(q',a',d) = 0$ if
  $(q',a',d)\neq(q_0,a,R)$ and $\delta(q',a')(q_0,a,d)=0$ if
  $(q',a',d) \neq (q_f,a,R)$.
\end{enumerate}

\begin{theorem}\label{theor:fromBeVtoOur}
  For any B\&V-QTM $M$ there is a  QTM $M'$ s.t.\ for each initial configuration
  $\ket{C}$, if $M$ with input $\ket{C}$ halts in $k$ steps in a final configuration
  $\ket{\phi}=U_M^k\ket{C}$, then $U_{M'}^k{\ket{C}}=\ket{\phi}$.
\end{theorem}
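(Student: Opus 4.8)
The plan is to construct $M'$ from $M$ by keeping the same state space $\Q$, declaring $\Q_s=\{q_0\}$ and $\Q_t=\{q_f\}$, and taking as main transition function $\delta_0'$ simply the restriction of $M$'s transition function to $(\Q_0\cup\Q_s)\times\Sigma=(\Q\setminus\{q_f\})\times\Sigma$; the source and target transition functions $\delta_s',\delta_t'$ are then forced by Definition~\ref{def:QTM}. What makes this assignment well typed is the B\&V normal form: by item~(5) the only transitions of $M$ that leave $q_f$ or enter $q_0$ are the loops $\delta(q_f,a)(q_0,a,R)=1$, so for every $(q,a)$ with $q\ne q_f$ the vector $\delta(q,a)$ is supported inside $(\Q\setminus\{q_0\})\times\Sigma\times\DD=(\Q_0\cup\Q_t)\times\Sigma\times\DD$, which is exactly the codomain required of $\delta_0'$. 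This machine is built once and for all, independently of the input $\ket{C}$.

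Next I would check that $M'$ is a legitimate QTM, i.e.\ that $\delta_0'$ satisfies the three local unitary conditions of Definition~\ref{def:QTM} (the conditions for $\delta_s',\delta_t'$ hold by construction). Since $M$ is well formed, its transition function meets the Bernstein--Vazirani conditions, which are the same three sums but ranging over all of $\Q$. For every $(q,a)$ with $q\ne q_f$ we have $\delta_0'(q,a)=\delta(q,a)$ supported in $(\Q_0\cup\Q_t)\times\Sigma\times\DD$, so restricting the norm/orthogonality sums from $\Q$ to $\Q_0\cup\Q_t$ only discards the vanishing $p=q_0$ components; hence conditions (1)--(3) for $\delta_0'$ are literally the B\&V conditions of $M$ read on the restricted domain. (Note that the B\&V conditions for the row $q=q_f$ — the loops — are not needed, since those transitions have been moved into $\delta_t'$.)

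For the simulation claim, fix an initial configuration $C=\langle\alpha,q_0,\beta\rangle$ with $\alpha\beta\in\Sigma^*$ (so $C\in\GConf_{M'}$ by Definition~\ref{Def:config}) and assume $M$ halts on $\ket{C}$ in $k$ steps. I would prove by induction on $i\in\{0,\dots,k\}$ that $U_{M'}^i\ket{C}=U_M^i\ket{C}$; the base case $i=0$ is trivial. For the step, by linearity it suffices to show $U_{M'}\ket{D}=U_M\ket{D}$ for every base configuration $\ket{D}$ occurring in $\psi:=U_M^i\ket{C}$ with $0\le i\le k-1$. Here the halting hypothesis does the work: since $i<k$, $\psi$ contains no final configuration; since moreover (for $i\ge 1$) $U_M^{i-1}\ket{C}$ also contains no final configuration, item~(5) forbids any occurrence of state $q_0$ in $\psi$; and, being a prefix of a B\&V computation, $\psi$ contains no extra symbols. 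Thus each such $D=\langle\alpha',q,u\beta'\rangle$ has $q\in\Q_0\cup\{q_0\}$ and current symbol $u\in\Sigma$, i.e.\ $D\in\SConf{0}_{M'}$. Writing $D_{p,v,d}=\gamma_M(D,p,v,d)=\gamma_{M'}(D,p,v,d)$ (this classical step touches no extra symbol, so the two step functions agree on it), we get $U_{M'}\ket{D}=\sum_{(p,v,d)}\delta_0'(q,u)(p,v,d)\,\ket{D_{p,v,d}}$, and since $q\ne q_f$ we have $\delta_0'(q,u)=\delta(q,u)$, so $U_{M'}\ket{D}$ equals the action of $M$'s time evolution operator on $\ket{D}$, namely $U_M\ket{D}$. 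Evaluating the induction at $i=k$ gives $U_{M'}^k\ket{C}=U_M^k\ket{C}=\ket{\phi}$, as required; beyond step $k$ the two machines diverge (in $M'$ the final configuration $C=C[0]$ evolves to $C[1],C[2],\dots$ by writing extra symbols, as in Lemma~\ref{lem:pers-final-conf}, whereas in $M$ it loops back to $q_0$), but this is irrelevant to the statement.

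I expect the only genuinely delicate points to be the two bookkeeping steps above. The first is verifying that deleting $M$'s loop transitions leaves a transition function still satisfying the local unitary conditions — this is where one really uses both that $M$ is well formed and the support structure forced by item~(5). The second is showing that, up to step $k$, the simulated computation never reaches state $q_0$ (again from item~(5), combined with the halting hypothesis at all steps $<k$) and never writes an extra symbol, so that every base configuration that matters lies in $\SConf{0}_{M'}$ and the two time evolution operators act there by the very same classical rule. Neither step is computationally heavy, but both require care in tracking exactly which states and symbols can appear at each stage.
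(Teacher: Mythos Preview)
Your proposal is correct and follows exactly the construction in the paper: same state space, $\Q_s=\{q_0\}$, $\Q_t=\{q_f\}$, $\delta_0'$ the restriction of $\delta$ to $(\Q\setminus\{q_f\})\times\Sigma$, with the local unitary conditions inherited from the \BeV well-formedness via the support constraint from item~(5). Your simulation argument by induction on $i\le k$ is the detailed version of what the paper compresses into ``by construction, it is clear''; the one observation you add that is not strictly needed is that $q_0$ cannot recur for $i\ge 1$---since $q_0\in\Q_s$ and the current symbol is in $\Sigma$, any such $D$ already lies in $\SConf{0}_{M'}$ regardless.
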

\begin{proof}
  The QTM $M'$ has the same states of $M$, only one source state, the
  initial state $q_0$, and only one target state, the final state
  $q_f$. Therefore, if $M=\langle \Sigma, \Q, \delta, q_0, q_f\rangle$,
  we take
  $M'=\langle \Sigma, \Q, \{q_0\}, \{q_f\}, \delta', q_0, q_f\rangle$.

  The source part $\delta_s'$ and the target part $\delta_t'$ of the
  transition function $\delta'$ of $M'$ are uniquely determined by the
  definition of QTM. The function $\delta_0'$ is instead the
  restriction of $\delta$ to the domain
  $\varS_0=\Q\setminus\{q_f\}\times\Sigma$, that is, for each $q\neq q_f$
  and $a\in\Sigma$, we have
  $\delta_0'(q,a)(p,b,d)=\delta(q,a)(p,b,d)$, for every
  $(p,b,d)\in \T_0\times \DD$. The unitary local conditions hold for
  $\delta_0'$, since they hold for $\delta$ and because, as already
  remarked, in a \BeV-QTM, if $q\neq q_f$, then $\delta(q,a)(q_0,b,d)=0$,
  for every $a,b\in\Sigma$ and $d\in\DD$.
  
  By construction, it is clear that for each $i\leq k$, s.t.\
  $\ket{\phi_i}=U_{M'}^i{\ket{C}}$ is not final, then $\ket{\phi_i}=U_{M}^i{\ket{C}}$.
\end{proof}

\section{Quantum Computable Functions}

In this section we address the problem of defining  the concept of quantum computable function in an ``ideal'' way, without taking into account any measurement protocol. The problem of the observation protocol will be addressed in Section~\ref{Sect:observables}.
Here we show how each QTM  naturally defines a computable function from the sphere of radius $1$ in $\ell^2$ to the set of (partial) probability distributions on the set of natural numbers.

\begin{definition}[Probability distributions]
  \mbox{}
  \begin{enumerate}
  \item A \emph{partial probability distribution} (PPD) of natural
    numbers is a function $\pd{}: \NN \to \mathbb{R}_{[0,1]}$ such
    that $\sum_{n\in\NN} \pd{}(n)\leq 1$.
  \item  If $\sum_{n\in\NN} \pd{}(n)= 1$, $\pd{}$ is a \emph{probability
      distribution} (PD).
  \item $\PP$ and $\PP_1$ denotes the sets of all
    the PPDs and PDs, respectively.
  \item If the set $\{n : \pd{}(n)\neq 0\}$ is finite, $\pd{}$ is
    \emph{finite}.
  \item Let $\pd{}', \pd{}''$ be two PPDs, we say that
    $\pd{}' \leq \pd{}''$ ($\pd{}' < \pd{}''$) iff for each
    $n\in \NN$, $\pd{}'(n)\leq \pd{}''(n)$ ($\pd{}'(n) < \pd{}''(n)$).
  \item Let $\varP=\{\pd{i}\}_{i\in\NN}$ be a denumerable
    sequence of PPDs; $\varP$ is \emph{monotone} iff
    $\pd{i}\leq \pd{j}$, for each $i<j$.
  \end{enumerate}
\end{definition}

\begin{remark}
  In the following, we shall also use the notation
  $\pd{}(\bot)=1-\sum_{n\in\NN}\pd{}(n)$. By definition,
  $0\leq \pd{}(\bot) \leq 1$, and a PPD is a PD iff $\pd{}(\bot)=0$.
\end{remark}


Since real numbers are a complete space, we have the following result.
\begin{proposition}
  Each $\varP\subseteq \PP$ has a supremum, denoted by $\lub \varP$.
\end{proposition}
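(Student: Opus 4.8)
The plan is to build the supremum pointwise and then verify that the resulting function is still a partial probability distribution. Given $\varP \subseteq \PP$, define $\pd{}^*: \NN \to \mathbb{R}_{[0,1]}$ by setting $\pd{}^*(n) = \sup\{\pd{}(n) \mid \pd{} \in \varP\}$ for each $n$. This supremum exists in $\mathbb{R}$ because each set $\{\pd{}(n) \mid \pd{}\in\varP\}$ is a nonempty (or, if $\varP=\emptyset$, we take the sup to be $0$) subset of $[0,1]$, hence bounded above; by completeness of $\mathbb{R}$ it has a least upper bound, and that bound lies in $[0,1]$ since $1$ is an upper bound. So $\pd{}^*$ is a well-defined function $\NN \to \mathbb{R}_{[0,1]}$.

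Next I would check that $\pd{}^*$ is itself a PPD, i.e. $\sum_{n\in\NN}\pd{}^*(n) \leq 1$. Since the sum is over nonnegative terms, it equals $\sup_F \sum_{n\in F}\pd{}^*(n)$ over finite $F\subseteq\NN$, so it suffices to bound each finite partial sum. Fix a finite $F$. For each $n\in F$ choose, for a given $\varepsilon>0$, some $\pd{n}\in\varP$ with $\pd{n}(n) > \pd{}^*(n) - \varepsilon/|F|$; then $\sum_{n\in F}\pd{}^*(n) < \sum_{n\in F}\pd{n}(n) + \varepsilon$. The obstacle here — really the only subtlety — is that the $\pd{n}$ may differ for different $n$, so $\sum_{n\in F}\pd{n}(n)$ need not be a partial sum of any single distribution in $\varP$ and is not obviously $\leq 1$. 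This is where I would use (or, if it is not available, add as a hypothesis) the directedness of $\varP$: if $\varP$ is directed under $\leq$ there is a single $\pd{}\in\varP$ dominating all finitely many $\pd{n}$, whence $\sum_{n\in F}\pd{n}(n) \leq \sum_{n\in F}\pd{}(n) \leq 1$ and therefore $\sum_{n\in F}\pd{}^*(n) \leq 1 + \varepsilon$ for every $\varepsilon>0$, giving $\sum_{n\in F}\pd{}^*(n)\leq 1$, and then $\sum_{n\in\NN}\pd{}^*(n)\leq 1$. (In the monotone-sequence case of interest, directedness is automatic, which is presumably the intended reading; in general one simply replaces $\varP$ by its downward... actually by the directed family of finite pointwise maxima, whose elements are again PPDs by the same finite-$F$ argument applied to a pair.)

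Finally I would verify the universal property, namely that $\pd{}^* = \lub\varP$ in the order $\leq$. It is an upper bound: for every $\pd{}\in\varP$ and every $n$, $\pd{}(n) \leq \sup\{\pd{}'(n)\mid \pd{}'\in\varP\} = \pd{}^*(n)$, so $\pd{} \leq \pd{}^*$. It is least: if $\pd{}'$ is any PPD with $\pd{} \leq \pd{}'$ for all $\pd{}\in\varP$, then for each fixed $n$ the number $\pd{}'(n)$ is an upper bound of $\{\pd{}(n)\mid\pd{}\in\varP\}$, hence $\pd{}^*(n) = \sup\{\pd{}(n)\mid\pd{}\in\varP\} \leq \pd{}'(n)$; as this holds for all $n$, $\pd{}^* \leq \pd{}'$. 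Thus $\pd{}^*$ is the supremum, and writing $\lub\varP := \pd{}^*$ completes the proof. The one genuine content of the argument is the finite-partial-sum estimate above; everything else is a routine unwinding of the pointwise supremum, so I would present that estimate carefully and treat the rest briskly.
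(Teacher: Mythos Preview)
Your approach---define the candidate supremum pointwise and then verify the order-theoretic properties---is exactly what the paper does, though the paper dismisses the verification as ``a trivial exercise.'' You have in fact been more careful than the paper: you correctly notice that for an \emph{arbitrary} $\varP\subseteq\PP$ the pointwise supremum need not lie in $\PP$. Indeed the proposition as stated is false: take $\varP=\{\pd{k}\mid k\in\NN\}$ with $\pd{k}(n)=1$ if $n=k$ and $0$ otherwise; each $\pd{k}\in\PP$, but the pointwise supremum is the constant function $1$, which sums to $\infty$. Your instinct to require directedness (or, as in the paper's actual applications, a monotone chain) is the correct repair, and the paper only ever invokes the result in that setting.

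One genuine error in your proposal: the parenthetical suggestion to pass to ``the directed family of finite pointwise maxima, whose elements are again PPDs'' does not work. In the counterexample above, the pointwise maximum of $\pd{0}$ and $\pd{1}$ already sums to $2$, so that family does not lie in $\PP$; the ``same finite-$F$ argument applied to a pair'' is circular, since that argument itself relied on having a common upper bound in $\varP$. There is no rescue for arbitrary $\varP$: the directedness hypothesis (or something equivalent) is genuinely needed.
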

\begin{proof}
  For each $n\in\NN$, the set
  $\varP_n=\{\pd{}(n) : \pd{}\in\varP\} $ has a supremum
  $\lub \varP_n$. It is a trivial exercise to verify that
  $(\lub \varP)(n)=\lub \varP_n$
  is indeed the supremum of $\varP$.
\end{proof}

We can now introduce the notion of limit of a sequence
$\varP=\{\pd{i}\}_{i\in\NN}$.

\begin{definition}
  Let $\varP=\{\pd{i}\}_{i<\NN}$ be a sequence of PPDs.  If
  for each $n\in\NN$ there exists $l_n=\lim_{i\to \infty} \pd{i}(n)$,
  we say that $\lim_{i\to \infty} \pd{i}=\pd{}$, with $\pd{}(n)=l_n$.
\end{definition}

The computed outputs of a QTM will  be defined as the limit of the
sequence of partial probability distributions obtained along its computations.

\begin{definition}[probability and q-configurations]\label{Def:compandppds}
  Given a configuration $C=\langle\alpha,q,\beta\rangle$, let
  $\val[C]$ be the number of $1$ and $\bar{1}$ symbols in
  $\alpha\beta$.
 \begin{enumerate}
 \item To any q-configuration $\ket{\phi}=\sum_C e_C \ket{C}$, we
   associate the partial probability distribution $\pd{\ket{\phi}}$
   s.t.
   $\prob{\ket{\phi}}{n}=\sum_{C\in\FConf, \val[C] =n} |e_C|^2$.
 \item For any computation
   $K_{\ket{\phi}}^M=\{\ket{\phi_i}\}_{i\in \NN}$, let
   $\pd{K_{\ket{\phi}}^M}$ be the sequence of PPDs
   $\{\pd{\ket{\phi_i}}\}_{i\in \NN}$.
  \end{enumerate}
\end{definition}

We  now show the key property that a QTM computation yields monotone
sequences of PPDs.  In its simple proof we see at work all the
constraints on the transition function of a QTM.  First, once in a
target state the machine can only change a (normal) symbol $a$ into
the corresponding extra symbol $\bar{a}$; as a consequence, the $\val$
of these configurations does not change.  Second, when entering for
the first time into a target state there are no extra symbols on the
tape, for the form of the transition function $\delta_0$, which is
reflected in the constraints on the configurations
(Definition~\ref{Def:config}). Finally, in a final configuration the
number of extra symbols counts the steps the machine performed since
entering for the first time into a target state.  These last two
properties defuse quantum interference between final configurations
reached in a different number of steps.

\begin{theorem}[monotonicity of computations]\label{theor:monot}
  For any computation $K_{\ket{\phi}}^M$ of a QTM $M$, the sequence of
  PPDs $\pd{K_{\ket{\phi}}^M}$ is monotone.
\end{theorem}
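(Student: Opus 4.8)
The plan is to show that for every $i$, $\pd{\ket{\phi_{i}}}\le\pd{\ket{\phi_{i+1}}}$, i.e.\ that for each fixed $n\in\NN$, $\prob{\ket{\phi_{i}}}{n}\le\prob{\ket{\phi_{i+1}}}{n}$; monotonicity of the whole sequence then follows by transitivity of $\le$ on PPDs. So fix $i$ and write $\ket{\phi_{i}}=\sum_{C}e_{C}\ket{C}$ and $\ket{\phi_{i+1}}=U_M\ket{\phi_{i}}=\sum_{C}e'_{C}\ket{C}$. The key is to split the final configurations contributing to $\prob{\ket{\phi_{i+1}}}{n}$ according to whether or not they are ``stable images'' of final configurations already present in $\ket{\phi_{i}}$.

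First I would record the two structural facts, both immediate from Lemma~\ref{lem:pers-final-conf} and Remark~\ref{rem:final-stable}. (a) If $\ket{C}\in\ket{\phi_i}$ is final, then it has the form $D[k]$ for a unique extra-symbol-free final $D$ and a unique $k\ge 0$, and $U_M\ket{D[k]}=\ket{D[k+1]}$, which is again final with $\val[D[k+1]]=\val[D[k]]$ (an extra symbol $\symextra{1}$ still counts in $\val$, and replacing $a$ by $\symextra{a}$ changes neither the count of $1$/$\symextra{1}$ nor moves the head off the relevant region); moreover distinct final $\ket{C}$'s in $\ket{\phi_i}$ map to distinct final configurations. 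So the coefficient of $\ket{D[k+1]}$ in $\ket{\phi_{i+1}}$ picks up exactly the term $e_{D[k]}$ from the action of $U_M$ on the final part of $\ket{\phi_i}$. (b) By Lemma~\ref{lem:pers-final-conf}(3a) (together with the observation, via $\delta_0$ and Definition~\ref{Def:config}, that a configuration entering a target state for the first time carries no extra symbols, so it is some $D[0]$), any final configuration appearing in $U_M$ applied to the \emph{non-final} part of $\ket{\phi_i}$ is of the form $D[k]$ with $k< i+1$, in particular with $k\ge 1$ only if it was produced through $\ell^2(\GConf\setminus\FConf)$ — crucially such a configuration is orthogonal to every $D'[k']$ that is the stable image of a final configuration of $\ket{\phi_i}$. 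Let me make (b) precise via the orthogonal decomposition $\ket{\phi_i}=\ket{\phi_i^f}+\ket{\phi_i^{nf}}$ with $\ket{\phi_i^f}\in\qFConf$ (or $0$) and $\ket{\phi_i^{nf}}\in\ell^2(\GConf\setminus\FConf)$.

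With this in hand the computation is short. For a fixed $n$,
\[
\prob{\ket{\phi_{i+1}}}{n}=\sum_{\substack{C\in\FConf\\ \val[C]=n}}|e'_C|^2
=\sum_{\substack{C\in\FConf,\ \val[C]=n\\ C\text{ stable image}}}|e'_C|^2
\;+\!\!\sum_{\substack{C\in\FConf,\ \val[C]=n\\ C\text{ not a stable image}}}\!\!|e'_C|^2 .
\]
By Remark~\ref{rem:final-stable}, $\kinprod{\phi_{i+1}}{C[i+1]}=\inprod{U_M\ket{\phi_i^f}}{\ket{C[i+1]}}=\kinprod{\phi_i}{C[i]}$ — wait, more directly: for $C=D[k+1]$ a stable image, the contribution of $\ket{\phi_i^{nf}}$ to its coefficient vanishes by fact (b), so $e'_{D[k+1]}=e_{D[k]}$. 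Hence the first sum equals $\sum_{\val[D[k]]=n}|e_{D[k]}|^2=\prob{\ket{\phi_i}}{n}$, as stable images of distinct finals are distinct and $\val$ is preserved. Since the second sum is a sum of squared moduli, it is $\ge 0$, giving $\prob{\ket{\phi_{i+1}}}{n}\ge\prob{\ket{\phi_i}}{n}$, which is what we wanted.

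The main obstacle is entirely in the bookkeeping of step (b): one must be careful that a final configuration created ``fresh'' at step $i+1$ from the non-final part genuinely cannot coincide with, and is in fact orthogonal to, any stable image $D[k+1]$ with $k\ge 0$ coming from $\ket{\phi_i^f}$. This is exactly where all three constraints flagged before the theorem are used — that $\delta_t$ only rewrites $a\mapsto\symextra{a}$ and moves right (so $\val$ is preserved and extra symbols accumulate contiguously to the left of the head), that $\delta_0$ forbids extra symbols on entry to a target state (so a freshly created final configuration is a $D[0]$, with $0<i+1$ impossible to confuse with an image of something already final for $i\ge 1$; the $i=0$ base case is trivial since $\ket{\phi_0}$ is initial and $\prob{\ket{\phi_0}}{n}=0$), and Lemma~\ref{lem:pers-final-conf}(3a). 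Once those are invoked, orthogonality of the two families of final base configurations makes the cross terms disappear and the inequality is immediate.
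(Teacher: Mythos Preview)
Your proof is correct and follows essentially the same approach as the paper: decompose $\ket{\phi_i}$ into its final and non-final parts, observe that $U_M$ sends each final configuration $D[k]$ to $D[k+1]$ (preserving $\val$ and the amplitude), while any final configuration freshly produced from the non-final part has zero extra symbols, so the two families are orthogonal and the probability can only increase. The paper's proof is organized slightly more directly around the extra-symbol count $\maltese\ket{C}$, but the content is the same; your invocation of Lemma~\ref{lem:pers-final-conf}(3a) with $j=1$ (giving $k=0$ for fresh final configurations, not merely $k<i+1$) is exactly the mechanism the paper uses, and your separate handling of $i=0$ is harmless but unnecessary since the argument already covers it vacuously.
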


\begin{proof}
  It is a direct consequence of the properties already remarked in
  Lemma~\ref{lem:pers-final-conf} and
  Remark~\ref{rem:final-stable}. Anyhow, let us see a direct proof in
  details.

  \newcommand{\niu}[1]{{\small \maltese}#1} %
  Let $U$ be the time evolution operator of $M$.  We prove that
  $\forall i$ $\forall n$:
  $\prob{\ket{\phi_i}}{n}\leq \prob{U\ket{\phi_{i}}}{n}$.  Let us
  split any q-configuration in two parts, the sums of the final and of
  the non-final configurations:
  \[
  \ket{\phi_i}=\sum_{C\in \F_i} d_C\ket{C}
  +
  \sum_{D\not\in\FConf} d_D\ket{D}
  \]
  where $\F_i\subset \FConf$ are the final configurations in
  $\ket{\phi_i}$ with non-null amplitude. By applying $U$ we get
  \[
  U\ket{\phi_i}=\sum_{C\in\F_i} d_C U\ket{C}
  +
  \sum_{B\in \F_i'} d_B'\ket{B}
  +
  \sum_{A\not\in\FConf} d_A'\ket{A}
  \]
  where $\F_i'\subseteq\FConf$, and $d'_B\neq 0$ for $B\in\F_i'$, and
  $$
  U(\sum_{D\not\in\FConf} d_D\ket{D})=
  \sum_{B\in \F_i'} d_B'\ket{B}
  +
  \sum_{A\not\in\FConf} d_A'\ket{A}.
  $$
  The sum on non-final configurations does not contribute to
  $\pd{U\ket{\phi_{i}}}$. On the other hand, let
  $\niu{\ket{C}}=\#\{ \bar{a} : \bar{a}\in C\}$. 
  For each $C\in \F_i$:
  \begin{enumerate}
  \item  $U\ket{C} = \ket{C'} \in \FConf$; 
  \item  $\val[C]=\val[C']$;
  \item  $\niu{\ket{C'}}=\niu{\ket{C}}+1>0$;
  \item for any other $E\in \F_i$ (i.e., $E\neq C$), we have
    $U\ket{E}=E'\neq C'$.
  \end{enumerate} 
  As for the newly final configurations $B\in \F_i'$, we have
  $\niu{\ket{B}}=0$. Therefore, none of the $B\in \F_i$ is equal
  to any of the $C'$ s.t.\ $\ket{C'}= U\ket{C}$, and hence 
  $$\pd{U\ket{\phi_{i}}} 
       =
      \sum_{\dind{C\in \F_i}{\val[C]=n}} |d_C|^2
      +
      \sum_{\dind{B\in \F_i'}{\val[B]=n}} |d_B'|^2
      \  = \
      \pd{\ket{\phi_i}}
      +
      \sum_{\dind{B\in \F_i'}{\val[B]=n}} |d_B'|^2$$
\end{proof}

We now turn to the definition of the computed output of a QTM
computation.  The easy case is when a computation reaches a final
q-configuration $\ket{\psi}\in \FConf$ (meaning that all the
classical computations in superposition are ``terminated'')---in this
case the computed output is the PD $\pd{\ket{\psi}}$. Of course the
QTM will keep computing and transforming $\ket{\psi}$ into other
configurations, but these configurations all have the same PD.
However, we want to give meaning also to ``infinite'' computations,
which never reach a final q-configuration, yet producing some final
configurations in the superpositions.  In this case we define the
computed output as the limit of the PPDs yielded by the computation.

We need first the following lemma, whose proof is an easy consequence of the definition of limit for PPDs and of  well known properties of limits of real-valued, non-decreasing sequences on natural numbers.

\begin{lemma}\label{lemma:limMonot}
Let $K=\{\ket{\phi_i}\}_{i\in\NN}$ be a  monotone sequence of q-configurations, then the sequence of PPDs
$\varP=\{\pd{\ket{\phi_i}}\}_{i\in\NN}$ enjoys
$\lim_{i\to \infty}\pd{\ket{\phi_i}}= \lub\varP$.
\end{lemma}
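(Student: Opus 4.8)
The plan is to prove the identity $\lim_{i\to\infty}\pd{\ket{\phi_i}} = \lub\varP$ by combining three facts: the sequence $\varP$ is a monotone sequence of PPDs; monotonicity is preserved when passing from q-configurations to the associated PPDs; and a bounded monotone sequence of reals converges to its supremum. Concretely, first I would observe that since $K=\{\ket{\phi_i}\}_{i\in\NN}$ is monotone (this is the hypothesis, and by Theorem~\ref{theor:monot} it is automatically satisfied for computations), for each fixed $n\in\NN$ the real sequence $\{\prob{\ket{\phi_i}}{n}\}_{i\in\NN}$ is non-decreasing: indeed $\prob{\ket{\phi_i}}{n}\leq\prob{\ket{\phi_j}}{n}$ whenever $i<j$, by the definition of $\leq$ on PPDs together with $\pd{i}\leq\pd{j}$.

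Next I would note that this real sequence is bounded above by $1$, since each $\pd{\ket{\phi_i}}$ is a PPD and hence $\prob{\ket{\phi_i}}{n}\leq\sum_{m\in\NN}\prob{\ket{\phi_i}}{m}\leq 1$. By the completeness of $\mathbb{R}$ (the monotone convergence property for bounded monotone sequences of reals), the limit $l_n=\lim_{i\to\infty}\prob{\ket{\phi_i}}{n}$ exists, and moreover $l_n=\sup_i\prob{\ket{\phi_i}}{n}$. Since this holds for every $n$, the limit PPD $\pd{}$ with $\pd{}(n)=l_n$ exists by definition of limit of PPDs. Finally, I would invoke the description of the supremum $\lub\varP$ given in the preceding Proposition: $(\lub\varP)(n) = \lub\varP_n = \sup\{\prob{\ket{\phi_i}}{n} : i\in\NN\}$. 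Matching the two expressions pointwise, $\pd{}(n) = l_n = \sup_i\prob{\ket{\phi_i}}{n} = (\lub\varP)(n)$ for all $n$, hence $\lim_{i\to\infty}\pd{\ket{\phi_i}} = \pd{} = \lub\varP$.

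There is no real obstacle here; the only minor point requiring a little care is confirming that the pointwise supremum $\pd{}$ is actually a PPD (so that $\lub\varP$ is a legitimate element of $\PP$ and the statement type-checks), but this is covered by the cited Proposition asserting that every $\varP\subseteq\PP$ has a supremum in $\PP$. The proof is thus a short assembly of the monotone convergence theorem for reals, the definitions of $\leq$ and $\lim$ on PPDs, and the already-established formula for suprema of families of PPDs, exactly as the surrounding text anticipates when it calls the proof ``an easy consequence of the definition of limit for PPDs and of well known properties of limits of real-valued, non-decreasing sequences.''
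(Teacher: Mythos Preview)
Your proposal is correct and follows exactly the route the paper indicates: the paper does not spell out a proof, stating only that it is ``an easy consequence of the definition of limit for PPDs and of well known properties of limits of real-valued, non-decreasing sequences,'' and you have unpacked precisely those ingredients (pointwise monotonicity, boundedness by $1$, monotone convergence for reals, and the pointwise formula for $\lub\varP$).
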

From the lemma and Theorem~\ref{theor:monot} we have: 
\begin{corollary}\label{lemma:limComp}
Let $K=\{\ket{\phi_i}\}_{i\in\NN}$ be a computation, then the sequence 
$\{\pd{\ket{\phi_i}}\}_{i\in\NN}$ has limit  $\lim_{i\to \infty}\pd{\ket{\phi_i}}$ 
(denoted by $\lim K$).
\end{corollary}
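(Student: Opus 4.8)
\textbf{Proof plan for Corollary~\ref{lemma:limComp}.}

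The plan is to deduce the corollary by combining the two results that immediately precede it, so essentially no new work is required. First I would invoke Theorem~\ref{theor:monot} (monotonicity of computations): applied to the given computation $K=\{\ket{\phi_i}\}_{i\in\NN}$, it tells us that the associated sequence of PPDs $\pd{K}=\{\pd{\ket{\phi_i}}\}_{i\in\NN}$ is monotone, i.e.\ $\pd{\ket{\phi_i}}\leq\pd{\ket{\phi_j}}$ whenever $i<j$. This is the hypothesis needed to apply the preceding lemma.

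Next I would apply Lemma~\ref{lemma:limMonot} to this monotone sequence. The lemma guarantees that for a monotone sequence of q-configurations the sequence of PPDs has a limit, and moreover that this limit equals the supremum $\lub\varP$ of $\varP=\{\pd{\ket{\phi_i}}\}_{i\in\NN}$. (The supremum exists because, by the earlier proposition, every subset of $\PP$ has a supremum.) In particular $\lim_{i\to\infty}\pd{\ket{\phi_i}}$ exists, which is exactly the content of the corollary; we may then define $\lim K := \lim_{i\to\infty}\pd{\ket{\phi_i}}=\lub\varP$. One small point worth making explicit is that a computation $K_{\ket{\phi}}^M$ is by definition a sequence of q-configurations (each $\ket{\phi_i}=U_M^i\ket{\phi}$ lies in $\qConf$ since $U_M$ is unitary), so it is a legitimate input to Lemma~\ref{lemma:limMonot}.

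There is no real obstacle here: the only thing to check is that the monotonicity notion used in Theorem~\ref{theor:monot} (monotone sequence of PPDs) matches the hypothesis of Lemma~\ref{lemma:limMonot} (monotone sequence of q-configurations, interpreted through their associated PPDs), which is immediate from the definitions. Hence the proof is a two-line chaining of Theorem~\ref{theor:monot} and Lemma~\ref{lemma:limMonot}.
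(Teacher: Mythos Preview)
Your proposal is correct and follows exactly the paper's own argument: the corollary is stated immediately after Lemma~\ref{lemma:limMonot} with the one-line justification ``From the lemma and Theorem~\ref{theor:monot} we have,'' which is precisely the chaining you describe. Your observation about the terminological mismatch (monotone ``sequence of q-configurations'' in Lemma~\ref{lemma:limMonot} versus monotone sequence of PPDs in Theorem~\ref{theor:monot}) is apt and is indeed an informality in the paper that you resolve correctly.
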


The existence of the limit allows the following definition.
\begin{definition}[computed output of a QTM]
  The computed output of a QTM $M$ on the initial
  q-configuration $\ket{\phi}$ is the PPD
  $\pd{}=\lim K^M_{\ket{\phi}}$ (notation:
  $M_{\ket{\phi}}\to \pd{}$).     
\end{definition}

Let us note that a QTM always has a computed output.

%

\begin{definition}
  Given a QTM $M$, a q-configuration $\ket{\phi}$ is \emph{finite} if
  it is an element of $\mathsf{span(CB(\GConf_M))}$.
  A computation $K_{\ket{\phi}}^M=\{\ket{\phi_i}\}_{i\in \NN}$ is
  \emph{finitary} with computed output $\pd{}$ if there exists a $k$
  s.t. $\ket{\phi_k}$ is final and $\pd{\ket{\phi_k}}=\pd{}$.
\end{definition}

\begin{proposition}\label{prop:partiality}
  Let $K_{\ket{\phi}}^M=\{\ket{\phi_i}\}_{i\in \NN}$ be a finitary
  computation with  computed output $\pd{}$. Then:
  \begin{enumerate}
  \item There exists a $k$, such that for each $j\geq k$, $\ket{\phi_j}$ is final 
  and $\pd{\ket{\phi_j}}=\pd{}$;
  \item $M_{\ket{\phi}}\to \pd{}$;
  \item $\pd{}$ is a PD. 
  \end{enumerate}
\end{proposition}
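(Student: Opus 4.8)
The statement to prove is Proposition~\ref{prop:partiality}: a finitary computation $K_{\ket{\phi}}^M$ with computed output $\pd{}$ satisfies (1) $\ket{\phi_j}$ final and $\pd{\ket{\phi_j}}=\pd{}$ for all $j$ past some threshold $k$; (2) $M_{\ket{\phi}}\to\pd{}$; (3) $\pd{}$ is a PD. My plan is to extract the witness $k$ from the definition of ``finitary'' and then propagate finality forward using the persistence/stability results already established, and finally read off that the total mass is conserved under the unitary evolution.

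\emph{Step 1 (forward persistence of finality).} By definition of finitary there is a $k$ with $\ket{\phi_k}\in\qFConf$ and $\pd{\ket{\phi_k}}=\pd{}$. I would first argue that $U_M$ maps final q-configurations to final q-configurations: every base configuration $\ket{C}$ with $C\in\FConf$ has $q=q_f\in\Q_t$, so it lies in $\SConf{t}_M$ and $W_M(\ket{C})=\ket{C_{p,\bar a,R}}$ is again a (single) final base configuration by the form of $\delta_t$ and Proposition~\ref{prop:WM-welldef}(3). Hence $U_M\ket{\phi_k}\in\qFConf$, and by induction $\ket{\phi_j}=U_M^{j-k}\ket{\phi_k}$ is final for every $j\geq k$. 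Moreover $\pd{\ket{\phi_j}}=\pd{\ket{\phi_k}}$: by Lemma~\ref{lem:pers-final-conf}(1) (and Remark~\ref{rem:final-stable}) each final $C[m]$ appearing in $\ket{\phi_k}$ is carried to $C[m+(j-k)]$ with the same coefficient and $\val(C[m])=\val(C[m+(j-k)])$, and distinct final configurations stay distinct, so the sum $\sum_{C\in\FConf,\val[C]=n}|e_C|^2$ defining $\prob{\ket{\phi_j}}{n}$ is unchanged. This gives (1), with the same $k$.

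\emph{Step 2 (the limit).} Since $\pd{\ket{\phi_j}}$ is eventually constant equal to $\pd{}$ (Step 1), the sequence $\{\pd{\ket{\phi_i}}\}_{i\in\NN}$ converges to $\pd{}$; equivalently, invoke Corollary~\ref{lemma:limComp} for existence of the limit and note that a convergent sequence which is eventually constant has that constant as its limit. Hence $\lim K^M_{\ket{\phi}}=\pd{}$, i.e. $M_{\ket{\phi}}\to\pd{}$, which is (2).

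\emph{Step 3 (totality).} For (3) I would use unitarity of $U_M$ (Theorem~\ref{thm:time-evol-unitary}) together with Step~1: pick any $j\geq k$, so $\ket{\phi_j}=\sum_{C\in\FConf} e_C\ket{C}$ with all $C$ final, whence $\sum_{n\in\NN}\prob{\ket{\phi_j}}{n}=\sum_{C\in\FConf,\val[C]\in\NN}|e_C|^2=\sum_C|e_C|^2=\norm{\ket{\phi_j}}^2=\norm{U_M^j\ket{\phi}}^2=\norm{\ket{\phi}}^2=1$, using that $\val[C]$ is a well-defined natural number for every configuration so every nonzero term of $\ket{\phi_j}$ is counted exactly once. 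Therefore $\pd{}=\pd{\ket{\phi_j}}$ has total mass $1$ and is a PD. The only mild subtlety — and the one place to be careful — is Step~1's bookkeeping that the finite-configuration count behind $\prob{\ket{\phi_j}}{n}$ is genuinely preserved (no two branches collapse onto the same final configuration, and no ``new'' final configuration with the same $\val$ sneaks in), but this is exactly what Lemma~\ref{lem:pers-final-conf} and Remark~\ref{rem:final-stable} were set up to deliver, so no new work is needed.
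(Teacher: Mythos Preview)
Your proposal is correct and follows the same approach as the paper, which dispatches the whole proposition in one line (``Apply the definitions; $\pd{}\in\PP_1$ since $\ket{\phi_k}$ is final and has norm~1''). You have simply unpacked that line: Step~1 is the content of Lemma~\ref{lem:pers-final-conf}/Remark~\ref{rem:final-stable}, Step~2 is the trivial limit of an eventually constant sequence, and Step~3 is the norm-1 observation (you route it through unitarity, whereas the paper just uses that every $\ket{\phi_j}$ is a q-configuration and hence has unit norm by definition---but either justification is fine).
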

\begin{proof}
  Apply the definitions; $\pd{}\in\PP_1$ since $\ket{\phi_k}$ is
  final and has norm 1.
\end{proof}

With a computation $K_{\ket{\phi}}^M$, several cases may thus happen:
\begin{enumerate}
\item $K_{\ket{\phi}}^M$ is finitary. The output of the computation is a
  PD, which is determined after a finite number of steps;
\item $K_{\ket{\phi}}^M$ is not finitary, and
  $M_{\ket{\phi}}\to \pd{}\in\PP_1$. The output is determined
  as a limit;
\item $K_{\ket{\phi}}^M$ is not finitary, and
  $M_{\ket{\phi}}\to \pd{}\in\PP-\PP_1$ (the sum of the probabilities
  of observing natural numbers is $p<1$). Not only the result is
  determined as a limit, but we cannot extract a  PD from the
  output.
\end{enumerate} 

The first two cases above give rise to what
Definition~\ref{definition:qcf} calls a \emph{q-total} function.
Observe, however, that for an external observer, cases (2) and (3) are
in general indistinguishable, since at any finite stage of the
computation we may observe only a finite part of the computed output.

For some examples of QTMs and their computed output, see
Section~\ref{Sec:ExpressivePower}.

\subsection{Quantum partial computable functions}

We want our quantum computable functions to be defined over a natural
extension of the natural numbers.  Recall that, for any $n\in\NN$,
$\nstring{n}$ denotes the string $1^{n+1}$ and that
$\ket{\nstring{n}}=\ket{\langle\lambda,q_0,\nstring{n}\rangle}$. When
using a QTM for computing a function, we stipulate that initial
q-configurations are superpositions of initial classical
configurations of the shape $\ket{\nstring{n}}$. Such q-configurations
are naturally isomorphic to the space
$\ell^2_1= \left\{ \phi:\NN \rightarrow\CC \mid
  \sum_{n\in\NN}|\phi(n)|^2=1\right\}$
of square summable, denumerable sequences with unitary norm, under the
bijective mapping
$\nu(\sum d_k n_k) = \sum d_k
\ket{\nstring{n_k}}$.


\begin{definition}[partial quantum computable functions]\label{definition:qcf}
  \mbox{}
  \begin{enumerate}
  \item A function $f:\ell^2_1 \to \PP$ is \emph{partial quantum
      computable} (q-computable) if there exists a QTM $M$
    s.t. $f(\mathbf{x})=\pd{}$ iff $M_{\nu(\mathbf{x})}\to \pd{}$.
  \item A q-partial computable function $f$ is \emph{quantum total}
    (q-total) if for each $\mathbf{x}$, $f(\mathbf{x})\in\PP_1$.
  \end{enumerate}
  $\qcf$ is the \emph{class of partial quantum computable functions}.
\end{definition}



\section{Observables}\label{Sect:observables}

While the evolution of a closed quantum system (e.g., a QTM) is
reversible and deterministic once its evolution operator is known, a
(global) measurement of a q-configuration is an irreversible process,
which causes the collapses of the quantum state to a new state with a
certain probability. Technically, a measurement corresponds to a
projection on a subspace of the Hilbert space of quantum states. For
the sake of simplicity, in the case of QTMs, let us restrict to
measurements observing if a configuration belongs to the subspace
described by some set of configurations $\B$. The effect of
such a measurement is summarised by the following:

\medskip

\begin{quote}
  \noindent\textbf{Measurement postulate}\\
  Given a set of configurations $\B\subseteq\GConf$, a
  measurement observing if a quantum configuration
  $\ket{\phi}=\sum_{C\in\GConf} e_C \ket{C}$ belongs to the subspace
  generated by $\cb{\B}$ gives a positive answer with a
  probability $p=\sum_{C\in\B}|e_C|^2$, equal to the square
  of the norm of the projection of $\ket{\phi}$ onto
  $\ell^2(\B)$, causing at the same time a collapse of the
  configuration into the normalised projection
  $\sum_{C\in\B} p^{-1}e_C \ket{C}$; dually, it gives a negative answer
  with  probability $1-p=\sum_{C\not\in\B}|e_C|^2$ and a
  collapse onto the subspace $\ell^2(\GConf\setminus\B)$
  orthonormal to $\ell^2(\B)$, that is, into the normalised
  configuration $\sum_{C\not\in\B} (1-p)^{-1}e_C \ket{C}$.
\end{quote}

\medskip




Because of the \textit{irreversible} modification produced by any
measurement on the current configuration, and therefore on the rest of
the computation, we must  deal with the problem of how to read the
result of a computation. In other words, we need to establish some
protocol to observe when a QTM has eventually reached a final
configuration, and to read the corresponding  result.

\subsection{The approach of Bernstein and Vazirani}

We already discussed how \BeV's ``sensible'' QTMs are machines where
all the computations in superposition are in some sense terminating,
and reach the final state at the same time (are ``stationary'', in
their terminology).  More precisely, Definition~3.11 of~\BeV reads: \textit{"A final configuration of a QTM is any configuration in
  [final] state. If when QTM $M$ is run with input $x$, at time $T$
  the superposition contains only final configurations, and at any
  time less than $T$ the superposition contains no final
  configuration, then $M$ halts with running time $T$ on input $x$."}

This is a good definition for a theory of computational complexity
(where the problems are classic, and the inputs of QTMs are always
classic) but it is of little use for developing a theory of effective
quantum functions. Indeed, inputs of a B\&V-QTM \textit{must} be
classical---we cannot extend by linearity a B\&V-QTM on inputs in
$\ell_1^2$, since there is no guarantee whatsoever that on different
inputs the same QTM halts with the same running time.

\subsection{The approach of Deutsch}

Deutsch~\cite{Deu85} assumes that QTMs are enriched with a termination
bit $T$.  At the beginning of a computation, $T$ is set to $0$, and
the machine sets this termination bit to $1$ when it enters into a
final configuration. If we write $\ket{T=i}$ for the function that
returns $1$ when the termination bit is set to $i$, and $0$ otherwise,
a generic q-configuration of a Deutsch's QTM can be written as

\[\ket{\phi}=
  \ket{T=0}\otimes \sum_{C\not\in \FConf} e_C\ket{C} 
  +\ket{T=1}\otimes \sum_{D\in\FConf}d_D\ket{D}
\]

The observer periodically measures $T$ in a non destructive way (that
is, without modifying the rest of the state of the machine).
\begin{enumerate}
\item If the result of the measurement of $T$ gives the value $0$,
  $\ket{\phi}$ collapses (with a probability equal to
  $\sum_{C\not\in\FConf} |e_C|^2$) to the q-configuration
  $$\ket{\psi'}=\frac{\ket{T=0}\otimes 
    \sum_{C\not\in\FConf}
    e_C\ket{C}}{\sum_{C\not\in\FConf}|e_C|^2}$$
  and the computation continues with $\ket{\psi'}$.
\item If the result of the measurement of $T$ gives the value $1$,
  $\ket{\phi}$ collapses (with probability
  $\sum_{D\in\FConf} |d_D|^2$) to
  $$\ket{\psi''}= \frac{\ket{T=1}\otimes
    \sum_{D\in\FConf} d_D\ket{D}}{\sum_{D\in\FConf}
    |d_D|^2}$$
  and, immediately after the collapse, the observer makes a further
  measurement of the component
  $\dfrac{\sum_{D\in\FConf}d_D\ket{D}}{\sum_{D\in\FConf}|d_D|^2}$
  in order to read-back a final configuration.
\end{enumerate}

Note that  Deutsch's protocol (in a irreversible way) spoils at
each step the superposition of configurations.
The main point of Deutsch's approach is that a measurement must be
performed immediately after some computation enters into a final state. In
fact, since at the following step the evolution might lead the machine
to exit the final state modifying the content of the tape, we would
not be able to measure at all this output. In other words, either the
termination bit acts as a trigger that forces a measurement each time
it is set, or we perform a measurement after each step of the
computation.

\subsection{Our approach}

The measurement of the output computed by our QTMs can be performed
following a variant of Deutsch's approach. Because of the
particular structure of the transition function of our QTMs, we shall
see that we do not need  any additional termination bit, that a measurement
can be performed at any moment of the computation, and that indeed we
can perform several measurements at distinct points of the computation
without altering the result (in terms of the probabilistic distribution of
the observed output).

Given a q-configuration $\ket{\phi}=\ket{\phi_f}+\ket{\phi_{nf}}$,
where $\ket{\phi_f}\in\ell^2(\FConf)$ and
$\ket{\phi_{nf}}\in\ell^2(\GConf\setminus\FConf)$, our
\emph{output measurement} tries to get an output value from
$\ket{\phi}$ by the following procedure:
\begin{enumerate}
\item first of all, we observe the final states of $\ket{\phi}$,
  forcing the q-configuration to collapse either into the final
  q-configuration $\ket{\phi_f}/\norm{\ket{\phi_f}}$, or into the
  q-configuration $\ket{\phi_{nf}}/\norm{\ket{\phi_{nf}}}$, which does not
  contain any final configuration;
\item then, if the q-configuration collapses into
  $\ket{\phi_f}/\norm{\ket{\phi_f}}$, we observe one of these configurations,
  say $\ket{C}$, which gives us the observed output $\val[C]=n$,
  forcing the q-configuration to collapse into the final base q-configuration
  $(e_c/|e_c|)\ket{C}$;
\item otherwise, we leave unchanged the q-configuration
  $\ket{\phi_{nf}}/\norm{\ket{\phi_{nf}}}$ obtained after the first
  observation, and we say that we have observed the special value
  $\bot$.
\end{enumerate}

Summing up, an output measurement of $\ket{\phi}$ may lead to observe
an output value $n\in\NN$ associated to a collapse into a base final
configuration $\ket{C}\in\ket{\phi}$ s.t.\ $\val[\phi]=n$ or to
observe the special value $\bot$ associated to a collapse into
a q-configuration which does not contain any final configuration.

\begin{definition}[output observation]\label{def:out-obs}
  An \emph{output observation} with collapsed q-configuration
  $\ket{\psi}$ and \emph{observed output} $x\in\NN\cup\{\bot \}$ is
  the result of an output measurement of the q-configuration
  $\ket{\phi}=\sum_{C\in\GConf}e_C\ket{C}$. Therefore, it is a triple
  $\outobs{\phi}{x}{\psi}$ s.t.\
  \begin{enumerate}
  \item either $x=n\in\NN$, and
    $$\ket{\psi}=\frac{e_c}{|e_c|}\ket{C}
    \qquad\mbox{ with }\qquad
    \ket{C}\in\ket{\phi_{f}} 
    \mbox{ and }
    \val[C]=n$$
  \item or $x=\bot$, and
    $$\ket{\psi}=\frac{\ket{\phi_{nf}}}{\norm{\ket{\phi_{nf}}}}
    \qquad\mbox{ where }\qquad
    \ket{\phi_{nf}}=\sum_{C\not\in\FConf}e_C\ket{C}
    \mbox{ and }
    \norm{\ket{\phi_{nf}}} \neq 0$$
  \end{enumerate}
  The \emph{probability of an output observation} is defined by
  $$\pr{\outobs{\phi}{x}{\psi}} =
  \begin{cases}
    |e_C|^2 & \qquad \mbox{if } x=n\in\NN \\[+1ex]
    \norm{\ket{\phi_{nf}}}^2 & \qquad \mbox{if } x=\bot
  \end{cases}$$
\end{definition}

\begin{remark}\label{rem:obs-base-final}
  Let $e\outobs{C}{x}{\phi}$, with $C\in\FConf$ and
  $\val[C]=n$. By definition, $x=n$ and $\ket{\phi}=(e/|e|)\ket{C}$;
  moreover, $\pr{e\outobs{C}{x}{\phi}}=|e|^2$.
\end{remark}

\begin{remark}\label{rem:ortonorm-obs}
  For every distinct pair of output observations
  $\outobs{\phi}{x_1}{\psi_1}$ and $\outobs{\phi}{x_2}{\psi_2}$, we
  have that $\psi_1$ and $\psi_2$ are in the orthonormal subspaces generated by
  the two disjoint sets $\B_1,\B_2\subseteq\GConf$, where
  $\B_i=\{C\in\GConf\mid \ket{C}\in\ket{\psi_i}\}$.
\end{remark}

\begin{definition}[observed run]
  Let $M$ be a QTM and $U_M$ its time evolution operator. For any
  monotone increasing function $\tau:\NN\to\NN$ (that is,
  $\tau(i)<\tau(j)$ for $i<j$):
  \begin{enumerate}
  \item a \emph{$\tau$-observed run} of $M$ on the initial
    q-configuration $\ket{\phi}$ is a sequence
    $\{\ket{\phi_i}\}_{i \in \NN}$ s.t.:
    \begin{enumerate}
    \item $\ket{\phi_0} = \ket{\phi}$;
    \item $U_M\outobs{\phi_{h}}{x_i}{\phi_{h+1}}$, when $h=\tau(i)$ for
      some $i\in\NN$;
    \item $\ket{\phi_{h+1}}=U_M\ket{\phi_{h}}$ otherwise.
    \end{enumerate}
  \item A \emph{finite $\tau$-observed run} of length $k$ is any
    finite prefix of length $k$ of some $\tau$-observed run. Notation:
    if $R=\{\ket{\phi_i}\}_{i \in \NN}$, then
    $R[k] = \{\ket{\phi_i}\}_{i \leq k}$.
  \end{enumerate}
\end{definition}

\begin{remark}\label{rem:obs-out-run}
  We stress that, given an $R=\{\ket{\phi_i}\}_{i \in \NN}:$
  \begin{enumerate}
  \item either it never obtains a value $n\in\NN$ as the result of an
    output observation, and then it never reaches a final
    configuration;
  \item or it eventually obtains such a value collapsing the
    q-configuration into a base final configuration $u\ket{C}$ s.t.\
    $|u|=1$ and $\val[C]=n$, and from that point onward all the
    configurations of the run are base final configurations
    $u\ket{C_j}=u\,U^j\ket{C}$ s.t.\ $\val[C_j]=n$, and all the
    following observed outputs are equal to $n$ (see
    Remark~\ref{rem:obs-base-final}).
  \end{enumerate}
\end{remark}
\begin{definition}
  Let $R=\{\ket{\phi_i}\}_{i \in \NN}$ be a $\tau$-observed run.
  \begin{enumerate}
  \item The sequence $\{x_i\}_{i\in\NN}$ s.t.\
    $\outobs{\phi_{h}}{x_i}{\phi_{h+1}}$, with $h=\tau(i)$, is the
    \emph{output sequence} of the $\tau$-observed run $R$.
  \item The \emph{observed output} of $R$ is the value
    $x\in\NN\cup\{\bot\}$ (notation: $\obsout{R}{x}$) defined by:
    \begin{enumerate}
    \item $x=n\in\NN$, if $x_i=n$ for some $i\in\NN$;
    \item $x=\bot$ otherwise.
    \end{enumerate}
  \item For any $k$, the output sequence of the finite $\tau$-observed
    run $R[\tau(k)]$, is the finite sequence $\{x_i\}_{i\leq\tau(k)}$
    and $x_k$ is its observed output.
  \end{enumerate}
\end{definition}

\begin{definition}[probability of a run]
  Let $R=\{\ket{\phi_i}\}_{i \in \NN}$ be a $\tau$-observed run.
  \begin{enumerate}
  \item For $k\in\NN$, the probability of the finite $\tau$-observed
    run $R[k]$ is inductively defined by
    \begin{enumerate}
    \item $\pr{R[0]} = 1$;
    \item $\pr{R[k+1]} = 
      \begin{cases}
        \pr{R[k]}\,\pr{\outobs{\phi_k}{x_i}{\phi_{k+1}}}
        & \
        \parbox{16ex}{%
          when $k = \tau(i)$ for some $i\in\NN$ 
        }
        \\[+1.5em]
        \pr{R[k]} & \ \mbox{otherwise }
      \end{cases}$
    \end{enumerate}
  \item $\pr{R} = \lim_{k\to\infty}\pr{R[k]}$. 
  \end{enumerate}
\end{definition}

We remark that $\pr{R}$ is well-defined, since
$1\geq \pr{R[i]} \geq \pr{R[j]} > 0$, for every $i\leq j$. Therefore,
$$1\geq \pr{R} = \lim_{k\to\infty}\pr{R[k]}=\inf\{\pr{R[k]}\}_{k\in\NN}\geq 0.$$

\begin{remark}\label{rem:pr-runs}
  Let $R=\{\phi_i\}_{i\in\NN}$ be a $\tau$-observed run s.t.\
  $\obsout{R}{n}$, for some $n\in\NN$. As observed in
  Remark~\ref{rem:obs-out-run}, for some $k$, we have
  $\obsout{R[\tau(k)]}{\bot}$ and $\obsout{R[\tau(k)+1]}{n}$;
  moreover, for $i > k$, $\ket{\phi_i}=u\ket{C_i}$ with $|u|=1$, 
  $C_i\in\FConf$, and $\val[C_i]=n$. As a consequence,
  $\pr{R[k+1]}=\pr{R[i]}=\pr{R}$, for $i>k$ (since, by
  Remark~\ref{rem:obs-base-final},
  $\pr{\outobs{\phi_{\tau(i)}}{n}{\phi_{\tau(i)+1}}}=1$).
\end{remark}

\begin{definition}[observed computation]\label{def:obs-computation}
  The \emph{$\tau$-observed computation} of a QTM $M$ on the initial
  q-configuration $\ket{\phi}$, is the set
  $\K_{\ket{\phi},\tau}^M$ of the $\tau$-observed runs of $M$
  on $\ket{\phi}$ with the measure
  $\PR:\varP(\K_{\ket{\phi},\tau}^M)\to\CC$ defined by
  $$\Pr\B = \sum_{R\in\B}\pr{R}$$
  for every $\B\subseteq\K_{\ket{\phi},\tau}^M$.
\end{definition}

By $\K[k]_{\ket{\phi},\tau}^M$ we shall denote the
set of the finite $\tau$-observed runs of length $k$ of $M$ on
$\ket{\phi}$, with the measure $\PR$ on its subsets (see
Definition~\ref{def:obs-computation}).

It is immediate to observe that the set $\K_{\ket{\phi},\tau}^M$ naturally defines an infinite tree labelled with q-configurations where each infinite path starting from the root $\ket{\phi}$ correspond to a $\tau$-observed run in $\K_{\ket{\phi},\tau}^M$.

\begin{lemma}\label{lem:ortonorm-runs}
  Given $R_1,R_2\in\K_{\ket{\phi},\tau}^M$, with
  $R_1=\{\phi_{1,i}\}_{i\in\NN}\neq \{\phi_{2,i}\}_{i\in\NN}=R_2$,
  there is $k\geq 0$ s.t.\
  \begin{enumerate}
  \item $\phi_{1,i}=\phi_{2,i}$ for $i\leq \tau(k)$, that is,
    $R_1[\tau(k)]=R_2[\tau(k)]$;
  \item for $i>\tau(k)$, the q-configurations
    $\phi_{1,i}\neq \phi_{2,i}$ are in two orthonormal subspaces
    generated by two distinct subsets of $\GConf$.
  \end{enumerate}
\end{lemma}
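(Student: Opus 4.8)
The plan is to pinpoint the step at which $R_1$ and $R_2$ first diverge, show that it must be an observation step $\tau(k)$, and then track the supports of the two q-configurations from that step onward.

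Let $m=\min\{i\mid\phi_{1,i}\neq\phi_{2,i}\}$. Since $\phi_{1,0}=\phi_{2,0}=\ket{\phi}$ we have $m\geq 1$ and $\phi_{1,m-1}=\phi_{2,m-1}$; if the step $m-1\to m$ were the deterministic one (clause~(c) of the definition of $\tau$-observed run) we would get $\phi_{1,m}=U_M\phi_{1,m-1}=U_M\phi_{2,m-1}=\phi_{2,m}$, absurd, so $m-1=\tau(k)$ for some $k$ and clause~(b) applies there. This already yields item~(1). Write $\ket{\chi}=\phi_{1,\tau(k)}=\phi_{2,\tau(k)}$, with final/non-final decomposition $\ket{\chi}=\ket{\chi_f}+\ket{\chi_{nf}}$, and let $\ket{\psi_1},\ket{\psi_2}$ be the collapsed q-configurations produced by the two output observations of $\ket{\chi}$ performed in $R_1$ and $R_2$, so that $\phi_{j,\tau(k)+1}=U_M\ket{\psi_j}$. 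If $\ket{\psi_1}=\ket{\psi_2}$ then $\phi_{1,m}=\phi_{2,m}$, contradicting the choice of $m$; hence the two observations are distinct and, by Remark~\ref{rem:ortonorm-obs}, $\{C\mid\ket{C}\in\ket{\psi_1}\}\cap\{C\mid\ket{C}\in\ket{\psi_2}\}=\emptyset$. Moreover a $\bot$-observation of $\ket{\chi}$ always returns $\ket{\chi_{nf}}/\norm{\ket{\chi_{nf}}}$ (Definition~\ref{def:out-obs}), so the two observations cannot both have value $\bot$; relabelling $R_1,R_2$ if necessary, the observation in $R_1$ has value $n_1\in\NN$ and collapses $\ket{\chi}$ onto a base final q-configuration $u_1\ket{C_1}$ with $C_1\in\FConf$, $\val[C_1]=n_1$, $|u_1|=1$.

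\emph{Case~1: the observation in $R_2$ also has value in $\NN$.} Then $R_2$ collapses onto a base final q-configuration $u_2\ket{C_2}$, and the disjointness above forces $C_1\neq C_2$. Since $C_1,C_2$ are final, Remark~\ref{rem:obs-out-run}(2) gives $\phi_{1,i}=u_1\,U_M^{\,i-\tau(k)}\ket{C_1}$ and $\phi_{2,i}=u_2\,U_M^{\,i-\tau(k)}\ket{C_2}$ for every $i>\tau(k)$ (the later observations act trivially on base final q-configurations, which $U_M$ maps to base final q-configurations); these are base q-configurations, and since $U_M^{\,i-\tau(k)}$ is injective the configurations underlying $\phi_{1,i}$ and $\phi_{2,i}$ differ, so $\{C\mid\ket{C}\in\ket{\phi_{1,i}}\}$ and $\{C\mid\ket{C}\in\ket{\phi_{2,i}}\}$ are disjoint singletons.

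\emph{Case~2: the observation in $R_2$ has value $\bot$}, so $\ket{\psi_2}=\ket{\chi_{nf}}/\norm{\ket{\chi_{nf}}}\in\ell^2(\GConf\setminus\FConf)$. Exactly as above, $\phi_{1,i}=u_1\,U_M^{\,i-\tau(k)}\ket{C_1}$ for $i>\tau(k)$, and since $U_M$ sends a base final q-configuration $\ket{D}$ to $\ket{D'}$ with $D'\in\FConf$ carrying one more extra symbol than $D$ (definition of $\delta_t$), the configuration underlying $\phi_{1,i}$ is final and carries $e_1+(i-\tau(k))$ extra symbols, where $e_1\geq 0$ is the number of extra symbols of $C_1$. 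Dually, I claim that for every $i>\tau(k)$ each final configuration in the support of $\phi_{2,i}$ carries at most $i-\tau(k)-1$ extra symbols. For $i=\tau(k)+1$ this is Lemma~\ref{lem:pers-final-conf}(3a) applied to $\ket{\psi_2}\in\ell^2(\GConf\setminus\FConf)$ with $j=1$ (every final configuration is of the form $C[c]$ with $c$ its number of extra symbols, and such a $C[c]$ occurs in $U_M\ket{\psi_2}$ only if $c<1$). The bound propagates: a deterministic step raises the extra-symbol count of a final configuration by one and, by the same Lemma~\ref{lem:pers-final-conf}(3a), turns the non-final part into a q-configuration whose final configurations have no extra symbols; an observation at a time $\tau(k')\in(\tau(k),i)$ either replaces $\phi_{2,\tau(k')}$ by its non-final part followed by one $U_M$-step (resetting the bound to $0$) or freezes $R_2$ on a base final q-configuration built from a final configuration already within the bound, which then grows by one per step. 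Since $e_1+(i-\tau(k))>i-\tau(k)-1$, the final configuration underlying $\phi_{1,i}$ carries strictly more extra symbols than anything in the support of $\phi_{2,i}$, hence does not lie in it; as $\{C\mid\ket{C}\in\ket{\phi_{1,i}}\}$ is a singleton, the two supports are disjoint. In both cases, for every $i>\tau(k)$ the sets $\B_{j,i}=\{C\in\GConf\mid\ket{C}\in\ket{\phi_{j,i}}\}$ are disjoint and non-empty, so $\ell^2(\B_{1,i})$ and $\ell^2(\B_{2,i})$ are orthogonal subspaces generated by two distinct subsets of $\GConf$, $\phi_{j,i}\in\ell^2(\B_{j,i})$, and $\phi_{1,i}\neq\phi_{2,i}$, which is item~(2).

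The delicate point is Case~2. Unitarity of $U_M$ gives $\phi_{1,i}\perp\phi_{2,i}$ for free, but mere orthogonality does not place the two q-configurations in mutually orthogonal \emph{coordinate} subspaces — $U_M$ can turn disjointly supported q-configurations into overlapping ones — so one genuinely needs the extra-symbol bookkeeping of Lemma~\ref{lem:pers-final-conf}, together with the fact that the collapses of the two runs are synchronised at the times $\tau(i)$, to keep the steadily growing final configuration of the $\NN$-branch apart from whatever the $\bot$-branch produces.
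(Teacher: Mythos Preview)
Your argument is correct and follows the same route as the paper: locate the branching observation at time $\tau(k)$, note that at least one branch collapses to a base final configuration, and then separate the supports either by injectivity of $U_M$ (both branches final) or by the extra-symbol count from Lemma~\ref{lem:pers-final-conf} (one branch $\bot$). Your Case~2 is in fact more careful than the paper's terse version, since you explicitly track what happens at the later observation times $\tau(k')$ with $k'>k$ in the $\bot$-branch---a point the paper glosses over.
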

\begin{proof}
  Let $R_1[h]=R_2[h]$ be the longest common prefix of $R_1$ and
  $R_2$. Since they both starts with $\ket{\phi}$, such prefix is not
  empty; moreover, by the definition of $\tau$-observed run, it is
  readily seen that $h=\tau(k)$, for some $k$. By construction,
  $\phi_{1,h+1}\neq \phi_{2,h+1}$ and
  $\obsout{\phi_h}{x_j}{\phi_{j,h+1}}$, for $j=1,2$, with
  $\phi_h=\phi_{1,h}=\phi_{2,h}$. Moreover, at least
  one of the two q-configurations
  $\ket{\psi_{1,h+1}},\ket{\psi_{2,h+1}}$ is a final base
  q-configuration $u\ket{C_1}$; for instance, let
  $\ket{\psi_{1,h+1}}=u_1\ket{C_1}$.

  Let us take
  $\B_{a,i}=\{C\in\GConf\mid \ket{C}\in\ket{\psi_{a,i}}\}$,
  for $a=1,2$ and $i>h$. We prove that
  $\B_{1,i} \cap\B_{2,i}=\emptyset$, for $i>h$.
  First of all, this holds for $i=h+1$, by
  Remark~\ref{rem:ortonorm-obs}.  Then, we distinguish two cases:
  \begin{enumerate}
  \item $\ket{\psi_{2,h+1}}=u_2\ket{C_2}$ is a final base
    configuration. We have that
    $\ket{\psi_{1,i+1}}=u_1\,U^{i-h}\ket{C_1} \neq
    u_2\,U^{i-h}\ket{C_2}=\ket{\psi_{2,i+1}}$,
    for $a=1,2$ and $i\geq h$ (by Remark~\ref{rem:obs-out-run} and the
    fact that $U_M$ is injective, since we are already remarked that
    $C_1\neq C_2$).
  \item $\ket{\psi_{2,h+1}}\in\ell^2(\GConf\setminus \FConf)$.  Every
    $\ket{C}\in \ket{\psi_{2,h+1}}$ contains less than $i-h$ extra
    symbols, while $\ket{\psi_{1,i+1}}=u_1\,U^{i-h}\ket{C_1}$ contains
    at least $i-h$ extra symbols.  Therefore,
    $\B_{1,i+1}\cap\B_{2,i+1} =\{U^{i-h}\ket{C_1}\}\cap\B_{2,i+1}
    =\emptyset$.
  \end{enumerate}
\end{proof}


\begin{lemma}\label{lem:obs-reconstr-qconf}
  Let $K _{\ket{\phi}}^M=\{\phi_i\}_{i\in\NN}$ be the computation
  of the QTM $M$ on the initial q-configuration $\ket{\phi}$ and
  $\K_{\ket{\phi},\tau}^M$ the $\tau$-observed computation on
  the same initial configuration.  For every $k\in\NN$, we have
  that
  $$\ket{\phi_k} = \sum_{R\in\K[k]_{\ket{\phi},\tau}^M} \pr{R}\,\ket{\psi_{R}}$$
  where $\ket{\psi_R}$ is the last q-configuration of the finite run
  $R$ of length $k$.
\end{lemma}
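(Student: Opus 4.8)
The plan is to establish the stated identity $\ket{\phi_k}=\sum_{R\in\K[k]_{\ket{\phi},\tau}^M}\pr{R}\,\ket{\psi_R}$ by induction on $k$, reading $\K_{\ket{\phi},\tau}^M$ as the tree whose depth-$k$ nodes are the finite runs of length $k$, with $\ket{\psi_R}$ the label of the node $R$. The base case $k=0$ is immediate: $\K[0]_{\ket{\phi},\tau}^M$ consists of the single run $\{\ket{\phi}\}$, whose probability is $1$ and whose last q-configuration is $\ket{\phi}=\ket{\phi_0}$. For the inductive step I would distinguish, as in the definition of a $\tau$-observed run, a \emph{plain} step ($k\neq\tau(i)$ for all $i$) from an \emph{observation} step ($k=\tau(i)$).

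At a plain step every run $R$ of length $k$ has a unique extension $R'$ of length $k+1$ obtained by applying the time evolution operator, so $\pr{R'}=\pr{R}$ and $\ket{\psi_{R'}}=U_M\ket{\psi_R}$; since $\ket{\phi_{k+1}}=U_M\ket{\phi_k}$ and $U_M$ is linear and unitary (Theorem~\ref{thm:time-evol-unitary}), applying $U_M$ to the induction hypothesis propagates the identity unchanged. At an observation step the node $R$ with label $\ket{\psi_R}=\sum_{C}e_C\ket{C}$ branches into one child $R'$ for each output observation $\outobs{\psi_R}{x}{\psi_{R'}}$ of $\ket{\psi_R}$, and the whole argument reduces to the \emph{single-node reconstruction}: expressing $\ket{\psi_R}$ as the appropriate weighted sum of its children $\ket{\psi_{R'}}$, so that the depth-$k$ identity is lifted to depth $k+1$.

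The crux is thus the exact value of the per-observation coefficient. Unwinding Definition~\ref{def:out-obs}, a base-final outcome $x=n$ yields the child $\ket{\psi_{R'}}=(e_C/|e_C|)\ket{C}$ with $\pr{\outobs{\psi_R}{n}{\psi_{R'}}}=|e_C|^2$, so that the part of $\ket{\psi_R}$ collapsing to this child is $e_C\ket{C}=|e_C|\,\ket{\psi_{R'}}$; dually the outcome $x=\bot$ yields $\ket{\psi_{R'}}=\ket{\phi_{nf}}/\norm{\ket{\phi_{nf}}}$ with probability $\norm{\ket{\phi_{nf}}}^2$, collapsing from the part $\ket{\phi_{nf}}=\norm{\ket{\phi_{nf}}}\,\ket{\psi_{R'}}$. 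In every case the coefficient that multiplies $\ket{\psi_{R'}}$ is the \emph{amplitude} $\pr{\outobs{\psi_R}{x}{\psi_{R'}}}^{1/2}$, with the phase absorbed into the normalised child label, and these amplitudes multiply along the branch to give $\pr{R}^{1/2}$. This is precisely the delicate point of the statement: the weight delivered by the collapse is the amplitude, so matching the displayed factor $\pr{R}$ forces one to read it as $\pr{R}^{1/2}$. I expect this coefficient check to be the main obstacle, since using the bare probability $\pr{R}$ in place of $\pr{R}^{1/2}$ would already break the equality at the first observation applied to a genuine superposition, where $|e_C|^2\neq|e_C|$.

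Finally, to upgrade the node-by-node computation to a true identity in $\ell^2(\GConf)$ I would lean on orthogonality. At a single node the children occupy mutually orthonormal subspaces (Remark~\ref{rem:ortonorm-obs}), so the branching is an honest orthogonal decomposition of $\ket{\psi_R}$; globally, distinct runs of length $k$ live in orthonormal subspaces generated by disjoint subsets of $\GConf$ (Lemma~\ref{lem:ortonorm-runs}), which rules out any interference or cancellation between the summands and guarantees that the finite sum over $\K[k]_{\ket{\phi},\tau}^M$ reconstructs $\ket{\phi_k}$ coefficient by coefficient. Combining the plain-step propagation, the single-node reconstruction, and this orthogonality closes the induction.
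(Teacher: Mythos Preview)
Your approach is the same as the paper's: induction on $k$, splitting the step $k\to k+1$ into the non-observation case (bijection between $\K[k]$ and $\K[k+1]$, apply $U_M$ and use linearity) and the observation case (each run $R$ branches into the children indexed by the output observations of $U_M\ket{\psi_R}$, and one checks the single-node decomposition $U_M\ket{\psi_R}=\sum_{R'}c_{R'}\ket{\psi_{R'}}$ directly from Definition~\ref{def:out-obs}). You are also right about the coefficient: the weight coming out of the collapse is the amplitude, so the factor in the displayed identity must be read as $\sqrt{\pr{R}}$; the paper's own proof in fact writes the induction hypothesis with $\sqrt{\pr{R}}$ and then (inconsistently) drops the square root in the remaining displays, so the discrepancy you flag is a typo in the statement rather than a genuine obstacle.

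One small difference: you close the induction by appealing to global orthogonality of distinct runs (Lemma~\ref{lem:ortonorm-runs}) to ``rule out interference.'' The paper does not need this here: once the single-node identity is established, the passage from depth $k$ to depth $k+1$ is a pure algebraic substitution and rearrangement of finite sums, with $\bigcup_{R}\B_R=\K[k+1]$; no orthogonality is required for the equality itself. Lemma~\ref{lem:ortonorm-runs} is invoked only in the subsequent theorem, where one must compute a norm. Your extra step is harmless but superfluous for this lemma.
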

\begin{proof}
  By definition, $\phi=\phi_0$ and $R=\{\phi\}$ with $\pr{R}=1$ and
  $\psi_R=\phi$, is the only run of length $0$ in
  $\K_{\ket{\phi},\tau}^M$. Therefore, the assertion
  trivially holds for $k=0$. 

  Let us then prove the assertion by induction on $k$. By definition
  and the induction hypothesis
  $$\ket{\phi_{k+1}} = U_M\ket{\phi_k} = 
  \sum_{R\in\K[k]_{\ket{\phi},\tau}^M} \sqrt{\pr{R}}\,U_M\ket{\psi_R}$$
  We have two possibilities:
  \begin{enumerate}
  \item $k\neq\tau(i)$ for any $i$. In this case, there is a bijection
    between the runs of length $k$ and those of length $k+1$, since
    each run $R'\in\K[k+1]_{\ket{\phi},\tau}^M$ is obtained
    from a path $R\in\K[k]_{\ket{\phi},\tau}^M$ with last
    q-configuration $\ket{\psi_R}$, by appending to $R$ the
    q-configuration $\ket{\psi_{R'}}=U_M\ket{\psi_R}$. Moreover, since
    by definition, $\pr{R'}=\pr{R}$, we can conclude  that
    $$\ket{\phi_{k+1}} = 
    \sum_{R\in\K[k]_{\ket{\phi},\tau}^M} \pr{R}\,U_M\ket{\psi_R} =
    \sum_{R'\in\K[k+1]_{\ket{\phi},\tau}^M} \pr{R'}\ket{\psi_{R'}}$$
  \item $k=\tau(i)$, for some $i$. In this case, every
    $R\in\K[k]_{\ket{\phi},\tau}^M$ with last q-configuration
    $\ket{\psi_R}$ generates a run $R'$ of length $k+1$ for every
    output observation $\outobs{\psi_R}{x}{\psi_{R'}}$, where $R'$ is
    obtained by appending $\ket{\psi_{R'}}$ to $R$. Therefore, let
    $R=\{\ket{\psi_i}\}_{i\leq k} \in
    \K[k]_{\ket{\phi},\tau}^M$ and
    $$\B_R = \{ \{\psi_i\}_{i\leq k+1} \mid
    \outobs{\psi_k}{x}{\psi_{k+1}}\}$$
    by applying Definition~\ref{def:out-obs}, we easily check that
    $$U_M\ket{\psi_R} = \sum_{R'\in\B_R}\pr{\outobs{\psi_R}{x}{\psi_{R'}}}\ket{\psi_{R'}}$$
    Thus, by substitution, and
    $\pr{R}\,\pr{\outobs{\psi_R}{x}{\psi_{R'}}}=\pr{R'}$
    \begin{align*}
      \ket{\phi_{k+1}} 
      & = 
        \sum_{R\in\K[k]_{\ket{\phi},\tau}^M} \pr{R}\,U_M\ket{\psi_R} 
      \\
      & =
        \sum_{R\in\K[k]_{\ket{\phi},\tau}^M}
        \sum_{R'\in\B_R}
        \pr{R}\,\pr{\outobs{\psi_R}{x}{\psi_{R'}}}\ket{\psi_{R'}}
      \\
      & = 
        \sum_{R'\in\,\bigcup_{R\in\K[k]_{\ket{\phi},\tau}^M}\B_R} \pr{R'}\ket{\psi_{R'}}
      \\
      & = 
        \sum_{R'\in\K[k+1]_{\ket{\phi},\tau}^M} \pr{R'}\ket{\psi_{R'}}
    \end{align*}
    since
    $\bigcup_{R\in\K[k]_{\ket{\phi},\tau}^M}\B_R=\K[k+1]_{\ket{\phi},\tau}^M$.
  \end{enumerate}

\end{proof}

We are finally in the position to prove that our observation protocol
is compatible with the probability distributions that we defined as
computed output of a QTM computation.

\begin{theorem}
  Let $K _{\ket{\phi}}^M=\{\phi_i\}_{i\in\NN}$ be the computation
  of the QTM $M$ on the initial q-configuration $\ket{\phi}$ and
  $\K_{\ket{\phi},\tau}^M$ the $\tau$-observed computation on
  the same initial configuration. For every $n\in\NN$:
  \begin{enumerate}
  \item
    $\pd{\ket{\phi_k}}(n) = \pr{R\in
      \K[k]_{\ket{\phi},\tau}^M \mid \obsout{R}{n}}$,
    for every $k=\tau(i)$, with $i\in\NN$;
  \item $\pd{K _{\ket{\phi}}^M}(n) = \pr{R\in \K_{\ket{\phi},\tau}^M \mid \obsout{R}{n}}$.
  \end{enumerate}
\end{theorem}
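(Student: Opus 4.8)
The plan is to derive both items from the reconstruction Lemma~\ref{lem:obs-reconstr-qconf}, which writes the unobserved step-$k$ q-configuration as $\ket{\phi_k}=\sum_{R\in\K[k]_{\ket{\phi},\tau}^M}\sqrt{\pr{R}}\,\ket{\psi_R}$ (with the square root as in the proof of that lemma, so that the right-hand side is a unit vector), together with the orthogonality Lemma~\ref{lem:ortonorm-runs}. First I would note that two \emph{distinct} finite runs of the same length $k$ have last q-configurations supported on disjoint subsets of $\GConf$: extending them to distinct infinite runs $R_1\neq R_2$ and applying Lemma~\ref{lem:ortonorm-runs}, they coincide up to some observation time $\tau(m)<k$, and from index $\tau(m)+1$ on --- hence at index $k$ --- their q-configurations live in orthonormal subspaces generated by disjoint subsets of $\GConf$. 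Therefore the coefficient of a final configuration $\ket{C}$ in $\ket{\phi_k}$ equals $\sqrt{\pr{R}}$ times the coefficient of $\ket{C}$ in $\ket{\psi_R}$, for the unique $R\in\K[k]_{\ket{\phi},\tau}^M$ whose support contains $C$ (and $0$ if there is none); squaring and summing over all final $C$ with $\val[C]=n$ gives
\[
\pd{\ket{\phi_k}}(n)=\sum_{R\in\K[k]_{\ket{\phi},\tau}^M}\pr{R}\;\pd{\ket{\psi_R}}(n).
\]

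For item (1), with $k=\tau(i)$ an observation time, it remains to evaluate $\pd{\ket{\psi_R}}(n)$. Here Remark~\ref{rem:obs-out-run} is decisive: along a $\tau$-observed run, as soon as a value $m\in\NN$ is observed the q-configuration has collapsed to a norm-one base final configuration $u\ket{C}$ with $\val[C]=m$, and it stays such a base final configuration at every later observation time; if no value has yet been observed, the q-configuration at the observation time lies entirely in $\ell^2(\GConf\setminus\FConf)$ and $\obsout{R}{\bot}$. Consequently, at $k=\tau(i)$ one has $\pd{\ket{\psi_R}}(n)=1$ when $\obsout{R}{n}$ and $\pd{\ket{\psi_R}}(n)=0$ otherwise; substituting this into the displayed identity yields $\pd{\ket{\phi_k}}(n)=\sum_{R\in\K[k]_{\ket{\phi},\tau}^M,\ \obsout{R}{n}}\pr{R}=\pr{R\in\K[k]_{\ket{\phi},\tau}^M\mid\obsout{R}{n}}$, which is item (1).

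For item (2) I would first pass to the limit along the cofinal subsequence $\{\tau(i)\}_{i\in\NN}$. By Theorem~\ref{theor:monot} the sequence $\{\pd{\ket{\phi_k}}(n)\}_{k\in\NN}$ is non-decreasing, and by Corollary~\ref{lemma:limComp} it converges to $\pd{K_{\ket{\phi}}^M}(n)$; since $\tau$ is strictly increasing, $\tau(i)\to\infty$ and the subsequence has the same limit, so item (1) gives $\pd{K_{\ket{\phi}}^M}(n)=\lim_{i\to\infty}\pr{R\in\K[\tau(i)]_{\ket{\phi},\tau}^M\mid\obsout{R}{n}}$. It then suffices to identify this limit with $\pr{R\in\K_{\ket{\phi},\tau}^M\mid\obsout{R}{n}}$. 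By Remark~\ref{rem:pr-runs}, once a run has observed $n$ it continues as a single deterministic branch whose probability no longer changes; hence, for each $i$, sending a finite run $\rho$ of length $\tau(i)$ with $\obsout{\rho}{n}$ to its unique infinite extension is a probability-preserving bijection onto the set $\B_i$ of infinite runs $R$ with $\obsout{R[\tau(i)]}{n}$, and the $\B_i$ increase to $\{R\in\K_{\ket{\phi},\tau}^M\mid\obsout{R}{n}\}$. Since all summands are non-negative, $\pr{R\in\K_{\ket{\phi},\tau}^M\mid\obsout{R}{n}}=\lim_{i\to\infty}\sum_{R\in\B_i}\pr{R}=\lim_{i\to\infty}\pr{R\in\K[\tau(i)]_{\ket{\phi},\tau}^M\mid\obsout{R}{n}}$, which is item (2).

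The step I expect to be the main obstacle is the evaluation of $\pd{\ket{\psi_R}}(n)$ at an observation time in item (1): one has to be sure that a run whose observed output is still $\bot$ contributes nothing to the probability distribution at step $\tau(i)$, even though applying $U_M$ to a non-final q-configuration may bring fresh final configurations into superposition. This is exactly where the shape of the transition function is used --- a freshly final configuration carries no extra symbols, so the output measurement at time $\tau(i)$ either leaves the q-configuration purely non-final or collapses it to a \emph{single} base final configuration --- and Lemma~\ref{lem:pers-final-conf}, Lemma~\ref{lem:ortonorm-runs}, and Remark~\ref{rem:obs-out-run} are precisely what package this into the form used above. The limit argument for item (2) is routine once item (1) and the stabilisation of $\pr{R[\tau(i)]}$ (Remark~\ref{rem:pr-runs}) are in hand.
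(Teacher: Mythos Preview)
Your proposal is correct and follows essentially the same route as the paper: Lemma~\ref{lem:obs-reconstr-qconf} for the decomposition of $\ket{\phi_k}$, Lemma~\ref{lem:ortonorm-runs} for the disjointness of supports, the dichotomy at observation times (base final versus purely non-final) via Remark~\ref{rem:obs-out-run}, and then the limit along $\tau(i)$ using the probability-preserving bijection from Remark~\ref{rem:pr-runs}. The only organisational difference is that you first derive the general identity $\pd{\ket{\phi_k}}(n)=\sum_R \pr{R}\,\pd{\ket{\psi_R}}(n)$ and then specialise to $k=\tau(i)$, whereas the paper works directly at observation times; your remark that the coefficient in Lemma~\ref{lem:obs-reconstr-qconf} should be $\sqrt{\pr{R}}$ is well taken and matches how the paper itself uses it in the proof of the theorem.
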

\begin{proof}
  By Lemma~\ref{lem:obs-reconstr-qconf}, we know that
  $\ket{\phi_k} = \sum_{R\in\K[k]_{\ket{\phi},\tau}^M}
  \pr{R}\,\ket{\psi_{R}}$,
  where $\ket{\psi_R}$ is the last q-configuration of $R$. Since
  $k=\tau(i)$, for some $i$, we also know that either
  $\psi_R\in\GConf\setminus\FConf$ or $\ket{\psi_R}=u_R\ket{C_R}$
  with $C_R\in\FConf$ and $|u_R|=1$. Therefore,
  $$\pd{\ket{\phi_k}}(n)=
  \norm{\sum_{R\in\B[k,n]} \sqrt{\pr{R}}\,u_R\ket{C_{R}}}^2$$
  where
  \begin{align*}
    \B[k,n] 
    & = 
      \{R\in \K[k]_{\ket{\phi},\tau}^M \mid \obsout{R}{n}\}
    \\
    & = 
      \{R\in\K[k]_{\ket{\phi},\tau}^M \mid
      \ket{\psi_R}=u_R\ket{C_R} \mbox{ with } \val[C_R]=n\}
  \end{align*}
  
  By Lemma~\ref{lem:ortonorm-runs}, we know that for every
  $R_1,R_2\in\B[k,n]$, we have
  $\ket{C_{R_1}}\neq\ket{C_{R_2}}$. Therefore
  $$\pd{\ket{\phi_k}}(n)=
  \sum_{R\in\B[k,n]} \pr{R}|\,u_R|^2 =
  \sum_{R\in\B[k,n]} \pr{R} =\ \PR \B[k,n]$$
  since $|u_R|=1$. Which concludes the proof of the first item of the
  assertion.

  In order to prove the second item, let
  $\B[\omega,n] = \{R\in \K_{\ket{\phi},\tau}^M \mid
  \obsout{R}{n}\}$. We have that
  \begin{multline*}
    \pr{R\in \K_{\ket{\phi},\tau}^M \mid \obsout{R}{n}}
    = \sum_{R\in \B[\omega,n]}\pr{R}
    = \lim_{k\to\infty} \sum_{\dind{R\in\B[\omega,n]}{\obsout{R[\tau(k)]}{n}}} \pr{R}
    \\ 
    {}^{(*)}= \lim_{k\to\infty} 
        \sum_{\dind{R\in\B[\omega,n]}{\obsout{R[\tau(k)]}{n}}} \pr{R[\tau(k)]}
    \\
    {}^{(**)}= \lim_{k\to\infty}\sum_{R'\in \B[\tau(k),n]}\pr{R'}
    = \lim_{k\to\infty} \PR\, \B[\tau(k),n]
  \end{multline*}
  since: $(*)$ $\pr{R}=\pr{R[\tau(k)]}$, when $\obsout{R[\tau(k)]}{n}$ (see
  Remark~\ref{rem:pr-runs}); $(**)$ there is a bijection between the
  sets $S[\tau(k),n]$ and
  $\{R\in \B[\omega,n] \mid \obsout{R[\tau(k)]}{n}\}$ (see
  Remark~\ref{rem:obs-out-run}) mapping every
  $R'\in\B[\tau(k),n]$ with last q-configuration $u\ket{C}$
  into
  $R = \{\psi_{R,i}\}_{i\in\NN}\in \{R\in \B[\omega,n]
  \mid \obsout{R[\tau(k)]}{n}\}$
  s.t\ $R'=R[\tau(k)]$ and
  $\ket{\psi_{R,i}}=u\,U_M^{i-\tau(k)}\ket{C}$ for $i\geq \tau(k)$.
  
  Therefore, by the (already proved) first item of the assertion
  $$\pr{R\in \K_{\ket{\phi},\tau}^M \mid \obsout{R}{n}}
  = \lim_{k\to\infty}\PR\, \B[\tau(k),n]
  = \lim_{k\to\infty}\pd{\ket{\phi_{\tau(k)}}}(n)
  = \pd{K _{\ket{\phi}}^M}(n)$$
\end{proof}

\section{Remarks on the expressive power}\label{Sec:ExpressivePower}

%
%
%
%

\subsection{Computable configurations}

Since QTMs represent (ideal) physically realisable devices, we should
constrain the complex numbers in the time evolution operator to be
computable (see, e.g., Remark 9.2 in \cite{kitaev} 
for a discussion).

\begin{definition}[computable numbers]\label{Def-computable-real}
  A real number $x$ is computable if there exists a
  deterministic Turing machine that on input $\nstring{n}$ computes a binary
  representation of an integer $m\in\ZZ$ such that $|\frac{m}{2^n} -
  x|\leq \frac{1}{2^n}$.

  The computable complex numbers $\CoC$ are those complexes whose real
  and imaginary parts are both computable.
\end{definition}

\begin{definition}[computable QTM]\label{Def-computable-qtm}
  A QTM is \emph{computable} iff 
  for any $(q,a)\in \varS_0$ and every $(p,b,d)\in \T_0\times\DD$, we have $\delta_0(q,a)(p,b,d)\in\CoC$.
\end{definition}

Observe that, being $(\Q_0\cup \Q_t) \times \Sigma \times \DD$ finite, in a computable QTM
the element $\delta_0(q,a)\in \ell^2((\Q_0\cup \Q_t) \times \Sigma \times \DD)$ may be ``effectively presented'' in an obvious way.

We postpone to a subsequent paper a full treatment of computable QTMs,
and especially of the computability theory they may engender.
We make here only some simple, preliminary remarks. 


First, computable QTMs form a recursive enumerable class. Indeed, 
any complex number $e\in\CoC$ may be described by the
index of the (classical) TM computing it (write $\llceil e\rrceil\in\NN$ for this).
Moreover, for any $(q,a)\in \varS_0$ we have a classical TM enumerating  $\delta_0(q,a)$ 
(that is, producing the family of the indexes $\llceil e\rrceil$ of the TMs 
computing the amplitudes).

Second, the time evolution operator $U_M$ of a computable QTM defines
a classically computable function on the (code of) quantum
configurations. We spell this out in case of finite
q-configurations\footnote{The argument may be generalised to infinite
  q-configurations---that is, q-configurations in which there is an
  infinite number of non-zero configurations in
  superpositions---provided these infinite superpositions are
  recursively enumerable.}.

Let us extend the alphabet $\Sigma$ of the QTM $M$ into
$\Sigma_c=\Sigma\cup \Q\cup\{\langle,\rangle\,\star\}$.  Any
finite q-configuration
$\ket{\phi}=\sum_{i=1}^n e_i \ket{\langle\alpha_i,q^i,\beta_i\rangle}$
may be coded by the string
$$\llceil \ket{\phi}\rrceil = \star\nstring{\llceil e_1 \rrceil}\langle\alpha_1 q^1\beta_1\rangle\star\cdots
\star\nstring{\llceil e_n \rrceil}\langle\alpha_n q^n\beta_n\rangle\star\in\Sigma_c^*.$$
Let us denote by $\llceil {\qConf_M}\rrceil$ the set of such codes of finite
q-configurations. 


The function ${U^c_M}:\llceil {\qConf_M}\rrceil\to
\llceil {\qConf_M}\rrceil $ defined by:
$$
{U^c_M}(\star\nstring{\llceil e_1\rrceil}\langle\alpha_1
q^1\beta_1\rangle\star\cdots \star\nstring{\llceil e_n\rrceil}\langle\alpha_n
q^n\beta_n\rangle\star)=\llceil U_M(\sum_{i=1}^ne_i\ket{\langle\alpha_i,q^i,\beta_i\rangle})\rrceil
$$
is intuitively computable, and therefore, by Church's thesis, is
computable.  Therefore it is only a matter of routine to prove the
following theorem:

\begin{theorem}[classical soundness]
  Let $M$ be a computable QTM such that for each finite input
  $\ket{\phi}$ the corresponding computation is finitary. There is a
  classical partial computable function
  $$\mathsf{Comp}_M:
  \llceil {\qConf_M}\rrceil \partialto \llceil {\qConf_M}\rrceil$$
  s.t.  for each finite $\ket{\phi}\in\cqConf{}$,
  $M_{\ket{\phi}}\to \pd{}$ iff there is $\ket{\psi}$
  s.t. $\pd{}=\pd{\ket{\psi}}$ and
  $\mathsf{Comp}_M(\llceil\ket{\phi} \rrceil)
  =\llceil\ket{\psi}\rrceil $.
\end{theorem}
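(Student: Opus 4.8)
The plan is to realise $\mathsf{Comp}_M$ as the procedure ``iterate the one-step operator $U_M^c$ until a final q-configuration shows up, then return it''. Recall from the discussion preceding the statement that, $M$ being computable, $U_M^c:\llceil{\qConf_M}\rrceil\to\llceil{\qConf_M}\rrceil$ is total and classically computable, and that it sends a code $\star\nstring{\llceil e_1\rrceil}\langle\alpha_1 q^1\beta_1\rangle\star\cdots\star\nstring{\llceil e_n\rrceil}\langle\alpha_n q^n\beta_n\rangle\star$ of a finite q-configuration $\ket{\psi}$ to a code of the same syntactic shape for $U_M\ket{\psi}$ in which no listed configuration has coefficient $0$ (this ``honesty'' of $U_M^c$ is the one delicate point; see the last paragraph). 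From such a code one reads off the states of the configurations it lists, so the predicate ``the coded q-configuration is final'' --- every such state equals $q_f$ --- is decidable. Define $\mathsf{Comp}_M$ on input $w_0=\llceil\ket{\phi}\rrceil$ to compute $w_1=U_M^c(w_0),\,w_2=U_M^c(w_1),\dots$, to test each $w_i$ with this predicate, and to output $w_k$ at the first index $k$ for which the test succeeds (and to diverge if it never does). This is an effective procedure, so $\mathsf{Comp}_M$ is partial computable by Church's thesis; and, by the defining equation of $U_M^c$ and a trivial induction, $w_i=\llceil\ket{\phi_i}\rrceil$, where $\ket{\phi_i}=U_M^i\ket{\phi}$ is the $i$-th q-configuration of the computation $K_{\ket{\phi}}^M$.

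Correctness is then immediate from results already available. Fix a finite initial q-configuration $\ket{\phi}$; by hypothesis $K_{\ket{\phi}}^M$ is finitary, so there is a least $k$ with $\ket{\phi_k}$ final, and the search halts precisely at this $k$, returning $\llceil\ket{\phi_k}\rrceil$; set $\ket{\psi}=\ket{\phi_k}$. By Proposition~\ref{prop:partiality} (which rests on the monotonicity Theorem~\ref{theor:monot}) we have $\pd{\ket{\phi_j}}=\pd{\ket{\phi_k}}$ for every $j\ge k$ and $M_{\ket{\phi}}\to\pd{\ket{\phi_k}}=\pd{\ket{\psi}}$. Since a QTM always has a unique computed output, this establishes the claimed equivalence with this $\ket{\psi}$: $M_{\ket{\phi}}\to\pd{}$ holds iff $\mathsf{Comp}_M(\llceil\ket{\phi}\rrceil)$ is a code of some $\ket{\psi}$ with $\pd{\ket{\psi}}=\pd{}$.

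The only genuine obstacle --- everything else being routine bookkeeping, as the authors note --- is the \emph{honesty} of $U_M^c$, i.e.\ that its output codes list no configuration with coefficient $0$: were such spurious configurations retained (they can arise, with vanishing amplitude, through quantum interference), the syntactic finality test could fail at every step even though $\ket{\phi_k}$ is final, and $\mathsf{Comp}_M$ would diverge on a finitary input. Since deciding whether a finite sum of products of elements of $\CoC$ equals $0$ is not decidable in general, honesty is not automatic, and it must be folded into the construction of $U_M^c$, which the preceding text leaves at the level of Church's thesis. For a fixed $M$ it is nonetheless attainable: one hard-wires into $\mathsf{Comp}_M$ which of the finitely many fixed transition amplitudes of $M$ vanish and, whenever these (and the coefficients of the inputs one cares about) support exact equality testing --- e.g.\ for algebraic amplitudes, as in all standard examples --- propagates the compound amplitudes symbolically so that cancellations are recognised and zero-coefficient terms dropped. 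Equivalently one keeps the halting condition semantic, ``stop at the first $i$ with $\ket{\phi_i}$ final'', which by finitariness of $K_{\ket{\phi}}^M$ is met after finitely many steps, the codes serving merely as faithful carriers of the $\ket{\phi_i}$.
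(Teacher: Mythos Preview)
Your approach is exactly what the paper intends: the paper gives no proof at all, merely declaring the result ``a matter of routine'' after observing that $U_M^c$ is classically computable. Your construction---iterate $U_M^c$ and halt at the first syntactically final code---is the natural unpacking of that remark, and your appeal to Proposition~\ref{prop:partiality} for correctness is the right one.

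Where you go beyond the paper is in isolating the \emph{honesty} issue: that equality to zero in $\CoC$ is undecidable, so spurious zero-amplitude terms produced by interference cannot in general be pruned, and the syntactic finality test may diverge on a semantically final q-configuration. This is a genuine gap that the paper simply does not address. Your proposed remedies, however, do not close it. Hard-wiring which \emph{transition} amplitudes vanish is not enough, since what must be tested for zero are polynomial combinations of transition amplitudes \emph{and} the (arbitrary computable) input amplitudes $e_i$; restricting to algebraic amplitudes changes the hypotheses of the theorem. And your ``equivalently, keep the halting condition semantic'' is circular: deciding the semantic condition is precisely the undecidable zero test you are trying to avoid. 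So you have correctly diagnosed a lacuna that the paper shares, but your sketch does not repair it; at the level of rigour the paper adopts, your argument is on a par with---indeed more scrupulous than---theirs.
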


\subsection{A comparison with Bernstein and Vazirani's QTMs: part 2}

In view of Theorem~\ref{theor:fromBeVtoOur}, we may say that our QTMs
generalise B\&V-QTMs, which may be simulated. The general framework,
however, is substantially modified and the ``same'' machine behaves in
different ways in the two approaches. We give two simple examples of
this, before concluding the paper.

\begin{figure}[htb]
  \begin{center}
    \scalebox{0.90}
    {
      \includegraphics{./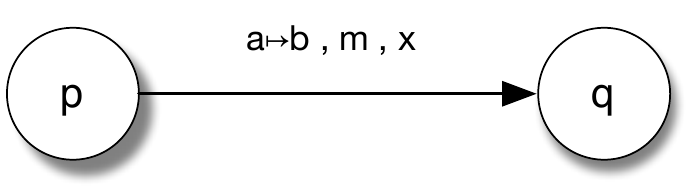} 
    }
  \end{center}
  \caption{The transition function $\delta$}\label{fig:uno}
\end{figure}
 
In this section we shall use a pictorial representation of QTMs, via a
graph for the transition function $\delta$.  When
$\delta(q,a)(p,b,d)=x$ with $x\neq 0$ we draw the labelled arc as in
Figure~\ref{fig:uno}.

\begin{figure}[htb]
  \begin{center}
    \scalebox{0.70}{%
      \includegraphics{./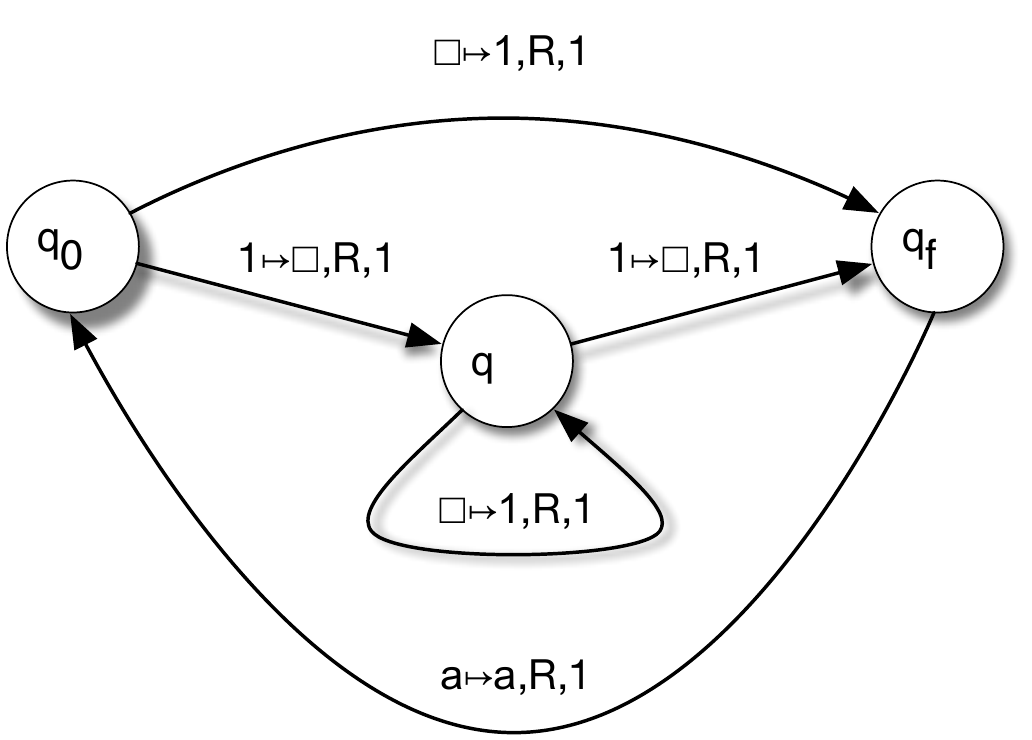} 
    }
  \end{center}
  \caption{Reversible TM a l\`a Bernstein and Vazirani}\label{fig:RTMBeV}
\end{figure}

\begin{example}[\textbf{classical reversible TM with quantum behaviour}]
  Let $M$ be the reversible TM represented in Figure~\ref{fig:RTMBeV},
  where $a\in\{\Box,1\}$. $M$ is a B\&V-QTM indeed.

  If we feed $M$ with a non classical input, e.g.\
  $\ket{\psi}=\frac{1}{\sqrt{2}}\ket{\nstring{1}}+
  \frac{1}{\sqrt{2}}\ket{\nstring{3}}$,
  then $M$ fails to give an answer according to B\&V's framework,
  since B\&V-QTMs always presuppose a classical input.

  \begin{figure}[htb]
    \begin{center}
      \scalebox{0.70}{%
        \includegraphics{./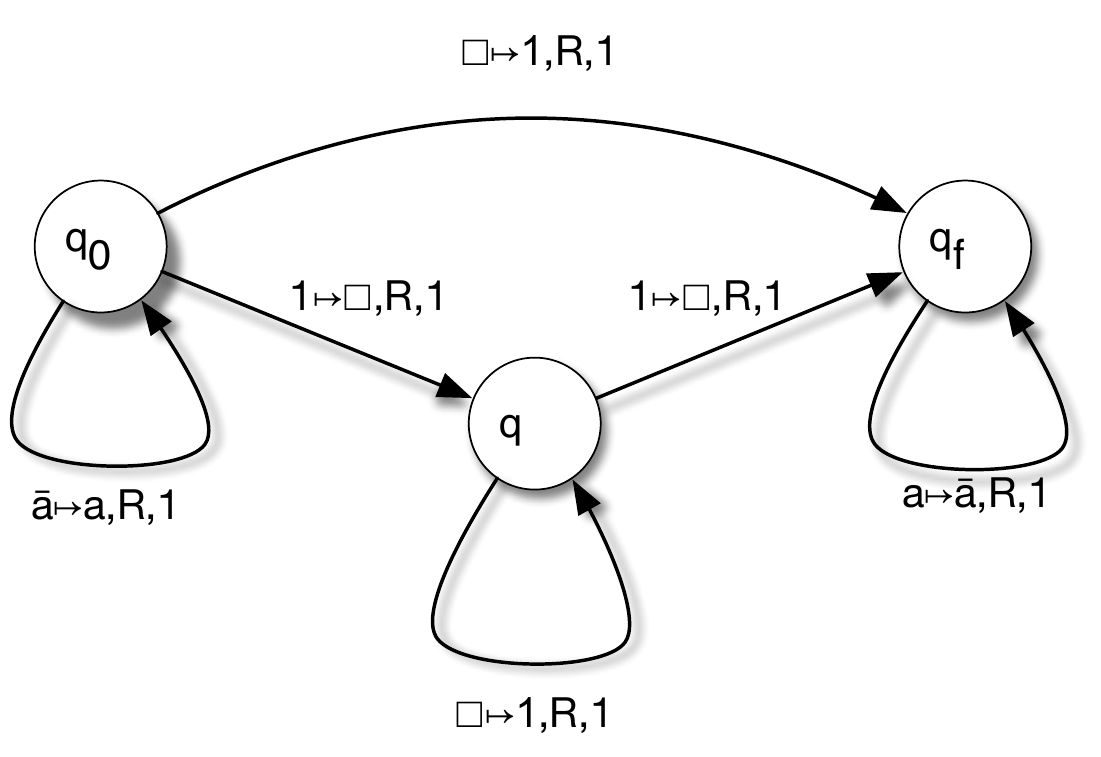} 
      }
    \end{center}
    \caption{Reversible TM}\label{fig:RTM}
  \end{figure}

  If we transform $M$ in our formalism (see
  Theorem~\ref{theor:fromBeVtoOur}), we obtain the QTM in
  Figure~\ref{fig:RTM}.
  From the definition of computed output, we have that
  $M_{\frac{1}{\sqrt{2}}\ket{\nstring{1}}+\frac{1}{\sqrt{2}}\ket{\nstring{3}}}\to
  \{\frac{1}{2}:2\}$;
  namely, with probability $\frac{1}{2}$ the QTM halts with output $2$;
  while with probability $\frac{1}{2}$ it diverges.
\end{example}

\begin{example}[\textbf{A PD obtained as a limit}]
  The following example shows a machine which produces a PD only as an
  infinite limit.  Let us consider the QTM in Figure~\ref{fig:ide},
  where $\Sigma=\{\$,1,\Box\}$, $a\in\{1,\Box\}$, $p$ is a target
  state, and $s$ is a source state.
  A simple calculation show that $M_{\ket{\nstring{n}}}\to\{1: n+1\}$;
  namely, the machine $M$ on input $n$ produces with probability $1$
  the successor $n+1$.  We can see also that the PD $\{1: n+1\}$ is
  obtained only as a limit.  Of course we must not wait an infinite
  time to readback the result!  A correct way to interpret this fact,
  is that for each $n\in\NN$, each $\epsilon\in (0,\frac{1}{2}]$ there
  exist a natural number $j$ s.t.
  $\pd{U^j\ket{\nstring{n}}}(n+1) > 1-\epsilon$.
\end{example}

\begin{figure}[htb]
  \begin{center}
    \scalebox{0.80}{%
      \includegraphics{./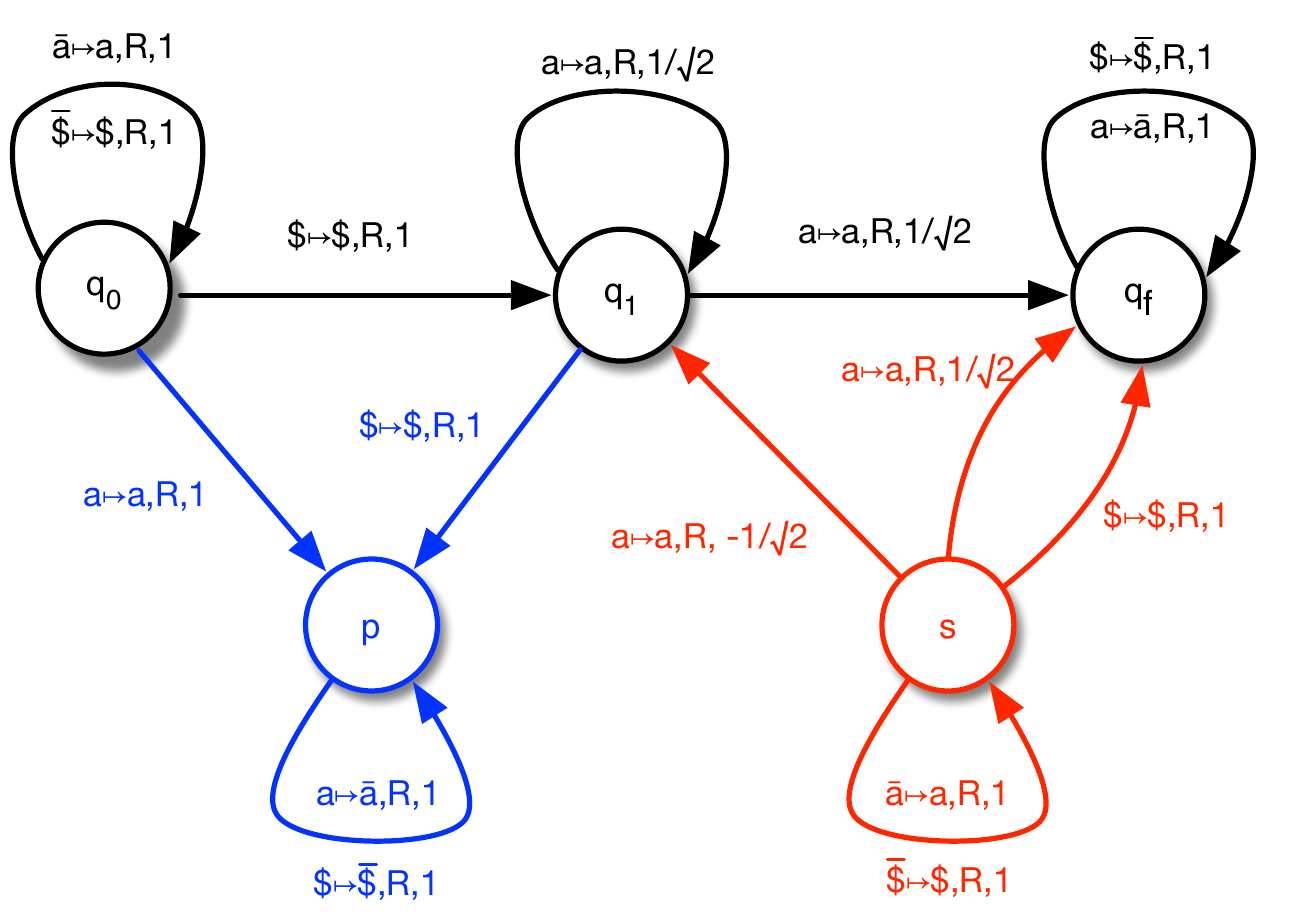} 
    }
  \end{center}
  \caption{Identity}\label{fig:ide}
\end{figure}
  
   
\section{Conclusions and further work}

We find surprising that in the thirty years since~\cite{Deu85} a theory of quantum computable functions did not develop, and that the main interest remained in QTMs as computing devices for classical problems/functions. This in sharp contrast with the original (Feynman's and Deutsch's) aim to have a better computing simulation of the physical world. 

As always in these foundational studies, we had to go back to the basics,
and look for a notion of QTM general enough to encompass  previous approaches (for instance, simulation of B\&V-QTMs, Theorem~\ref{theor:fromBeVtoOur}), and still sufficiently constrained to allow for a neat mathematical framework (for instance, monotonicity of quantum computations, Theorem~\ref{theor:monot}, a consequence of the particular way final states are treated in order to defuse quantum interference once such states are entered). While several details of the proposed approach may well change during further study, we are particularly happy to have a recursive enumerable class of QTMs. This may allow a fresh look to the problem of a quantum universal machine, and, therefore, to obtain some of the ``standard'' theorems of classical computability theory (s-m-n, normal form, recursion, etc.). These themes, as well those related to the various degrees of partiality of quantum computable functions (see the brief discussion after Proposition~\ref{prop:partiality}) will be the subject of forthcoming papers. 







\bibliographystyle{abbrv}
\bibliography{biblio} 

\appendix

\section{Hilbert spaces with denumerable basis}
\label{sec:HS}

\begin{definition}[Hilbert space of configurations]
  Given a denumerable set $\B$, with $\ell^2(\B)$ we shall denote the
  infinite dimensional Hilbert space defined as follow.

  The set of vectors in $\ell^2(\B)$ is the set
  $$
  \left\{\phi\;|\;\phi:\B\rightarrow\CC, \sum_{C\in
      \B}|\phi(C)|^2 < \infty\right\}
  $$ 
  and equipped with:
  \begin{enumerate}
  \item An {inner sum} $+ : \ell^2(\B) \times\ell^2(\B)\to\ell^2(\B)$
    \\
    defined by $(\phi+\psi)(C)= \phi(C)+\psi(C)$;
  \item A {multiplication by a scalar}\quad
    $\cdot:\CC\times\ell^2(\B)\to \ell^2(\B)$
    \\
    defined by $(a\cdot \phi)(C)= a\cdot(\phi(C))$;
  \item An {inner product}\footnote{%
      The condition $\sum_{C\in\B}|\phi(C)|^2<\infty$ implies that
      $\sum_{C\in\B}\phi(C)^*\psi(C)$ converges for every pair of
      vectors.}
    $\inprod{\cdot}{\cdot}: \ell^2(\B)\times\ell^2(\B)\to\CC$\\
    defined by $\inprod{\phi}{\psi}=\sum_{C\in \B}\phi(C)^*\psi(C)$;
  \item The Euclidian norm is defined as
    $\norm{\phi}=\inprod{\phi}{\phi}$.
  \end{enumerate}
\end{definition}

The Hilbert space $\ell^2=\ell^2(\NN)$ is the standard Hilbert space
of denumerable dimension---all the Hilbert spaces with denumerable
dimension are isomorphic to it. $\ell^2_1$ is the set of the vectors
of $\ell^2$ with unitary norm.

\begin{definition}[computational basis]
  The set of functions 
  $$\cb{\B}=\{\ket{C} : C\in
  \B,\ \ket{C}: \B\to \CC \}$$
  such that for each $C$
   $$
   \ket{C}(D)= \left\{
     \begin{array}{ll}
     1\;\;&\mbox{if}\;C=D\\
     0\;\;&\mbox{if}\;C\neq D 
     \end{array}
   \right. 
   $$
  is called  \emph{computational basis} of $\ell^2(\B)$.
\end{definition}

We can prove that~\cite{RomanBook}:
\begin{theorem}
  The set $\cb{\B}$ is an Hilbert basis of $ \ell^2(\B)$.
\end{theorem}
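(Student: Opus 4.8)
The statement to prove is that $\cb{\B}$ is a Hilbert basis of $\ell^2(\B)$, i.e.\ an orthonormal system whose span is dense. I would organize the argument in three parts: orthonormality, the "Bessel/Parseval" bookkeeping, and density (completeness of the orthonormal system).

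\emph{Step 1: orthonormality.} This is immediate from the definition of the inner product and of the functions $\ket{C}$. For $C,D\in\B$ one computes $\inprod{\ket{C}}{\ket{D}}=\sum_{E\in\B}\ket{C}(E)^*\,\ket{D}(E)$, and the only term that can be nonzero is $E=C=D$, giving $1$ when $C=D$ and $0$ otherwise. I would also remark that each $\ket{C}$ genuinely lies in $\ell^2(\B)$ since $\sum_E|\ket{C}(E)|^2=1<\infty$.

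\emph{Step 2: coefficients and the key identity.} For an arbitrary $\phi\in\ell^2(\B)$ define the ``Fourier coefficients'' $c_C=\inprod{\ket{C}}{\phi}=\phi(C)$. The hypothesis $\sum_{C\in\B}|\phi(C)|^2<\infty$ is exactly $\sum_C|c_C|^2=\norm{\phi}^2<\infty$, so this is already Parseval's identity for the candidate basis; in particular the family $(c_C)$ is square-summable. I would then show that the net of finite partial sums $\phi_F=\sum_{C\in F}c_C\ket{C}$ (for finite $F\subseteq\B$) is Cauchy in $\ell^2(\B)$: for finite $F\subseteq F'$ one has $\norm{\phi_{F'}-\phi_F}^2=\sum_{C\in F'\setminus F}|c_C|^2$, which is small once $F$ exhausts enough of the support by square-summability. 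Since $\ell^2(\B)$ is complete, $\phi_F$ converges to some $\psi\in\ell^2(\B)$.

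\emph{Step 3: density via completeness of the system.} It remains to identify $\psi$ with $\phi$, equivalently to show $\phi-\psi=0$; this is exactly the statement that $\SPAN{\cb{\B}}$ is dense, i.e.\ the orthonormal system has no nonzero vector orthogonal to all $\ket{C}$. For every $D\in\B$, $\inprod{\ket{D}}{\psi}=\lim_F\inprod{\ket{D}}{\phi_F}=c_D=\inprod{\ket{D}}{\phi}$ by continuity of the inner product, so $\phi-\psi$ is orthogonal to every $\ket{D}$; but $(\phi-\psi)(D)=\inprod{\ket{D}}{\phi-\psi}=0$ for all $D$, hence $\phi-\psi=0$. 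Thus every $\phi$ is the $\ell^2$-limit of finite linear combinations of the $\ket{C}$, so $\cb{\B}$ is a complete orthonormal system, i.e.\ a Hilbert basis. The only mildly delicate point is the interchange of limit and inner product in Step 3 and the Cauchy estimate in Step 2; both are routine once one has noted that the hypothesis defining $\ell^2(\B)$ is literally square-summability of the coefficient family, so I expect no real obstacle — the proof is essentially the standard verification that the ``coordinate vectors'' form a basis of an abstract $\ell^2$ space, and one may simply cite \cite{RomanBook} for the general fact after recording these observations.
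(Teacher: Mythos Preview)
Your proposal is correct and is precisely the standard argument one finds in any functional analysis text. The paper itself does not prove this theorem at all: it simply states the result and cites \cite{RomanBook}, so your write-up actually goes further than the paper by spelling out the verification (orthonormality, square-summability of the coefficient family, and density via the identification $\inprod{\ket{D}}{\phi}=\phi(D)$) that the reference would contain.
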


Let us note that the inner product space $\SPAN{\cb{\B}}$ defined by:
$$
\SPAN{\cb{\B}} =\left\{\sum_{i=1}^n c_i S_i \ |\ c_i\in\CC, S_i\in \cb{\B}, n\in\NN \right\}.
$$
is a proper inner product subspace of $\ell^2(\B)$, but it is not an
Hilbert Space (this means that $\cb{\B}$ is not an Hamel basis of
$\ell^2(\B)$).

The completion of $\SPAN{\cb{\B}}$ is a space isomorphic to  $ \ell^2(\B)$.

By means of a standard result in functional analysis we have:
\begin{theorem}
  \mbox{}
  \begin{enumerate}
  \item $\SPAN{\cb{\B}}$ is a dense subspace of
    $\ell^2(\B)$;
  \item 
    $\ell^2(\B)$ is the (unique! up to isomorphism)
    \emph{completion} of $\SPAN{\cb{\B}}$.
  \end{enumerate}
\end{theorem}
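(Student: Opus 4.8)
The plan is to prove the density claim directly and then read off the completion claim from density together with the (standard) completeness of $\ell^2(\B)$, finishing with the usual uniqueness-of-completion argument. Nothing here is deep; the statement is really a repackaging of textbook facts, so the "proof" is mostly about identifying which standard ingredients are being invoked.

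\textbf{Step 1 (subspace and density).} First I would note that $\SPAN{\cb{\B}}$ is a linear subspace of $\ell^2(\B)$ essentially by definition, being the set of all finite $\CC$-linear combinations of the vectors $\ket{C}\in\ell^2(\B)$. For density I would fix an arbitrary $\phi\in\ell^2(\B)$, enumerate the denumerable set $\B=\{C_1,C_2,\dots\}$, and consider the finite truncations $\phi_n=\sum_{i=1}^{n}\phi(C_i)\ket{C_i}\in\SPAN{\cb{\B}}$. The key computation is $\norm{\phi-\phi_n}^2=\sum_{i>n}|\phi(C_i)|^2$, which is the tail of the convergent series $\sum_{C\in\B}|\phi(C)|^2<\infty$ and hence tends to $0$ as $n\to\infty$. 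This exhibits every $\phi\in\ell^2(\B)$ as a norm-limit of a sequence in $\SPAN{\cb{\B}}$, so $\SPAN{\cb{\B}}$ is dense.

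\textbf{Step 2 (completion).} Next I would recall that $\ell^2(\B)$ is complete — the classical coordinatewise Cauchy/Riesz--Fischer argument, which is also exactly what makes $\cb{\B}$ a Hilbert basis in the sense already used in the appendix — and observe that a completion of a normed space is, by definition, a complete normed space equipped with an isometric linear embedding with dense range. The inclusion $\SPAN{\cb{\B}}\hookrightarrow\ell^2(\B)$ is such an isometric embedding, its range is dense by Step 1, and the target is complete; hence $\ell^2(\B)$ is a completion of $\SPAN{\cb{\B}}$. For uniqueness I would invoke the universal property of completions: given two completions of the same space, the identity on the common dense subspace extends, by uniform continuity and density, to an isometric isomorphism between them — this is the "standard result in functional analysis" referred to just before the statement.

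\textbf{Main obstacle.} There is no genuine obstacle; the only two non-trivial inputs, completeness of $\ell^2(\B)$ and uniqueness of completions up to isometric isomorphism, are external, and I would cite them (e.g.\ to \cite{RomanBook} or \cite{Con90}) rather than reprove them. If a self-contained treatment were wanted, the completeness of $\ell^2(\B)$ is the most laborious piece — approximate a Cauchy sequence coordinatewise, check that the candidate limit is square-summable, and verify convergence holds in the $\ell^2$-norm and not merely pointwise — but it is entirely routine and orthogonal to the paper's concerns.
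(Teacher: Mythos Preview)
Your argument is correct. The paper does not actually prove this theorem: it is introduced with the phrase ``By means of a standard result in functional analysis we have'' and is simply stated without proof, as background material in the appendix. So there is no ``paper's own proof'' to compare against; you have supplied the routine details that the authors deliberately omitted, and your citations to \cite{RomanBook} or \cite{Con90} are exactly in the spirit of how the paper treats this result.
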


\begin{definition}
  Let $\V$ be a complex inner product space, a linear
  application $U:\V\to\V$ is called an
  \emph{isometry} if $\inprod{Ux}{Uy}=\inprod{x}{y}$, for each
  $x,y\in \V$; moreover if $U$ is also surjective, then it is
  called \emph{unitary}.
\end{definition}

Since an isometry is injective, a unitary operator is invertible, and
moreover, its inverse is also unitary.

\begin{definition}
  Let $\V$ be a complex inner product vectorial space, a
  linear application $L:\V\to\V$ is called
  \emph{bounded} if $\exists c>0\;\forall x\; |Lx|\leq c ||x||$.
\end{definition}

\begin{theorem}
  Let $\V$ be a complex inner product vectorial space, for
  each bounded application $U:\V\to\V$ there is one
  and only one bounded application $U^*:\V\to\V$
  s.t. $\inprod{x}{Uy}=\inprod{U^*x}{y}$. We say that $U^*$ is the
  \emph{adjoint} of $U$.
\end{theorem}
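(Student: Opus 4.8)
The plan is to construct $U^*$ through the Riesz representation theorem, applied one vector at a time, and then to check in turn that the resulting map is well defined, linear, bounded, and unique.

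First I would fix a bounded $U:\V\to\V$ and, for each fixed $x\in\V$, examine the map $\varphi_x:\V\to\CC$ defined by $\varphi_x(y)=\inprod{x}{Uy}$. It is linear in $y$, since $U$ is linear and the inner product is linear in its second slot, and it is bounded: by Cauchy--Schwarz and boundedness of $U$ we have $|\varphi_x(y)|=|\inprod{x}{Uy}|\leq\norm{x}\,\norm{Uy}\leq c\,\norm{x}\,\norm{y}$. Hence $\varphi_x$ is a bounded linear functional, so by the Riesz representation theorem there is a \emph{unique} vector, which I name $U^*x$, with $\varphi_x(y)=\inprod{U^*x}{y}$ for every $y$; unwinding the definition of $\varphi_x$, this says exactly $\inprod{x}{Uy}=\inprod{U^*x}{y}$, the required identity.

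Next I would verify that $x\mapsto U^*x$ is linear and bounded. Linearity is forced by the uniqueness clause of Riesz together with additivity and conjugate-homogeneity of $x\mapsto\varphi_x$: since $\varphi_{x_1+x_2}=\varphi_{x_1}+\varphi_{x_2}$ and $\varphi_{\lambda x}=\overline{\lambda}\,\varphi_x$, the vectors $U^*(x_1+x_2)$ and $U^*x_1+U^*x_2$ represent one and the same functional (likewise $U^*(\lambda x)$ and $\lambda U^*x$), hence coincide. Boundedness uses the standard trick of putting $y=U^*x$ in the defining identity: $\norm{U^*x}^2=\inprod{U^*x}{U^*x}=\inprod{x}{U(U^*x)}\leq\norm{x}\,\norm{U(U^*x)}\leq c\,\norm{x}\,\norm{U^*x}$, so $\norm{U^*x}\leq c\,\norm{x}$. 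Finally, uniqueness of the operator $U^*$ itself: if $V$ and $W$ both satisfy $\inprod{x}{Uy}=\inprod{Vx}{y}=\inprod{Wx}{y}$ for all $x,y$, then $\inprod{(V-W)x}{y}=0$ for all $y$, and choosing $y=(V-W)x$ gives $\norm{(V-W)x}=0$, so $V=W$.

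The one load-bearing ingredient is the Riesz representation theorem, whose proof needs completeness of the ambient space; this is the point to watch, but it is harmless here since the spaces actually in use are the Hilbert spaces $\ell^2(\B)$ (alternatively one invokes Riesz in the completion and observes that $U^*x$ lands back in $\V$). Closer to the spirit of Appendix~\ref{sec:HS}, one may instead build $U^*$ directly from the computational basis $\cb{\B}$: require $\inprod{U^*S}{T}=\inprod{S}{UT}$ for all $S,T\in\cb{\B}$, note that for each fixed $S$ the family $(\inprod{S}{UT})_{T\in\cb{\B}}$ is square-summable (a finite-subset Cauchy--Schwarz estimate bounds its partial sums by $c^2$), so it determines a vector $U^*S\in\ell^2(\B)$, and then extend by linearity and continuity, after which the defining identity, linearity, boundedness, and uniqueness follow as above. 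Beyond the completeness caveat I expect no step to present real difficulty.
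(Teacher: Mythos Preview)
The paper does not prove this theorem: it appears in Appendix~\ref{sec:HS} as a standard background fact from functional analysis, stated without proof alongside the other basic Hilbert-space material. So there is no ``paper's own proof'' to compare against.

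Your argument is the standard one and is correct for Hilbert spaces. You are also right to flag the completeness issue: as literally stated (``complex inner product vectorial space''), the theorem need not hold, because Riesz representation can fail in an incomplete space and the would-be adjoint may not land in $\V$. Your two remedies---either observe that the spaces actually used are the complete $\ell^2(\B)$, or pass to the completion and check the adjoint returns to $\V$---are exactly the right moves in context, and your alternative basis-wise construction via $\cb{\B}$ is a clean way to stay close to the paper's setup. Nothing further is needed.
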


It is easy to show that if $U$ is a bounded application, then $U$ is
unitary iff $U$ is invertible and $U^*=U^{-1}$.

\begin{theorem}\label{standard-ext-U}
  Each unitary operator $U$ in
  $\SPAN{\cb{\B}}$ has an unique extension in
  $\ell^2(\B)$~\cite{BerVa97}.
\end{theorem}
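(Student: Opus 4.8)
The plan is to invoke the standard principle that a bounded linear map defined on a dense subspace of a Banach (here Hilbert) space extends uniquely to a bounded linear map on the whole space, and then to check that in our situation this extension inherits the two properties that make $U$ unitary, namely being an isometry and being surjective. Since $\SPAN{\cb{\B}}$ is dense in $\ell^2(\B)$ and $\ell^2(\B)$ is complete (both recorded above), the general machinery applies; the only point requiring genuine care is surjectivity, because in infinite dimension an isometry need not be onto.

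First I would note that a unitary $U$ on $\SPAN{\cb{\B}}$ is in particular an isometry, so $\norm{Ux}=\norm{x}$ for all $x\in\SPAN{\cb{\B}}$; hence $U$ is Lipschitz, so uniformly continuous. Given $x\in\ell^2(\B)$, pick a sequence $(x_n)$ in $\SPAN{\cb{\B}}$ with $x_n\to x$; then $(x_n)$ is Cauchy, so $(Ux_n)$ is Cauchy, and by completeness it converges. I define $\overline{U}x$ to be this limit. A routine check (take two sequences, interleave them) shows the value does not depend on the chosen sequence, so $\overline{U}:\ell^2(\B)\to\ell^2(\B)$ is well defined and restricts to $U$ on $\SPAN{\cb{\B}}$. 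Linearity of $\overline{U}$ follows by passing to the limit in $U(a x_n+y_n)=a\,Ux_n+Uy_n$, using continuity of vector addition and scalar multiplication; the isometry property follows by passing to the limit in $\norm{Ux_n}=\norm{x_n}$, using continuity of the norm.

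The one non-automatic step is surjectivity, and this is exactly where I would use that $U$ is assumed \emph{unitary}, not merely isometric, on $\SPAN{\cb{\B}}$. Being unitary on $\SPAN{\cb{\B}}$, $U$ has an inverse $U^{-1}$ which is again a unitary (hence isometric) operator on $\SPAN{\cb{\B}}$; applying the construction above to $U^{-1}$ yields a bounded linear extension $\overline{U^{-1}}$ on $\ell^2(\B)$. Now $\overline{U}\circ\overline{U^{-1}}$ and $\overline{U^{-1}}\circ\overline{U}$ are continuous linear maps on $\ell^2(\B)$ that coincide with the identity on the dense subspace $\SPAN{\cb{\B}}$, hence equal the identity on all of $\ell^2(\B)$. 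Therefore $\overline{U}$ is a bijection with continuous inverse, i.e. a unitary operator on $\ell^2(\B)$.

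Uniqueness is then immediate: any two continuous extensions of $U$ agree on the dense set $\SPAN{\cb{\B}}$, hence everywhere. I expect the surjectivity argument to be the only real obstacle, and it is precisely the place where the infinite-dimensionality matters — mirroring the remark made in the proof of Theorem~\ref{thm:time-evol-unitary} that an isometry of an infinite-dimensional Hilbert space need not be surjective, so one genuinely needs the hypothesis that $U$ is unitary (not just isometric) on the span.
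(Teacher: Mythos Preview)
The paper does not actually give a proof of this statement; it is stated with a citation to~\cite{BerVa97} and no argument. Your proof is correct and is exactly the standard one: extend $U$ by continuity from the dense subspace $\SPAN{\cb{\B}}$ to $\ell^2(\B)$ via the BLT theorem, observe that the isometry property survives the limit, and recover surjectivity by applying the same extension procedure to $U^{-1}$ and checking that the two extensions are mutual inverses on a dense set. Your emphasis on surjectivity as the only non-automatic point is well placed and matches the paper's own remark (in the proof sketch of Theorem~\ref{thm:time-evol-unitary}) that an isometry of an infinite-dimensional Hilbert space need not be onto; here it is precisely the assumed surjectivity of $U$ on the span that supplies the inverse to extend.
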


\subsection{Dirac notation}
\label{ssec:dirac-notation}

We conclude this brief digest on Hilbert spaces, by a synopsis of the
so-called Dirac notation, extensively used in the paper.

\bigskip

\begin{center}
  \begin{tabular}{|c|c|}
    \hline
    \textsf{mathematical notion} & \textsf{Dirac notation}\\
    \hline
    inner product $\inprod{\phi}{\psi}$ & $\kinprod{\phi}{\psi}$\\
    \hline
    vector $\phi$ & $\ket{\phi}$
    \\
    \hline
    dual of vector $\phi$ & $\tek{\phi}$\\
    i.e., the linear application $d_\phi$ &
    \\
    defined as $d_\phi(\psi)=\inprod{\phi}{\psi}$ & 
                                                    note that $\kinprod{\phi}{\psi}=\tek{\phi}(\ket{\psi})$
    \\
    \hline
  \end{tabular}
\end{center}

\bigskip

Let $L$ be a linear application, with $\tek{\phi}L\ket{\psi}$ we denote $\kinprod{\phi}{L\psi}$.


\section{Proof of Theorem~\ref{thm:time-evol-unitary}}
\label{sec:proof-local-cond}

In this section we shall give the proof of
Theorem~\ref{thm:time-evol-unitary} in full details. Following
Bernstein and Vazirani \cite{BerVa97} and Nishimura and Ozawa
\cite{NishOza10}, we shall prove a stronger result indeed, that the
local unitary conditions are not only sufficient to obtain a unitary
time evolution, but also necessary
(Theorem~\ref{thm:qtm-iff-unitary}).

\subsection{Pre-QTM}
\label{sec:pre-qtm}

A pre-QTM is a tuple
$M=\langle \Sigma, \Q, \Q_s, \Q_t, \delta, q_i, q_f\rangle$ for which all
the requirements demanded for a QTM (Definition~\ref{def:QTM}) but the
local unitary conditions hold. In other words, while $\delta_s$ and
$\delta_t$ are defined and constrained as for QTM's, we do not have
any condition on $\delta_0$. Ground and quantum configurations, step
function and time evolution operator of pre-QTM's are defined as for
QTM's.
In the following, $M$ will denote a pre-QTM.

\subsection{Basic results}

Let us start with some basic results which hold in any Hilbert space.
As usual, if $U$ is an operator, $U^*$ denotes its adjoint.

\begin{lemma}\label{lem:basic-results}
  Let $U:\ell^2(\GConf_M)\to \ell^2(\GConf_M)$. 
  \begin{enumerate}
  \item $U$ is an isometry iff it is left-invertible and its adjoint
    is its left-inverse, that is, $U^*U=1$.
  \item If $U$ is an isometry, then $P=UU^*$ is an orthonormal
    projection and
    \begin{enumerate}
    \item $\norm{P\ket{\phi}} \leq \norm{\ket{\phi}}$, for every $\phi$;
    \item $\norm{P\ket{\phi}} = \norm{\ket{\phi}}$ iff $P\ket{\phi} =
      \ket{\phi}$;
    \item $P=1$ iff
      $\norm{P\ket{C}}=\tek{C} P \ket{C}=1$, for every
        $C\in\GConf_M$.
    \end{enumerate}
  \end{enumerate}
\end{lemma}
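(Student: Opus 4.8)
The plan is to verify the five sub-claims in the order listed, since each is a short standard fact about operators on a Hilbert space and the later ones reuse the earlier ones. Throughout I use that an isometry is bounded ($\norm{U}\le 1$), so that its adjoint $U^*$ exists by the adjoint theorem recalled in Appendix~\ref{sec:HS}.

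For part~(1): using $\inprod{U\ket{\phi}}{U\ket{\psi}}=\tek{\phi}U^*U\ket{\psi}$, if $U$ is an isometry then $\tek{\phi}U^*U\ket{\psi}=\inprod{\ket{\phi}}{\ket{\psi}}$ for all $\phi,\psi$, that is, $\tek{\phi}(U^*U-1)\ket{\psi}=0$; choosing $\ket{\phi}=(U^*U-1)\ket{\psi}$ and using positive-definiteness of the inner product gives $(U^*U-1)\ket{\psi}=0$ for every $\psi$, i.e.\ $U^*U=1$, so in particular $U^*$ is a left inverse of $U$. Conversely, if $U^*U=1$ then $\inprod{U\ket{\phi}}{U\ket{\psi}}=\tek{\phi}U^*U\ket{\psi}=\inprod{\ket{\phi}}{\ket{\psi}}$, so $U$ is an isometry.

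For part~(2), assume $U$ is an isometry and put $P=UU^*$. First I would check that $P$ is an orthogonal projection: $P^*=(UU^*)^*=UU^*=P$, and $P^2=U(U^*U)U^*=U\,U^*=P$ using part~(1). Since $P^*=P=P^2$, the vectors $P\ket{\phi}$ and $(1-P)\ket{\phi}$ are orthogonal, so the Pythagorean identity gives $\norm{\ket{\phi}}^2=\norm{P\ket{\phi}}^2+\norm{(1-P)\ket{\phi}}^2$; (2a) follows at once, and (2b) follows because $\norm{P\ket{\phi}}=\norm{\ket{\phi}}$ is equivalent to $\norm{(1-P)\ket{\phi}}=0$, hence to $P\ket{\phi}=\ket{\phi}$. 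For (2c), I would first note that $\tek{C}P\ket{C}=\tek{C}P^*P\ket{C}=\norm{P\ket{C}}^2$, so the two quantities displayed in the statement are $\norm{P\ket{C}}$ and its square, and the hypothesis amounts to $\norm{P\ket{C}}=1$ for every $C\in\GConf_M$. The implication from $P=1$ is immediate, since every base q-configuration has norm~$1$. Conversely, if $\norm{P\ket{C}}=1=\norm{\ket{C}}$ for all $C$, then (2b) gives $P\ket{C}=\ket{C}$ for every $C\in\GConf_M$; since $\cb{\GConf_M}$ is a Hilbert basis, $\SPAN{\cb{\GConf_M}}$ is dense in $\ell^2(\GConf_M)$, and $P$ is bounded hence continuous, so $P$ agrees with the identity on all of $\ell^2(\GConf_M)$.

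I do not expect a genuine obstacle here---this is essentially a warm-up feeding into the proof of Theorem~\ref{thm:time-evol-unitary}. The only step that is not purely algebraic is the last one: passing from ``$P=1$ on the computational basis'' to ``$P=1$ everywhere'' genuinely uses the density of $\SPAN{\cb{\GConf_M}}$ together with the continuity of the bounded operator $P$---both already available from Appendix~\ref{sec:HS}---precisely because $\ell^2(\GConf_M)$ is infinite-dimensional rather than finite-dimensional.
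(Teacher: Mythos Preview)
Your proof is correct and follows essentially the same route as the paper's: the same adjoint identity for part~(1), the same verification that $P=UU^*$ is a self-adjoint idempotent, and the same orthogonal decomposition $\ket{\phi}=P\ket{\phi}+(1-P)\ket{\phi}$ for (2a) and (2b). For (2c) you are in fact a bit more explicit than the paper, which simply asserts the equivalence between ``$\norm{P\ket{\phi}}=\norm{\ket{\phi}}$ for every $\phi$'' and ``$\norm{P\ket{C}}=1$ for every $C$'' without spelling out the density-plus-continuity step that you supply.
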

\begin{proof}\mbox{}
  \begin{enumerate}
  \item $\langle U \ket{\phi}, U \ket{\psi}\rangle = \langle U^*U
    \ket{\phi}, \ket{\psi}\rangle = \kinprod{\phi}{\psi}$ iff
    $U^*U=1^*=1$.
  \item $P$ is a projection when $PP=P$ and, moreover, it is
    orthonormal when it is self-adjoint, that is $P^*=P$. Equivalently, $P$ is
    a projection iff, for every $\phi$, we have
    $\ket{\phi}=\ket{\phi_1}+\ket{\phi_0}$ for some (unique) $\phi_1$
    and $\phi_0$ s.t.\ $P\ket{\phi_1}=\phi_1$ and $P\ket{\phi_0}=0$;
    it is orthonormal when
    $\kinprod{\phi_1}{\phi_0}=0$ indeed. 

    $P = UU^*$ is clearly self-adjoint. Moreover,
    $PP=UU^*UU^*=UU^*=P$, since $U$ is an isometry and then $U^*U=1$.
    Thus, $P$ is an orthonormal projection.

    Let $\ket{\phi}=\ket{\phi_1}+\ket{\phi_0}$ be an orthonormal
    decomposition as above.
    $\norm{\ket{\phi}} = \norm{\ket{\phi_1}}+2\Re
    \kinprod{\phi_1}{\phi_0}+ \norm{\ket{\phi_0}} =
    \norm{P\ket{\phi_1}}+\norm{\ket{\phi_0}} =
    \norm{P\ket{\phi}}+\norm{\ket{\phi_0}}$.
    \begin{enumerate}
    \item $\norm{P\ket{\phi}} \leq  \norm{P\ket{\phi}}+\norm{\ket{\phi_0}}
      = \norm{\ket{\phi}}$
    \item $\norm{P\ket{\phi}}=\norm{\ket{\phi}}$ iff
      $\norm{\ket{\phi_0}}=0$, that is, iff $\ket{\phi_0}=0$.
    \item By the previous item, $P\ket{\phi}=\ket{\phi}$ for every
      $\phi$, iff $\norm{P\ket{\phi}}=\norm{\ket{\phi}}$ for every
      $\phi$, that is, iff $\norm{P\ket{C}}=\norm{\ket{C}}=1$ for
      every $C\in\GConf_M$, that is, iff $\norm{P\ket{C}}=\langle
      P\ket{C},P\ket{C}\rangle=\tek{C}P\ket{C}=1$, for
      every $C\in\GConf_M$.
    \end{enumerate}
  \end{enumerate}
\end{proof}

\subsection{Reverse transitions}

The reverse step function of $M$ is defined by
$$\rgamma_{\Sigma,\Q}(\langle \alpha v_R, p, wv_L \beta \rangle, q,u,d) \simeq
\begin{cases}
  \langle \alpha v_R, q, u\beta \rangle & \qquad\mbox{when $d = L$} \\
  \langle \alpha, q, uwv_L\beta \rangle & \qquad\mbox{when $d = R$}
\end{cases}
$$

Let us say that an $R$ or $L$ step of $\rgamma$ is an $R$-reverse or
$L$-reverse step. We see that an $R/L$-reverse step of $\rgamma$
revert an $R/L$ step of the step function $\gamma$. While in
$\gamma$ both the $L$-step and the $R$-step replace the same symbol,
the current symbol of the configuration, in $\rgamma$ the symbols
replaced by the $R$-reverse step and by the $L$-reverse step are in
different positions. We have then a current $L$-reverse symbol $v_L$
and a current $R$-reverse symbol $v_R$.

\begin{lemma}\label{lem:rev-step}
  Let $C\simeq\langle\alpha_1w_1,q,u\beta_1\rangle$ and
  $D\simeq\langle\alpha_2v_R,p,w_2v_L\beta_2\rangle$. 
  $$\rgamma(D,q,u,d) = C
  \qquad\mbox{iff}\qquad \gamma(C,p,v_d,d) = D$$
\end{lemma}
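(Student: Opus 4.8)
The plan is to unfold the two relevant definitions — the step function $\gamma$ and the reverse step function $\rgamma$ — on the two shapes given for $C$ and $D$, and check that the two stated equalities reduce to the same system of equations on the strings and symbols involved. The statement is really a bookkeeping identity about how a left/right head move rewrites a configuration, so I expect no genuine difficulty, only a careful case split on the direction $d\in\{L,R\}$.

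First I would fix notation matching the hypotheses: write $C\simeq\langle\alpha_1 w_1, q, u\beta_1\rangle$ and $D\simeq\langle\alpha_2 v_R, p, w_2 v_L\beta_2\rangle$, and recall that $v_d$ denotes $v_R$ when $d=R$ and $v_L$ when $d=L$. The case $d=R$: by definition $\gamma(C,p,v_R,R)\simeq\langle\alpha_1 w_1 v_R, p, \beta_1\rangle$, obtained by overwriting the current cell $w_1$… — more precisely, applying $\gamma$ to $C$ written as $\langle\alpha_1\, w_1, q, u\beta_1\rangle$ in the "$d=R$" clause of the definition of $\gamma_M$ gives $\langle \alpha_1 w_1 v_R, p, \beta_1\rangle$, where the new current symbol $v_R$ is appended to the left content and the head moves right. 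On the other side, $\rgamma(D,q,u,R)$ uses the "$d=R$" clause of $\rgamma$, namely $\rgamma(\langle\alpha v_R, p, w v_L\beta\rangle, q,u,R)\simeq\langle\alpha, q, u w v_L\beta\rangle$, which with $D$'s shape gives $\langle\alpha_2, q, u w_2 v_L\beta_2\rangle$. Then $\rgamma(D,q,u,R)=C$ and $\gamma(C,p,v_R,R)=D$ both become, after matching components up to $\simeq$, the single conjunction $\alpha_2 = \alpha_1$, $w_2 = w_1$, $v_L = $ (first symbol of $\beta_1$) hence $u w_2 v_L\beta_2 = u\beta_1$, together with $\alpha_1 w_1 v_R = \alpha_2 v_R$ and $\beta_1 = w_2 v_L\beta_2$ — i.e.\ the same constraints, so the biconditional holds. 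The case $d=L$ is symmetric: here $\gamma(C,p,v_L,L)\simeq\langle\alpha_1, p, w_1 v_L\beta_1\rangle$ and $\rgamma(D,q,u,L)\simeq\langle\alpha_2 v_R, q, u\beta_2\rangle$, and again both equalities collapse to the same identification of strings.

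The only thing requiring a little care is the role of the equivalence $\simeq$ (padding by $\Box$ on either end, and the convention $\langle\alpha,q,\lambda\rangle\simeq\langle\alpha,q,\Box\rangle$): I would note that $\gamma$ and $\rgamma$ were defined "up to $\simeq$" precisely so that the representatives $w_1$, $v_L$, $v_R$ can always be chosen, and that both sides of each biconditional are $\simeq$-classes, so the matching of components need only be done up to $\simeq$. The main (mild) obstacle is thus purely presentational: being explicit about which symbol plays the role of the current $R$-reverse symbol $v_R$ and the current $L$-reverse symbol $v_L$ in $D$, and that these are exactly the symbols written by the forward $R$- and $L$-steps of $\gamma$ — which is the observation already made in the paragraph preceding the lemma. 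Once that identification is in place the equivalence is immediate from the two defining case distinctions.
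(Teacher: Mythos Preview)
Your plan is correct and follows exactly the paper's approach: case-split on $d\in\{L,R\}$, unfold $\gamma$ and $\rgamma$ on the given shapes, and check that both equalities reduce to the same pair of $\simeq_l/\simeq_r$ constraints on the left and right tape contents. The paper's proof is just a crisper execution of this---for $d=R$ it records the common condition as $\alpha_1w_1\simeq_l\alpha_2$ and $w_2v_L\beta_2\simeq_r\beta_1$, and for $d=L$ as $\alpha_1\simeq_l\alpha_2v_R$, $w_1=w_2$, $\beta_2\simeq_r\beta_1$---so when you write up the argument, aim for those clean intermediate conditions rather than the slightly tangled list you sketched for the $R$ case.
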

\begin{proof}
  When $d=R$, we have $C \simeq
    \langle\alpha_2,q,uw_2v_L\beta_2\rangle \simeq \rgamma(D,q,u,R)$ iff
    $\alpha_1w_1\simeq_l\alpha_2$ and
    $w_2v_L\beta_2\simeq_r\beta_1$ iff $\gamma(C,p,v_R,R) \simeq
    \langle\alpha_1w_1v_R,q,\beta_1\rangle \simeq D$.
    When $d=L$, we have $C \simeq
    \langle\alpha_2v_Rw_2,q,u\beta_2\rangle \simeq
    \rgamma(D,q,u,L)$ iff $\alpha_1\simeq_l\alpha_2v_R$ and
    $w_1=w_2$ and $\beta_2\simeq_r\beta_1$ iff $\gamma(C,p,v_L,L)
    \simeq \langle\alpha_1,q,w_1v_L\beta_1\rangle \simeq D$.
\end{proof}

\begin{lemma}\label{lem:rgamma-gamma}
  Let $C[q,u]$ be the configuration obtained by substituting the state
  $q$ and the symbol $u$ for the current state and the current symbol
  of the configuration $C$, and let
  $C_{p,v,d_1}^{q,u,d_2} = \rgamma(\gamma(C,p,v,d_1), q,u,d_2)$.
  \begin{enumerate}
  \item If $d=d_1=d_2$, then $C[q,u]=C^{p,v,d}_{q,u,d}$, for every
    $(p,v)\in \Q\times\aball{\Sigma}$.
  \item When $d_1\neq d_2$, there is $C_{d_1d_2}[q,u,v]$, which does not
    depend on $p$, s.t.\ $C_{d_1d_2}[q,u,v]=C_{p,v,d_1}^{q,u,d_2}$,  for every
    $p\in \Q$.
  \end{enumerate}
  Moreover, let
  $(q',u',v'),(q'',u'',v'')\in \Q
  \times\aball{\Sigma}\times\aball{\Sigma}$
  and $d_1, d_2,d_1',d_2'\in\DD_x$ with $d_1\neq d_2$ and
  $d_1'\neq d_2'$. If the tape of the configuration $C$ contains at
  least a non-empty cell in addition to the current one (that is,
  $C\neq\langle\lambda,q,u\lambda\rangle$), then
  \begin{enumerate}\setcounter{enumi}{2}
  \item $C[q',u'] = C[q'',u'']$ iff $(q',u')=(q'',u'')$;
  \item $C[q',u'] \neq C_{d_1d_2}[q'',u'',v'']$;
  \item $C_{d_1d_2}[q',u',v'] = C_{d_1'd_2'}[q'',u'',v'']$ iff
    $(d_1,d_2)=(d_1',d_2')$ and $(q',u',v')=(q'',u'',v'')$
  \end{enumerate}
\end{lemma}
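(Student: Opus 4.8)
The strategy is to unwind the definitions of $\gamma$ and $\rgamma$ completely and track, in each case, exactly which cell of the tape is overwritten and where the head ends up. The five items fall naturally into two groups: items (1)--(2) are pure computations with no hypothesis on $C$, and items (3)--(5) are injectivity statements that need the extra cell hypothesis $C\neq\langle\lambda,q,u\lambda\rangle$. First I would fix notation: write $C\simeq\langle\alpha w,q_0,u_0\beta\rangle$ in a ``padded'' form with $w$ the symbol immediately left of the head and $u_0$ the current symbol, so that both an $L$-step and an $R$-step of $\gamma$ from $C$ are expressible without case analysis on emptiness (this is exactly the padding convention $\simeq_l,\simeq_r$ introduced before Definition~\ref{Def:config}, and it is what makes the bookkeeping uniform).

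For item (1), applying $\gamma(C,p,v,d)$ then $\rgamma(-,q,u,d)$ with the \emph{same} direction $d$: by Lemma~\ref{lem:rev-step} (taking $v_d=v$), $\rgamma(\gamma(C,p,v,d),q,u,d)=C'$ where $C'$ is $C$ with current state replaced by $q$ and current symbol by $u$; that is precisely $C[q,u]$, and it visibly does not depend on $(p,v)$. For item (2), the mixed case $d_1\neq d_2$: here $\gamma(C,p,v,d_1)$ writes $v$ in the old current cell and moves in direction $d_1$, and then $\rgamma(-,q,u,d_2)$ with $d_2\neq d_1$ writes $u$ in a cell \emph{different} from the old current cell and moves back in direction $d_2$; spelling out the two sub-cases $(d_1,d_2)=(R,L)$ and $(d_1,d_2)=(L,R)$ shows the head returns to the old current cell but the old current symbol is now $v$ (not restored), so the resulting configuration depends on $q,u,v$ but not on $p$ — call it $C_{d_1d_2}[q,u,v]$. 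Both of these are mechanical once the padded form is set up; I would present one direction in detail and say ``symmetrically'' for the other.

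For items (3)--(5) I would argue by reading off, from the explicit formulas just obtained, which data are recoverable from the output configuration. For (3): $C[q',u']$ displays $q'$ as its state and $u'$ as its current symbol, so it determines $(q',u')$ — the hypothesis $C\neq\langle\lambda,q,u\lambda\rangle$ guarantees $C$ has enough tape that ``current cell'' is unambiguous and no spurious $\simeq$-collapse identifies two genuinely different configurations. For (4): $C[q',u']$ has its current symbol equal to $u'$ but, crucially, the \emph{rest} of its tape is identical to that of $C$, whereas in $C_{d_1d_2}[q'',u'',v'']$ one of the neighbouring cells has been altered (to $v''$) relative to $C$ while the current cell has been restored; comparing the head position together with the neighbouring cell contents gives the inequality — and this is exactly where the extra-cell hypothesis is indispensable, since with $C=\langle\lambda,q,u\lambda\rangle$ the ``neighbouring cell'' is a blank that the $\simeq_l/\simeq_r$ equations can absorb, destroying the distinction. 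For (5): from $C_{d_1d_2}[q'',u'',v'']$ one recovers the head position (which pins down $(d_1,d_2)$, since the two mixed patterns $(R,L)$ and $(L,R)$ leave the head on opposite sides relative to the altered cell), then the state $q''$, the current symbol $u''$, and the altered neighbour $v''$.

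The main obstacle is item (4), and more generally making the $\simeq$-equivalences airtight: the configurations are defined only up to padding by blanks, so ``the cell to the left has content $v''$'' is a meaningful assertion only when that cell is genuinely present, which is precisely what $C\neq\langle\lambda,q,u\lambda\rangle$ buys us (recall also that an \emph{extra} blank $\symextra{\Box}$ is \emph{not} absorbed, per the discussion before Definition~\ref{Def:config}, so one must be slightly careful about whether the written symbol is marked). I would therefore isolate a short preliminary observation — if $C$ has at least one non-empty cell besides the current one, then in the $\simeq$-class of $C$ there is a canonical representative in which the head is strictly interior, and all of $\gamma,\rgamma$ act on that representative without triggering further blank-absorption — and then items (3)--(5) become direct symbol-by-symbol comparisons on these canonical representatives.
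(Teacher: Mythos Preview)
Your overall strategy---write $C$ in padded form, compute $C_{p,v,d_1}^{q,u,d_2}$ explicitly, then compare the resulting strings under the $\simeq$-equivalence---is exactly the paper's approach. However, your computation for item (2) is wrong, and the error propagates to items (4) and (5).

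The mistake is in the direction of the head displacement under $\rgamma$. An $R$-reverse step $\rgamma(-,q,u,R)$ undoes a forward $R$-step, so it moves the head to the \emph{left} (replacing the cell immediately left of the head); dually $\rgamma(-,q,u,L)$ moves the head right. Thus when $d_1\neq d_2$, the forward step $\gamma(-,p,v,d_1)$ moves the head in direction $d_1$ and the reverse step $\rgamma(-,q,u,d_2)$ moves it in the direction opposite to $d_2$, which is again $d_1$: the head ends up \emph{two} cells away from its original position, not back at it. Concretely, with $C\simeq\langle\alpha z_lw_l,q,uw_rz_r\beta\rangle$ the paper obtains
\[
C_{LR}[q',u',v]\simeq\langle\alpha,\,q',\,u'w_lvw_rz_r\beta\rangle,\qquad
C_{RL}[q',u',v]\simeq\langle\alpha z_lw_lvw_r,\,q',\,u'\beta\rangle,
\]
so the new current symbol is $u'$ (replacing $z_l$, resp.\ $z_r$), while $v$ sits at the \emph{old} current cell two positions away. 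Your descriptions in (2) (``the head returns to the old current cell'') and in (4) (``one of the neighbouring cells has been altered to $v''$ while the current cell has been restored'') are therefore incorrect---and in fact mutually inconsistent with your own earlier claim that ``the old current symbol is now $v$ (not restored)''.

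With the correct formulas, items (3)--(5) are proved, as in the paper, by direct comparison of the left and right strings under $\simeq_l$ and $\simeq_r$: e.g.\ $C[q',u']\simeq C_{LR}[q'',u'',v'']$ forces $\alpha z_lw_l\simeq_l\alpha$, which is possible only when $\alpha=\lambda$ and $z_l=w_l=\Box$, and similarly on the right, reducing to the excluded case $C=\langle\lambda,q,u\lambda\rangle$. Your ``canonical interior representative'' idea would also work, but only once the explicit forms of $C_{LR}$ and $C_{RL}$ are corrected.
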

\begin{proof}
  Let $C \simeq \langle \alpha z_lw_l,q,uw_rz_r \beta \rangle$. By
  computing $C_{p,v,d_1}^{q,u,d_2}$, we see that (1) and (2) hold with
  \begin{align*}
    C[q',u'] &\simeq \langle \alpha z_lw_l, q', u'w_rz_r \beta \rangle\\
    C_{LR}[q',u',v] &\simeq \langle \alpha, q', u'w_lvw_rz_r \beta \rangle\\
    C_{RL}[q',u',v] &\simeq \langle \alpha z_lw_lvw_r, q', u' \beta \rangle
  \end{align*}
  From which, we can prove the following items.
  \begin{enumerate}\setcounter{enumi}{2}
  \item Immediate.
  \item Let $C[q',u'] \simeq C_{LR}[q'',u'',v'']$. Then,
    $\alpha z_lw_l \simeq_l \alpha$ and $q'=q''$ and
    $u'w_rz_r\beta \simeq_R u''w_lv''w_rz_r\beta$. Which is possible
    only if $\alpha\simeq\beta\simeq\lambda$ and
    $z_l=z_r=w_l=w_r=v''=\Box$ and $u'=u''$. But this is the case only
    if $C=\langle\lambda,q,u\lambda\rangle$. And analogously for
    $C[q',u'] \simeq C_{RL}[q'',u'',v'']$, it may hold only if
    $C=\langle\lambda,q,u\lambda\rangle$.
  \item The case $(d_1,d_2)=(d_1',d_2')$ is immediate. Thus, let us
    assume $C_{LR}[q',u',v'] \simeq C_{RL}[q'',u'',v'']$. We see that
    this is possible only if $\alpha\simeq_l\alpha z_lw_lv''w_r$ and
    $q'=q''$ and $u'w_lv'w_rz_r\beta\simeq_ru''\beta$. Which holds
    only if $\alpha\simeq\beta\simeq\lambda$ and
    $z_l=z_r=w_l=w_r=v'=v''=\Box$ and $u'=u''$. That implies
    $C=\langle \lambda,q,u\lambda \rangle$.
  \end{enumerate}
\end{proof}

\begin{lemma}\label{lem:gamma-rgamma}
  Let
  $\overline{C}_{p,v,d_2}^{q,u,d_1} =
  \gamma(\rgamma(C,q,u,d_1),p,v,d_2)$.
  \begin{enumerate}
  \item There is a triple
    $(q,w_L,w_R)\in \Q\times\aball{\Sigma}\times\aball{\Sigma}$, s.t.,
    for any $(q',u')\in Q\times\aball{\Sigma}$ and $d=\DD_x$, we have
    $C = \overline{C}_{p,v,d}^{q',u',d}$ iff $(p,v)=(q,w_d)$.
  \item For $d_1\neq d_2 \in \DD_x$,
    $C=\overline{C}_{p,v,d_1}^{q',u',d_2}$, iff
    $C=\langle \lambda, q, \lambda\rangle$ and $u'=v=\lambda$ and
    $p=q$.
  \end{enumerate}
\end{lemma}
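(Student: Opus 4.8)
The plan is to read this as the mirror image of Lemma~\ref{lem:rgamma-gamma} — there the forward step comes first and the reverse step second, here the reverse step comes first — and to reduce everything to Lemma~\ref{lem:rev-step}. Write $E:=\rgamma(C,q',u',d)$ for the first item (and $E:=\rgamma(C,q',u',d_2)$ for the second). Since $E$ has state $q'$ and current symbol $u'$, Lemma~\ref{lem:rev-step}, applied with $C$ as the input configuration of $\rgamma$ and $E$ as its output configuration, gives $\gamma(E,q_0,v_d,d)=C$, where $q_0$ is the state of $C$ and $v_d$ is the $d$-reverse symbol of $C$: the last symbol of the left content of $C$ if $d=R$, and the symbol in the cell immediately to the right of the current cell of $C$ if $d=L$. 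The point I would record first is that $q_0$ and $v_d$ depend only on $C$ and not on $(q',u')$, so that $\overline{C}_{p,v,d}^{q',u',d}=C$ is equivalent to $\gamma(E,p,v,d)=\gamma(E,q_0,v_d,d)$.

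For the first item (both steps in the same direction $d$) I would then use that $(p,v)\mapsto\gamma(E,p,v,d)$ is injective: two forward $d$-steps from the same configuration agree iff they write the same state and the same symbol, the only delicate point being the $\simeq_l,\simeq_r$ padding of the blank edges, which has to be checked directly. Hence $\gamma(E,p,v,d)=C$ iff $(p,v)=(q_0,v_d)$; taking $q:=q_0$, $w_R:=$ the last symbol of the left content of $C$ and $w_L:=$ the symbol just to the right of the current cell of $C$ yields a triple depending only on $C$ with the stated property, and makes clear that $u'$ plays no role (it is overwritten by $v$).

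For the second item (the reverse step in direction $d_2$, the forward step in direction $d_1$, with $d_1\neq d_2$) I expect the real work. Here a reverse $R$-step moves the head left and a reverse $L$-step moves it right, so the reverse step and the forward step displace the head in the \emph{same} direction; the composite $\overline{C}_{p,v,d_1}^{q',u',d_2}$ therefore shifts by two cells the position, relative to the head, of any tape content lying outside a fixed neighbourhood of the head. For this composite to return $C$ that content must be empty, i.e.\ $C$ has all-blank tape, so $C\simeq\langle\lambda,q,\lambda\rangle$. Unfolding the composite on the empty configuration and requiring the tape to stay all-blank then pins down the remaining parameters to the values in the statement ($v=\Box$, $p=q$, $u'=\Box$), and conversely one checks directly that these choices return the empty configuration. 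The main obstacle, exactly as in items~4--5 of Lemma~\ref{lem:rgamma-gamma}, is this last step: the blank-padding equivalences spawn several near-degenerate sub-cases, and one must verify that none of them survives apart from the empty configuration with blank parameters.
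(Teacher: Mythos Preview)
Your proposal is correct and lands on the same content as the paper's proof, which is a direct computation: the paper simply writes out the composite configurations $\overline{C}_{p,v,d}^{q',u',d}$ and $\overline{C}_{p,v,d_1}^{q',u',d_2}$ explicitly in terms of the parsed form of $C$, then reads off the conditions for equality with $C$.

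The one mild difference is organizational. For item~1 the paper unfolds the composite by hand, whereas you factor through Lemma~\ref{lem:rev-step} to identify the unique $(q_0,v_d)$ with $\gamma(E,q_0,v_d,d)=C$ and then invoke injectivity of $(p,v)\mapsto\gamma(E,p,v,d)$. This is a tidy shortcut and makes transparent why the answer is independent of $(q',u')$; the price is the small side-check of injectivity modulo the blank-padding equivalence, which you correctly flag. For item~2 your ``net two-cell shift'' reading is exactly what the explicit formulas $\overline{C}_{p,v,L}^{q',u',R}=\langle \alpha z_l w_R u u' v,\,p,\,\beta\rangle$ and $\overline{C}_{p,v,R}^{q',u',L}=\langle \alpha,\,p,\,v u' u w_L z_r\beta\rangle$ in the paper make visible, and the endgame (forcing $\alpha\simeq\beta\simeq\lambda$ and $u'=v=\Box$, $p=q$) is the same.
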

\begin{proof}
  \mbox{}
  \begin{enumerate}
  \item 
    Let us define
    $\overline{C}_R[q',u']=\langle \alpha z_1u', q',
    uw_Lz_2\beta\rangle$
    (the configuration obtained by replacing $q'$ and $u'$ for the
    current state $q$ and the reverse right symbol $w_R$ of $C$) and
    $\overline{C}_L[q',u']=\langle \alpha z_1w_R, q',
    uu'z_2\beta\rangle$
    (the configuration obtained by replacing $q'$ and $u'$ for the
    current state $q$ and the reverse right symbol $w_R$ of $C$), it
    is readily seen that $C_{q',u',d}^{p,v,d}=C_d[p,v]$ and, as a
    consequence, $C=C_d[p,v]$ iff $(p,v)=(q,u)$.
  \item By direct computation, we see that
    $\overline{C}_{p,v,L}^{q',u',R} = \langle \alpha z_lw_Ruu'v, p,
    \beta\rangle$
    and
    $\overline{C}_{p,v,R}^{q',u',L} = \langle \alpha, p,
    vu'uw_Lz_R\beta\rangle$.
    From which, we see that $C=\overline{C}_{p,v,L}^{q',u',R}$ iff
    $\alpha=\beta=\lambda$ and $z_L=z_R=w_L=w_R=u=u'=v=\Box$ and
    $C=\overline{C}_{p,v,R}^{q',u',L}$ iff
    $z_L=z_R=w_L=w_R=u=u'=v=\Box$.
  \end{enumerate}
\end{proof}

\subsection{The adjoint of $U_M$} 

We can now compute the adjoint of the operator $U_M$. For this, we
have already given the reverse transition $\rgamma$, but we also need
to reverse the quantum transition function $\delta$. For
$x\in\{0,s,t\}$, let us take
$$\rdelta_x:\T_x\to \ell^2(\varS_x\times\DD_x)
\qquad\mbox{s.t.}\qquad \rdelta_x(p,v)(q,u,d) =
\delta_x(q,u)(p,v,d)^*$$
where $\DD_0=\DD$ and $\DD_s=\DD_t=\{R\}$. Then, let us
define
$$\TConf{x}_M = \{\langle\alpha v, p, \beta\rangle \in \GConf_M \mid
(p,v)\in \T_x\}$$

It is readily seen that $\TConf{0}_M$, $\TConf{s}_M$ and $\TConf{t}_M$
are a partition of $\GConf$, since they are pairwise disjoint and
$\GConf = \TConf{0}_M \cup\TConf{s}_M \cup \TConf{t}_M$. Moreover, given
$$C\simeq\langle \alpha v_R, p, wv_L\beta\rangle\in\TConf{x}_M$$
we have that, $(p,v_R)\in \T_x$ by definition, and when $x=0$, that
$(p,v_L)\in \T_0$ also (since by the definition of ground
configuration, $p\in \Q_0\cup \Q_t$ implies that
$wv_L\beta\in\Sigma^*$). Thus, we can finally define
\begin{align*}
  W^*_M\ket{C} & = \quad 
                 \sum_{(q,u,d)\in \varS_x\times\DD_x}
                 \rdelta_x(p,v_d)(q,u,d)\,\ket{C_{q,u,d}}  \\
               & = \quad 
                 \sum_{(q,u,d)\in \varS_x\times\DD_x}
                 \delta_x(q,u)(p,v_d,d)^*\,\ket{C_{q,u,d}}
\end{align*}

\noindent
where $C_{q,u,d}=\rgamma_{\Sigma,\Q}(C,q,u,d)$.

We remark that, w.r.t.\ the definition of $W_M$, in the range of the
sum in $W_M^*$, we have $\varS_x\times\DD_x$ in the place of
$\T_x\times\DD$. In the case $x=0$, there is no difference, since
$\DD_x=\DD$ and then we consider both the reverse deplacements; in the
case $x=s,t$ instead, $\DD_x=\{R\}$ and then we consider the $R$
deplacement only. Technically, this is necessary as in these cases
$(p,v_L)\not\in \T_x$, and then $\delta(q,u)(p,v_L,d)$ would not be
defined. Indeed , for $x=s,t$, any configuration in $\TConf{x}_M$ can
be entereded from an $R$ deplacement only; then, in this case, to
reverse the quantum transition functions it suffices to consider $R$
reverse deplacements only. On the other hand, even in the definition
of $W_M$, we might have restricted the sum to the $R$ deplacement
only, in the case $x=s,t$. Indeed,
$$W_M(\ket{C}) = \sum_{(p,v,d)\in \T_x\times\DD_x}
\delta_x(q,u)(p,v,d)\,\ket{C_{p,v,d}}
$$
for $C=\langle\alpha,q,u\beta\rangle\in\SConf{x}_M$, and $x=0,s,t$.

\begin{lemma}\label{lem:WMstar-welldef}
  $W_M^*\ket{C} \in \SPAN{\cb{\GConf}}$, for every
  $C\in\GConf_M$. Then, $W_M^*$ defines an automorphism 
  $$W_M^*:\SPAN{\cb{\GConf}}\to\SPAN{\cb{\GConf}}$$
  of the linear space $\SPAN{\cb{\GConf}}$.
\end{lemma}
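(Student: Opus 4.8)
The plan is to follow closely the proof of Proposition~\ref{prop:WM-welldef}. The sum defining $W_M^*\ket{C}$ ranges over $\varS_x\times\DD_x$, a \emph{finite} index set (the state space and the alphabet are finite and $\DD_x\subseteq\DD$), so $W_M^*\ket{C}$ is \emph{a priori} a finite linear combination of kets $\ket{C_{q,u,d}}$; the only thing that needs checking is that every $C_{q,u,d}=\rgamma(C,q,u,d)$ occurring with a non-zero coefficient is a \emph{legal} configuration, i.e.\ lies in $\GConf_M$, so that indeed $W_M^*\ket{C}\in\SPAN{\cb{\GConf}}$. Once this is done, $W_M^*$ extends uniquely by linearity to a map $\SPAN{\cb{\GConf}}\to\SPAN{\cb{\GConf}}$, exactly as $W_M$ does in Proposition~\ref{prop:WM-welldef}.

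To verify the membership claim I would split on the unique block of the partition $\GConf_M=\TConf{0}_M\cup\TConf{s}_M\cup\TConf{t}_M$ to which $C$ belongs, writing $C\simeq\langle\alpha v_R,p,wv_L\beta\rangle$ with $(p,v_R)\in\T_x$ (and, when $x=0$, also $(p,v_L)\in\T_0$, as observed just before the statement). For $x=s$ and $x=t$ the reverse transition $\rdelta_x$ is rigid: the shape of $\delta_s$ forces the only non-zero term to be $q=p$, $u=\symextra{v_R}$, $d=R$, and the shape of $\delta_t$ forces $q=p$, $d=R$ and $u$ the unique symbol with $\symextra{u}=v_R$; in the first case $\rgamma(C,p,\symextra{v_R},R)$ prepends one fresh extra symbol to the right content of $C$, in the second it strips the trailing extra symbol off the left content of $C$. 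In both cases one checks directly, from clause~(2) (resp.\ clause~(3)) of Definition~\ref{Def:config} applied to $C$, that the result still satisfies the \emph{same} clause — the extra symbols still form one block immediately to the right of (resp.\ to the left of) the head. For $x=0$ the argument is even shorter: being a $\TConf{0}_M$-configuration forces the right content of $C$ into $\Sigma^*$ and, since $v_R\in\Sigma$, the left content of $C$ into $\Sigma^*$ as well; hence $C$ carries no extra symbol, so neither does $\rgamma(C,q,u,d)$ (its new current symbol $u$ lies in $\Sigma$, and $\rgamma$ only rearranges tape content), and since its new state $q$ lies in $\Q_0\cup\Q_s$ it automatically meets clause~(1) or clause~(2) of Definition~\ref{Def:config}.

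I expect the only fiddly point to be the bookkeeping in the $x=s$ and $x=t$ cases: one must make sure the single marked symbol created (resp.\ removed) by the reverse step lands (resp.\ leaves) in exactly the one position permitted by Definition~\ref{Def:config}, using the $\simeq_l,\simeq_r$ conventions to handle the boundary situations (empty left/right content, or a block of extra blanks $\symextra{\Box}$). None of this is deep, but it is precisely where the constrained forms of $\delta_s$ and $\delta_t$ are used. Having established $W_M^*\,\cb{\GConf}\subseteq\SPAN{\cb{\GConf}}$, the extension by linearity to a self-map of $\SPAN{\cb{\GConf}}$ is routine — and, as the notation $W_M^*$ anticipates, this operator is the adjoint of $W_M$ on $\SPAN{\cb{\GConf}}$, which is exactly what the reverse-step identities of Lemmas~\ref{lem:rev-step}, \ref{lem:rgamma-gamma} and \ref{lem:gamma-rgamma} are set up to prove in the lemma that follows.
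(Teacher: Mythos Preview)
Your proposal is correct and follows exactly the approach the paper itself indicates: the paper's proof is the single line ``By case analysis, as in the proof of Proposition~\ref{prop:WM-welldef}'', and what you have written is precisely that case analysis spelled out in detail. Your bookkeeping in each of the three cases $x\in\{0,s,t\}$ is accurate, including the observation that for $C\in\TConf{0}_M$ with $p\in\Q_t$ the condition $v_R\in\Sigma$ forces the $\abextra{\Sigma}^*$-block in $\alpha v_R$ to be empty.
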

\begin{proof}
  By case analysis, as in the proof of
  Proposition~\ref{prop:WM-welldef}.
\end{proof}

We can also see that $W_M^*$ maps every $\TConf{x}_M$ into
$\SConf{x}_M$, which is indeed the converse of the fact that $W_M$
maps $\SConf{x}_M$ into $\TConf{x}_M$.

\begin{lemma}\label{lem:W-images}
  \mbox{}
  \begin{enumerate}
  \item $W_M({\SPAN{\cb{\SConf{x}_M}}}) \subseteq \SPAN{\cb{\TConf{x}_M}}$
  \item $W_M^*({\SPAN{\cb{\TConf{x}_M}}}) \subseteq
    \SPAN{\cb{\SConf{x}_M}}$.
  \end{enumerate}
\end{lemma}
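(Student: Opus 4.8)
The plan is to reduce each inclusion, by linearity of $W_M$ and $W_M^*$ on $\SPAN{\cb{\GConf}}$, to the action on a single base configuration, and then to read off the conclusion from the typing of the three transition functions $\delta_0,\delta_s,\delta_t$ together with the definitions of the step functions $\gamma$ and $\rgamma$. It is convenient first to record the trivial restatement of the two families: a configuration lies in $\SConf{x}_M$ exactly when its (current state, current symbol) pair lies in $\varS_x$, and it lies in $\TConf{x}_M$ exactly when its (current state, \emph{last symbol of the left content}) pair lies in $\T_x$, with an empty left content read as the single blank $\Box$ via $\simeq_l$.

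For item (2) I would take $C=\langle\alpha v_R,p,wv_L\beta\rangle\in\TConf{x}_M$ and expand $W_M^*\ket{C}$ as the finite sum $\sum_{(q,u,d)\in\varS_x\times\DD_x}\delta_x(q,u)(p,v_d,d)^*\,\ket{C_{q,u,d}}$ with $C_{q,u,d}=\rgamma(C,q,u,d)$, where by Lemma~\ref{lem:WMstar-welldef} each $C_{q,u,d}$ carrying a non-zero coefficient is a legal configuration. Inspecting both cases of $\rgamma$, such a $C_{q,u,d}$ has current state $q$ and current symbol $u$; since $(q,u)$ ranges over the domain $\varS_x$ of $\delta_x$, the (state, current symbol) pair of $C_{q,u,d}$ lies in $\varS_x$, so $C_{q,u,d}\in\SConf{x}_M$ and hence $W_M^*\ket{C}\in\SPAN{\cb{\SConf{x}_M}}$.

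For item (1) I would take $C=\langle\alpha,q,u\beta\rangle\in\SConf{x}_M$, so $W_M\ket{C}=\sum_{(p,v,d)\in\T_x\times\DD}\delta_x(q,u)(p,v,d)\,\ket{C_{p,v,d}}$ with $C_{p,v,d}=\gamma(C,p,v,d)\in\GConf_M$ by Proposition~\ref{prop:WM-welldef}, and every summand with non-zero coefficient has $(p,v)\in\T_x$. If $d=R$ then $C_{p,v,R}\simeq\langle\alpha v,p,\beta\rangle$, whose last left symbol is $v$; since $(p,v)\in\T_x$, this configuration lies in $\TConf{x}_M$. The only case needing more than a glance is $d=L$: then $x=0$ necessarily, because the explicit formulas for $\delta_s$ and $\delta_t$ are non-zero only at $d=R$, and $C_{p,v,L}\simeq\langle\alpha',p,wv\beta\rangle$ with $\alpha=\alpha'w$ (pad $\alpha$ on the left by a blank if it is empty). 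Its last left symbol is then the last symbol $v'$ of $\alpha'$, and one must check $(p,v')\in\T_0$. From $(p,v)\in\T_0=(\Q_0\cup\Q_t)\times\Sigma$ we already have $p\in\Q_0\cup\Q_t$; and since $C\in\SConf{0}_M$ has current symbol $u\in\Sigma$ and current state in $\Q_0\cup\Q_s$, Definition~\ref{Def:config} forbids extra symbols from the left content of $C$, so $v'\in\Sigma$. Hence $(p,v')\in\T_0$ and $C_{p,v,L}\in\TConf{0}_M$, which finishes item (1).

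The single obstacle is exactly that last step: a left move of $\gamma$ does not leave the newly written symbol as the last symbol of the left content, so the quick argument used for right moves does not apply, and one has to appeal to the configuration constraints of Definition~\ref{Def:config} to exclude extra symbols from the left content of a $\SConf{0}_M$-configuration. Everything else is just matching domains and codomains of $\delta_0,\delta_s,\delta_t$.
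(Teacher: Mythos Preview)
Your proof is correct and is precisely the case analysis the paper only gestures at (its proof reads in full: ``By case analysis''). Your identification of the one non-trivial point---that after a left move the newly written symbol is no longer the relevant one for membership in $\TConf{0}_M$, and that Definition~\ref{Def:config} is what guarantees the left content of a $\SConf{0}_M$-configuration carries no extra symbols---is exactly the content the paper suppresses.
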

\begin{proof}
  By case analysis.
\end{proof}

We can now prove that $W_M^*$ defines the adjoint of $U_M$.

\begin{lemma}
  The unique extension of $W_M^*$ to the Hilbert space
  $\ell^2(\GConf_M)$ is the adjoint $U_M^*$ of $U_M$.
\end{lemma}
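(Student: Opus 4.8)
The plan is to show that $W_M$ and $W_M^*$ are mutually formal adjoints already on the dense subspace $\SPAN{\cb{\GConf_M}}$, and then to transfer this to the completion by continuity. I would take for granted, as in Definition~\ref{Def:TimeEvolutionOp}, that $W_M$ extends to a bounded operator on $\ell^2(\GConf_M)$, and observe that the same holds for $W_M^*$ by the finiteness of $\Q$ and $\Sigma$ (every basis vector has only finitely many non-zero images and pre-images, with amplitudes from a finite set); write $U_M$ for the extension of $W_M$ and $V$ for that of $W_M^*$. Since $\SPAN{\cb{\GConf_M}}$ is dense and the inner product is continuous, it suffices to prove
\begin{equation}\label{eq:adjoint-on-base}
  \inprod{W_M\ket{C}}{\ket{D}} = \inprod{\ket{C}}{W_M^*\ket{D}}
  \qquad\text{for all } C,D\in\GConf_M ,
\end{equation}
because \eqref{eq:adjoint-on-base} propagates by sesquilinearity to all pairs of vectors in $\SPAN{\cb{\GConf_M}}$ and then, by a limiting argument, to $\inprod{U_M\ket{\phi}}{\ket{\psi}}=\inprod{\ket{\phi}}{V\ket{\psi}}$ for all $\phi,\psi\in\ell^2(\GConf_M)$, which is exactly the defining property of $U_M^*$.

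To prove \eqref{eq:adjoint-on-base} I would argue as follows. Let $x,y\in\{0,s,t\}$ be the unique indices with $C\in\SConf{x}_M$ and $D\in\TConf{y}_M$; these exist since $\{\SConf{0}_M,\SConf{s}_M,\SConf{t}_M\}$ and $\{\TConf{0}_M,\TConf{s}_M,\TConf{t}_M\}$ each partition $\GConf_M$. If $x\neq y$, both sides of \eqref{eq:adjoint-on-base} vanish: by the first item of Lemma~\ref{lem:W-images} every base configuration occurring in $W_M\ket{C}$ lies in $\TConf{x}_M$, which is disjoint from $\TConf{y}_M\ni D$, and dually, by the second item, every base configuration in $W_M^*\ket{D}$ lies in $\SConf{y}_M$, disjoint from $\SConf{x}_M\ni C$. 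So assume $x=y$, and write $C=\langle\alpha,q,u\beta\rangle$ and $D\simeq\langle\alpha' v_R,p_D,w\,v_L\beta'\rangle$ as in Lemma~\ref{lem:rev-step}. Unfolding the definitions of $W_M$ and of $W_M^*$, and using that for a fixed $d$ the equality $\gamma(C,p,v,d)=D$ forces $(p,v)=(p_D,v_d)$ while $\rgamma(D,q',u',d)=C$ forces $(q',u')=(q,u)$, one obtains
\begin{align*}
  \inprod{W_M\ket{C}}{\ket{D}}
    &= \sum_{\{\,d\in\DD_x\,:\,\gamma(C,p_D,v_d,d)=D\,\}} \delta_x(q,u)(p_D,v_d,d)^* , \\
  \inprod{\ket{C}}{W_M^*\ket{D}}
    &= \sum_{\{\,d\in\DD_x\,:\,\rgamma(D,q,u,d)=C\,\}} \delta_x(q,u)(p_D,v_d,d)^* ,
\end{align*}
where the second line uses $\rdelta_x(p_D,v_d)(q,u,d)=\delta_x(q,u)(p_D,v_d,d)^*$, and in both lines the side conditions $(q,u)\in\varS_x$, $(p_D,v_R)\in\T_x$ (together with $(p_D,v_L)\in\T_0$ when $x=0$) and $d\in\DD_x$ are automatic from $C\in\SConf{x}_M$ and $D\in\TConf{x}_M$. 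By Lemma~\ref{lem:rev-step}, for each $d\in\DD_x$ the conditions $\gamma(C,p_D,v_d,d)=D$ and $\rgamma(D,q,u,d)=C$ are equivalent, so the two displayed sums run over the same index set and agree term by term, which proves \eqref{eq:adjoint-on-base}.

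The step I expect to be the main obstacle is the last displayed computation, namely matching the two finite sums exactly. This requires the $\simeq$-bookkeeping guaranteeing that, for fixed $d$, $\gamma(C,\cdot,\cdot,d)$ hits $D$ for at most one pair $(p,v)$ (and dually for $\rgamma$), together with a case check — separately for $x=0$ and for $x=s,t$ — that the domain restrictions (e.g.\ $\DD_x=\{R\}$ when $x=s,t$, and $(p_D,v_L)\in\T_0$ when $x=0$) make the two index sets literally coincide rather than differ by a phantom term. Both ingredients are already isolated in Lemmas~\ref{lem:rev-step} and~\ref{lem:W-images}, so modulo those the remaining work is routine bookkeeping.
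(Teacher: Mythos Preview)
Your proposal is correct and follows essentially the same route as the paper: reduce to basis vectors, split according to the partitions $\{\SConf{x}_M\}$ and $\{\TConf{y}_M\}$, dispose of the case $x\neq y$ via Lemma~\ref{lem:W-images}, and in the case $x=y$ match the two sides using Lemma~\ref{lem:rev-step} together with the observation that $\gamma(C,\cdot,\cdot,d)=D$ pins down $(p,v)=(p_D,v_d)$ and $\rgamma(D,\cdot,\cdot,d)=C$ pins down $(q',u')=(q,u)$. The only cosmetic difference is that the paper transforms one side into the other by a single chain of equalities, whereas you compute each side down to the same indexed sum over $d\in\DD_x$; and you are slightly more explicit than the paper about why $W_M$ and $W_M^*$ extend to bounded operators in the first place.
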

\begin{proof}
  It suffices to prove that
  $\inprod{W_M^* \ket{D}}{\ket{C}} = \tek{D} W_M\ket{C}$, for every
  $C,D\in\GConf_M$. Let $C\in\SConf{x}$ and
  $D\in\TConf{y}$, with $x,y\in\{0,s,t\}$. Since
  $\inprod{W_M^*\ket{D}}{\ket{C}} = \tek{D}W_M\ket{C} = 0$ if
  $x\neq y$ (by Lemma~\ref{lem:W-images}), we have to
  analyse the cases $x=y$ only.

  Let $C\simeq\langle\alpha_1w_1,q,u\beta_1\rangle\in\SConf{x}$,
  $D\simeq\langle\alpha_2v_R,p,w_2v_L\beta_2\rangle\in\TConf{x}$,
  $D_{q,u,d}=\rgamma(D,q,u,d)$, and $C_{p,v,d}=\gamma(C,p,v,d)$.
  Let us start with the case $x=0$.
  
  \begin{align*}
   \inprod{ W^*_M \ket{D}}{\ket{C}} & = \quad 
                                      \sum_{(q',u',d)\in \varS_x\times\DD_x}
                                      \rdelta_x(p,v_d)(q',u',d)^*\,\kinprod{D_{q',u',d}}{C}   \\
                                    & = \quad
                                      \sum_{(q',u',d)\in \varS_x\times\DD_x}
                                      \delta_x(q',u')(p,v_d,d)\,\kinprod{D_{q',u',d}}{C} \\
    & = \quad 
    \sum_{d\in\DD_x} \delta_x(q,u)(p,v_d,d)\,\kinprod{D_{q,u,d}}{C} \\
    & = \quad 
    \sum_{d\in\DD_x} \delta_x(q,u)(p,v_d,d)\,\kinprod{D}{C_{p,v_d,d}} \\
    & = \quad
    \sum_{(p',v',d)\in \T_x\times\DD_x}
    \delta_x(q,u)(p',v',d)\,\kinprod{D}{C_{p',v',d}} \\
    & = \quad \tek{D} W_M\ket{C} 
  \end{align*}

  Where we have used the facts that $\kinprod{D_{q',u',d}}{C}=0$ if
  $(q',u')\neq(q,u)$, that $\kinprod{D}{C_{p',v',d}}=0$ if
  $(p',v')\neq(p,v_d)$ (by inspection, we see that in these cases the
  above configurations differ for the current state or for the current
  symbol), and that $\kinprod{D_{q,u,d}}{C}=\kinprod{D}{C_{p,v_d,d}}$,
  by Lemma~\ref{lem:rev-step}.
\end{proof}

\subsection{Unitarity of the time evolution operator}

In the following, we shall complete the proof that the time evolution
operator of a QTM is unitary. Firstly, we show that the time evolution
operator $U_M$ of a pre-QTM $M$ is an isometry iff the local unitary
conditions hold (Lemma~\ref{lem:isometry-cond}), then we shall see
that, in the particular case of pre-QTMs, $U_M$ is unitary when it is
is an isometry (Lemma~\ref{lem:isometry-imp-unitary}). Therefore, a
pre-QTM is a QTM iff its time evolution operator is unitary.

\begin{lemma}\label{lem:isometry-cond}
  $U_M^*U_M=1$ iff the local unitary conditions holds, that is, $U_M$
  is an isometry iff $M$ is a QTM.
\end{lemma}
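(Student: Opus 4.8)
The plan is to turn the operator identity into an orthonormality statement about the images of the computational basis, and then verify it by a case analysis following the three parts $\delta_0,\delta_s,\delta_t$ of the transition function. For the reduction: by Lemma~\ref{lem:basic-results}(1), $U_M^*U_M=1$ is exactly the assertion that $U_M$ is an isometry. Since $\SPAN{\cb{\GConf}}$ is dense in $\ell^2(\GConf_M)$ and (by the lemma just proved) $W_M^*$ computes the adjoint of $U_M$, we have $\tek{C'}W_M^*W_M\ket{C}=\inprod{W_M\ket{C'}}{W_M\ket{C}}$, so $U_M$ is an isometry if and only if the family $\{W_M\ket{C}\}_{C\in\GConf_M}$ is orthonormal, i.e.\ $\inprod{W_M\ket{C'}}{W_M\ket{C}}=\kinprod{C'}{C}$ for all $C,C'\in\GConf_M$ (extending such a family by linearity gives an isometry on the dense subspace $\SPAN{\cb{\GConf}}$, hence a unique isometric extension $U_M$; conversely an isometry $U_M$ makes the family orthonormal). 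Thus it suffices to show that this orthonormality is equivalent to the local unitary conditions.

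Next I would dispose of the easy parts. If $C\in\SConf{x}_M$ and $C'\in\SConf{y}_M$ with $x\neq y$, then by Lemma~\ref{lem:W-images} the vectors $W_M\ket{C}$ and $W_M\ket{C'}$ lie in $\SPAN{\cb{\TConf{x}_M}}$ and $\SPAN{\cb{\TConf{y}_M}}$, which are orthogonal since $\TConf{x}_M\cap\TConf{y}_M=\emptyset$; and $\kinprod{C'}{C}=0$ as well, since $\SConf{x}_M\cap\SConf{y}_M=\emptyset$. So no condition is needed here. If $x=y\in\{s,t\}$, then $\delta_s$ and $\delta_t$ are deterministic, so $W_M\ket{C}$ and $W_M\ket{C'}$ are single base q-configurations $\ket{\gamma(C,\ldots)}$ and $\ket{\gamma(C',\ldots)}$; an inspection of the explicit form of $\delta_s$ (resp.\ $\delta_t$) shows that the associated map on configurations, $\langle\alpha,q_s,\symextra{a}\gamma\rangle\mapsto\langle\alpha a,q_s,\gamma\rangle$ (resp.\ $\langle\alpha,q_t,a\gamma\rangle\mapsto\langle\alpha\symextra{a},q_t,\gamma\rangle$), is injective on $\SConf{s}_M$ (resp.\ $\SConf{t}_M$), so again $\inprod{W_M\ket{C'}}{W_M\ket{C}}=\kinprod{C'}{C}$ holds automatically and the source/target parts impose no constraint.

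The heart of the proof is $C,C'\in\SConf{0}_M$, say $C\simeq\langle\alpha w,q,u\beta\rangle$ and $C'\simeq\langle\alpha'w',q',u'\beta'\rangle$. Here one expands $\inprod{W_M\ket{C'}}{W_M\ket{C}}$ as the double sum, over $(p',v',d')$ and $(p,v,d)$ in $\T_0\times\DD$, of $\delta_0(q',u')(p',v',d')^*\,\delta_0(q,u)(p,v,d)\,\inprod{\gamma(C',p',v',d')}{\gamma(C,p,v,d)}$, and determines, using Lemma~\ref{lem:rgamma-gamma} (or a direct computation of $\gamma$), exactly for which data the two resulting configurations coincide: this forces $d=d'$ together with $(p,v)=(p',v')$ and $C,C'$ agreeing except possibly in the current state and symbol, \emph{unless} $d\neq d'$ and the tapes of $C$ and $C'$ are related by a one-cell shift with matching boundary symbols (the $d\neq d'$ case being precisely where left and right transitions interact). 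Taking $C'=C$ and collecting the surviving terms gives $\sum_{(p,v,d)\in\T_0\times\DD}|\delta_0(q,u)(p,v,d)|^2$, so demanding this to equal $1$ is exactly condition~(1); taking $C'\neq C$ differing only in the current state/symbol and demanding the cross inner products vanish gives condition~(2); and the pairs $(C,C')$ for which $d=R$ is matched with $d'=L$ — including the degenerate configuration $\langle\lambda,q,u\lambda\rangle$, whose $R$-step and $L$-step both writing $\Box$ collapse to the same configuration — force, and are forced by, condition~(3). Conversely, assuming (1), (2) and (3), each such sum evaluates to $\kinprod{C'}{C}$. Hence $\{W_M\ket{C}\}_{C}$ is orthonormal if and only if (1), (2), (3) all hold, i.e.\ if and only if $M$ is a QTM, which by the first step is equivalent to $U_M^*U_M=1$.

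I expect the one genuine difficulty to be the bookkeeping in this last step: exhaustively listing the pairs of configurations and pairs of classical steps that produce the same image under $\gamma$ — which requires care with the equivalences $\simeq_l,\simeq_r$ at the blank boundaries of the tape and with the degenerate one-cell configuration $\langle\lambda,q,u\lambda\rangle$, where an $R$-move and an $L$-move can coincide — and then checking that the resulting system of scalar equations is \emph{exactly} the list (1)--(3), neither weaker nor stronger, so that the equivalence holds in both directions.
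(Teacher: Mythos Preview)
Your proposal is correct and follows essentially the same route as the paper. The paper's proof computes $W_M^*W_M\ket{C}$ explicitly using the reverse operator $W_M^*$ and Lemma~\ref{lem:rgamma-gamma} to classify the resulting configurations as $C[q',u']$, $C_{LR}[q',u',v]$, $C_{RL}[q',u',v]$, then reads off the three local conditions as the coefficients; you instead check orthonormality of $\{W_M\ket{C}\}$ by expanding $\inprod{W_M\ket{C'}}{W_M\ket{C}}$ and analysing when $\gamma(C,p,v,d)=\gamma(C',p',v',d')$. Since $\tek{C'}W_M^*W_M\ket{C}=\inprod{W_M\ket{C'}}{W_M\ket{C}}$, the two computations are literally the same numbers organised differently, and your case split ($d=d'$ with same tape frame giving (1)--(2), $d\neq d'$ with shifted frame giving (3), degenerate $\langle\lambda,q,u\lambda\rangle$ handled inside (3)) matches the paper's exactly.
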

\begin{proof}
  It suffices to prove that, for $C\in\SConf{x}_M$, with $x\in\{0,s,t\}$,
  $$W_M^*W_M\ket{C} \quad  = \sum_{(p,v,d)\in \T_x\times\DD}
  \delta_x(q,u)(p,v,d)\,W_M^*\ket{C_{p,v,d}} \quad = \quad 1$$
  iff the local unitary conditions hold. 

  Let us assume
  $$C \simeq \langle \alpha v_Rw,q,uzv_L \beta \rangle\in\SConf{x}$$ 
  $$C_{p,v,d} = \gamma(C,p,v,d)
  \qquad\qquad
  C_{p,v,d}^{q',u',d'} = \rgamma(\gamma(C,p,v,d), q',u',d')$$
  we have
  $$C_{p,v,L}  \simeq \langle \alpha v_R,q,wvzv_L \beta \rangle
  \qquad\qquad
  C_{p,v,R}  \simeq \langle \alpha v_Rwv,q,zv_L \beta \rangle$$

  The cases $x=s$ and $x=t$ are immediate. For instance, if $x=t$, we
  have $(q,u)\in \Q_t\times\Sigma$, and $\delta_t(q,u)(p,v,d)=1$, when
  $(p,v,d)=(q,\symextra{u}, R)$, while $\delta_t(q,u)(p,v,d)=0$,
  otherwise. Thus
  $$W_M^*W_M\ket{C} \quad  = \quad W_M^*\ket{C_{q,\symextra{u},R}} \quad =
  \sum_{(q',u')\in \varS_t}\delta_t(q',u')(q,\symextra{u},R)^* \ket{C^{q',u',R}_{q,\symextra{u},R}}$$
  but $\delta_t(q',u')(p,\symextra{u},R)=1$,
  $(p,u)=(q',u')$, and
  $\delta_t(q',u')(p,\symextra{u},r)=0$, otherwise; thus
  $$W_M^*W_M\ket{C} \quad  \quad =\quad  \ket{C^{q,u,R}_{q,\symextra{u},R}} \quad 
  = \quad \ket{C}$$
  (by Lemma~\ref{lem:rgamma-gamma}) and analogously for the case
  $x=s$. Therefore, for $x\in\{s,t\}$, since the equivalence holds for
  every pre-QTM. 

  Let us now analyse the case $x=0$.
  By definition,
  $$W_M^*\ket{C_{p,v,d}}=\sum_{(q',u',d')}=
  \delta_0(q',u')(p,v'_{d'},d')^*\,\ket{C^{q',u',d'}_{p,v,d}}$$
  for some $v'_{d'}$ that depend on $C_{p,v,d}$. By reorganizing the
  sums according to the cases $d'=d$ and $d'\neq d$, we get
  \begin{multline*}
    W_M^*W_M\ket{C} \quad  = \sum_{(p,v,d)\in \T_x\times\DD}
    \delta_0(q,u)(p,v,d)\,W_M^*\ket{C_{p,v,d}} \\
    \begin{aligned}
      \qquad & = \sum_{(p,v,d)\in \T_x\times\DD}\sum_{(q',u')\in \varS_x}
      \delta_0(q,u)(p,v,d)\, 
      \delta_0(q',u')(p,v,d)^*\,\ket{C^{q',u',d}_{p,v,d}}   \\
      & \quad + \sum_{(p,v)\in \T_x}\sum_{(q',u')\in \varS_x}
      \delta_0(q,u)(p,v,L)\, 
      \delta_0(q',u')(p,v_R,R)^*\,\ket{C^{q',u',R}_{p,v,L}}   \\
      & \quad + \sum_{(p,v)\in \T_x}\sum_{(q',u')\in \varS_x}
      \delta_0(q,u)(p,v,R)\, 
      \delta_0(q',u')(p,v_L,L)^*\,\ket{C^{q',u',L}_{p,v,R}}   \\
    \end{aligned}
  \end{multline*}
  By Lemma~\ref{lem:rgamma-gamma}, this corresponds to
  \begin{multline*}
    W_M^*W_M\ket{C} \\
    \begin{aligned}
      \quad & =
      \sum_{(p,v,d)\in \T_0\times\DD}
      |\delta_0(q,u)(p,v,d)|^2\,\ket{C}   \\
      &\ +
      \sum_{(q',u')\in \varS_0\setminus\{(q,u)\}}\;
      \sum_{(p,v,d)\in \T_0\times\DD}
      \delta_0(q,u)(p,v,d)\,
      \delta_0(q',u')(p,v,d)^*\,\ket{C[q',u']}   \\
      & \ +
      \sum_{(q',u')\in \varS_0}\;
      \sum_{v\in \Sigma}\;
      \sum_{p\in \Q_0\cup \Q_t}
      \delta_0(q,u)(p,v,R)\, 
      \delta_0(q',u')(p,v_L,L)^*\,\ket{C_{RL}[q',u',v]}   \\
      & \ +
      \sum_{(q',u')\in \varS_0}\;
      \sum_{v\in \Sigma}\;
      \sum_{p\in \Q_0\cup \Q_t}
      \delta_0(q,u)(p,v,L) 
      \delta_0(q',u')(p,v_R,R)^*\,\ket{C_{LR}[q',u',v]} 
    \end{aligned}
  \end{multline*}
  where, $C$, $C[q'u']$, $C_{LR'}[q',u',v]$, and $C_{RL}[q',u',v]$ are
  never equal, does not depend on $(p,v)$, and $C$, $C[q'u']$ are
  independent from $d$ also.
  
  From which we may conclude that, $W_M^*W_M\ket{C} = \ket{C}$ for
  every $C\in\SConf{0}$, iff for every $q,q'\in \Q_0\cup \Q_s$ and
  $v,u,u'v_L,v_R\in\Sigma$

  \begin{align*}
    1 &= 
        \sum_{(p,v,d)\in \T_0\times\DD}
        |\delta_0(q,u)(p,v,d)|^2
    \\
    0 &=
        \sum_{(p,v,d)\in \T_0\times\DD}
        \delta_0(q,u)(p,v,d)\,
        \delta_0(q',u')(p,v,d)^*
    \\
    0  &=
         \sum_{p\in \Q_0\cup \Q_t}
         \delta_0(q,u)(p,v,R)\, 
         \delta_0(q',u')(p,v_L,L)^*
    \\
    0  &=
         \sum_{p\in \Q_0\cup \Q_t}
         \delta_0(q,u)(p,v,L) 
         \delta_0(q',u')(p,v_R,R)^*
  \end{align*}
  that is, iff the local unitary conditions hold.
\end{proof}

\begin{lemma}\label{lem:isometry-imp-unitary}
  If $U_M$ is an isometry, then $U_MU_M^*=1$. As a consequence, $U_M$
  is unitary.
\end{lemma}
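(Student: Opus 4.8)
The plan is to reduce the whole statement, via Lemma~\ref{lem:basic-results}, to a single local computation on base configurations, and then to observe that this computation is a \emph{finite} one, governed by the local unitary conditions. Since $U_M$ is assumed to be an isometry, Lemma~\ref{lem:basic-results}(2) says that $P=U_MU_M^*$ is an orthogonal projection, and part~(2c) of that lemma turns the desired identity $U_MU_M^*=1$ into the pointwise condition $\tek{C}P\ket{C}=1$ for every base configuration $C\in\GConf_M$. Now $\tek{C}P\ket{C}=\tek{C}U_MU_M^*\ket{C}=\inprod{U_M^*\ket{C}}{U_M^*\ket{C}}=\norm{U_M^*\ket{C}}^2=\norm{W_M^*\ket{C}}^2$, so everything comes down to showing that $W_M^*$ preserves the norm of each base configuration. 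Once this is established we get $U_MU_M^*=1$; combined with $U_M^*U_M=1$ (Lemma~\ref{lem:isometry-cond}, which holds since $M$ is a QTM) this makes $U_M$ invertible with $U_M^{-1}=U_M^*$, hence unitary.

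To compute $\norm{W_M^*\ket{C}}^2$ I would split on whether $C$ lies in $\TConf{s}$, $\TConf{t}$ or $\TConf{0}$ (these three partition $\GConf$). In the first two cases the reversed source/target transition is essentially deterministic, so $W_M^*\ket{C}$ is a single base configuration and its norm is $1$. In the remaining case, writing $C\simeq\langle\alpha v_R,q,wv_L\beta\rangle$, the vector $W_M^*\ket{C}$ is a finite combination of the reverse-step configurations $\rgamma(C,q',u',R)$, with coefficient $\delta_0(q',u')(q,v_R,R)^*$, and $\rgamma(C,q',u',L)$, with coefficient $\delta_0(q',u')(q,v_L,L)^*$, as $(q',u')$ ranges over $\varS_0$. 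Using Lemma~\ref{lem:gamma-rgamma} — equivalently, a direct inspection of when two reverse steps out of $C$ produce the same configuration — one sees that distinct summands are orthogonal, except when $C$ has one of the finitely many degenerate tape shapes (essentially $C\simeq\langle\lambda,q,u\lambda\rangle$ and its small variants); those finitely many configurations are dispatched by computing the few remaining cross terms by hand. For every other $C$ this leaves
\[
\norm{W_M^*\ket{C}}^2=\sum_{(q',u')\in\varS_0}|\delta_0(q',u')(q,v_R,R)|^2+\sum_{(q',u')\in\varS_0}|\delta_0(q',u')(q,v_L,L)|^2 ,
\]
and since $q$ ranges over $\Q_0\cup\Q_t$ and $v_R,v_L$ over $\Sigma$ independently, what must be proved is: for all $p\in\Q_0\cup\Q_t$ and all $a,b\in\Sigma$,
\[
\sum_{(q,u)\in\varS_0}|\delta_0(q,u)(p,a,R)|^2+\sum_{(q,u)\in\varS_0}|\delta_0(q,u)(p,b,L)|^2=1 .
\]

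This last ``reverse'' identity is the heart of the argument and the place where the proof genuinely uses something beyond Hilbert-space generalities: it is false for an arbitrary isometry of an infinite-dimensional space, but it holds here because $\delta_0$ has finite domain $\varS_0$ and finite value-index set $\T_0\times\DD$, so it is an ordinary rectangular complex matrix. Conditions~(1) and~(2) of the local unitary conditions say its columns (indexed by $\varS_0$) form an orthonormal system in $\ell^2(\T_0\times\DD)$, while condition~(3) says, one pair of symbols at a time, that the range of the $R$-block and the range of the $L$-block are orthogonal in $\ell^2(\Q_0\cup\Q_t)$. From this finite-dimensional data one deduces, by a short argument in the spirit of \BeV, that all rows of the $R$-block have a common norm $\rho_p^R$, all rows of the $L$-block have a common norm $\rho_p^L$, and $\rho_p^R+\rho_p^L=1$ — condition~(3), which records that an $R$-step and an $L$-step write their symbol in different cells, is exactly what allows the two blocks to be combined. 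Plugging this back into the case analysis yields $\norm{W_M^*\ket{C}}=1$ for every base configuration, hence $P=1$, and the lemma follows. I expect this finite linear-algebra step (forward well-formedness of the transition matrix forcing backward orthonormality) to be the main obstacle; the rest is careful but routine bookkeeping with $\gamma$, $\rgamma$, the partition $\GConf=\TConf{0}\cup\TConf{s}\cup\TConf{t}$, and the isolation of the degenerate configurations.
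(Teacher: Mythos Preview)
Your reduction is correct and matches the paper: via Lemma~\ref{lem:basic-results}(2c) one must show $\tek{C}W_MW_M^*\ket{C}=\norm{W_M^*\ket{C}}^2=1$ for every $C\in\GConf_M$, the cases $C\in\TConf{s}\cup\TConf{t}$ are immediate, and for $C\in\TConf{0}$ the question reduces (modulo the degenerate shape $\langle\lambda,p,\lambda\rangle$) to the ``reverse identity''
\[
\sum_{(q,u)\in\varS_0}|\delta_0(q,u)(p,a,R)|^2+\sum_{(q,u)\in\varS_0}|\delta_0(q,u)(p,b,L)|^2=1
\qquad\text{for all }p,a,b.
\]
The gap is in how you propose to prove this identity. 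Saying that conditions~(1)--(3) make the columns of $\delta_0$ orthonormal and then invoking ``a short argument in the spirit of \BeV'' is not a proof: the matrix $\delta_0$ is $|\varS_0|\times|\T_0\times\DD|$, hence genuinely rectangular, and column orthonormality alone does \emph{not} force the row norms to have the structure $\rho_p^R+\rho_p^L=1$ independently of $a,b$. You correctly flag this as ``the main obstacle,'' but you do not actually overcome it, and the sketch you give (common row norms per block, then combine via condition~(3)) does not stand on its own.

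The paper's proof supplies exactly this missing step, and by a different mechanism than the one you outline. Instead of attacking the reverse identity algebraically, it exploits the inequality $\tek{C'}W_MW_M^*\ket{C'}\leq 1$ from Lemma~\ref{lem:basic-results}(2a): for $C\in\TConf{0}$ one forms the finite family $\B_C$ of configurations obtained by varying the state and the two reverse symbols $(p',v_R',v_L')$ over $(\Q_0\cup\Q_t)\times\Sigma^2$, and computes the \emph{sum} $\sum_{C'\in\B_C}\tek{C'}W_MW_M^*\ket{C'}$. Because the summation now runs over all of $\T_0$ in both directions, condition~(1) collapses the whole thing to $|\Sigma|\cdot|\varS_0|=|\B_C|$, while condition~(3) kills the cross terms (including those coming from the degenerate configurations, which you proposed to handle ``by hand''). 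Since each summand is at most $1$ and there are $|\B_C|$ of them, every summand equals $1$. This averaging-against-the-inequality trick is the actual engine of the proof; your proposal never invokes the bound $\leq 1$ except at the very end, and without it the reverse identity remains unproved.
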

\begin{proof}
  Since $U_M$ is an isometry, by Lemma~\ref{lem:basic-results} we
  know that $U_MU_M^*$ is an orthonormal projection and $U_MU_M^*=1$
  iff $\tek{C}U_MU_M^*\ket{C}=\tek{C}W_MW_M^*\ket{C}=1$ for every
  $C\in\GConf_M$.

  Let us assume that
  $$C \simeq \langle \alpha v_R,p,wv_L\beta \rangle\in\TConf{x}$$ 
  $$\overline{C}_{q,u,d} = \gamma(C,q,u,d)
  \qquad\qquad
  \overline{C}_{p',v',d'}^{q,u,d} = \gamma(\rgamma(C,q,u,d), p',v',d')$$
  we have
  $$\overline{C}_{q,u,L}  \simeq \langle \alpha v_Rw, q, u \beta \rangle
  \qquad\qquad
  \overline{C}_{q,u,R}  \simeq \langle \alpha, q, u wv_L\beta \rangle$$

  \begin{multline*}
    \tek{C} W_MW_M^*\ket{C} \\
    \begin{aligned}
      \quad & = 
      \sum_{(q,u,d)\in \varS_x\times\DD_x} \delta_x(q,u)(p,v_d,d)^*\,\tek{C}W_M\ket{\overline{C}_{q,u,d}} 
      \\
      \quad & = 
      \sum_{\dind{(q,u,d)\in \varS_x\times\DD_x}{(p',v',d')\in \T_x\times\DD_x}}
      \delta_x(q,u)(p,v_d,d)^*\delta_x(q,u)(p',v',d')\,
      \kinprod{C}{\overline{C}_{p',v',d'}^{q,u,d}}
    \end{aligned}
  \end{multline*}

  By Lemma~\ref{lem:gamma-rgamma}, we have that
  \begin{enumerate}
  \item\label{item:rgginpro:eq} when $d=d'$
    $$\kinprod{C}{\overline{C}_{p',v',d'}^{q,u,d}} =
    \kinprod{C}{\overline{C}_{p',v',d}^{q,u,d}} =
    \begin{cases}
      1 & \qquad\mbox{if $(p',v')=(p,v_d)$} \\
      0 & \qquad\mbox{otherwise}
    \end{cases}$$
  \item\label{item:rgginpro:diff} when $d\neq d'$
    $$\kinprod{C}{\overline{C}_{p',v',d'}^{q,u,d}} =
    \begin{cases}
      1 & \qquad
      \begin{minipage}[t]{5cm}
        if $C=\langle\lambda, p, \lambda\rangle$ and $p=p'$\\
        and $v_L=v_R=v'=u=\Box$
      \end{minipage}\\
      0 & \qquad\mbox{otherwise}
    \end{cases}
    $$
  \end{enumerate}

  When $x\in\{s,t\}$, we have $\DD_x=\{R\}$ and then, by
  item~(\ref{item:rgginpro:eq}),
  $$\tek{C} W_MW_M^*\ket{C} \quad=\quad 
  \sum_{(q,u)\in \varS_x} |\delta_x(q,u)(p,v_R,R)|^2$$
  Morevover, in both cases $\delta_s(q,u)(p,v_R,R) = 0$, with the
  exception of the case $(q,u)=(p,\widetilde{v}_R)$, where
  \begin{enumerate}
  \item for $x=s$, we have that $v_R=a\in\Sigma$ and
   $\widetilde{v}_R=\symextra{a}$;
  \item for $x=t$, we have that $v_R=\symextra{a}\in\abextra{\Sigma}$ and
    $\widetilde{v}_R=a$.
  \end{enumerate}
  Since $\delta_s(p,\widetilde{v}_R)(p,v_R,R) = 1$, in both cases we
  get $\tek{C} W_MW_M^*\ket{C}=1$.

  Let us now consider the case $x=0$. 
  \begin{multline*}
    \tek{C} W_MW_M^*\ket{C} \\
    \begin{aligned}
      \qquad
      & = 
      \sum_{(q,u,d)\in \varS_0\times\DD_0}
      |\delta_0(q,u)(p,v_d,d)|^2
      \\
      & \quad +
      \sum_{(q,d)\in \varS_0\times \DD}
      \delta_0(q,\Box)(p,\Box,d)^*\,\delta_x(q,\Box)(p,\Box,\widetilde{d})
      \kinprod{C}{\overline{C}_{p,\Box,\widetilde{d}}^{q,\Box,d}}
    \end{aligned}
  \end{multline*}
  where $\widetilde{d}=R$ when $d=L$, and $\widetilde{d}=L$ when
  $d=R$.
  We remark that the last addend may not be equal to $0$ only when
  $C=\langle\lambda,p,\lambda\rangle$ (see item~(\ref{item:rgginpro:diff}) above).

  In order to analyse it, let us not consider a single configuration
  only, but a whole family of configurations
  $$C[p',v_R',v_L'] \simeq \langle \alpha v_R', p', wv_L'\beta \rangle \in \TConf{0}$$
  which differ from
  $C\simeq\langle \alpha v_R, p, wv_L\beta \rangle\in\TConf{0}$ for the
  current state $p$ and the current $R$-reverse and $L$-reverse
  symbols $v_R$ and $v_L$, respectively. More precisely, we take
  $$\B_{C} = \{ C[p',v_R',v_L']  \in \TConf{0}  \mid (p',v_R',v_L')\in Q\times\aball{\Sigma}^2\}$$ 
  which, by the definition of $\TConf{0}$ corresponds also to
  $$\B_{C} = \{ C[p',v_R',v_L'] \mid
  (p',v_R',v_L')\in (\Q_0\cup \Q_t)\times\Sigma^2\}$$
  From which, it is readily seen that
  $$|\B_C|=(|\Q_0 \cup \Q_t|)\,|\Sigma|^2$$  

  Finally, let us take
  \begin{multline*}
    \sum_{C'\in \B_c}\tek{C'} W_MW_M^*\ket{C'}  \\
    \begin{aligned}
      \qquad & = 
      \sum_{(p',v_R',v_L')\in (\Q_0 \cup \Q_t)\times\Sigma^2}
      \tek{C[p',v_R',v_L']} W_MW_M^*\ket{C[p',v_R',v_L']} \\
      & = 
      \sum_{v_L'\in\Sigma}
      \sum_{(q,u)\in \varS_0}
      \sum_{(p',v_R')\in \T_0}
      |\delta_0(q,u)(p',v_R',R)|^2
      \\
      & \quad + 
      \sum_{v_R'\in\Sigma}
      \sum_{(q,u)\in \varS_0} 
      \sum_{(p',v_L')\in \T_0}
      |\delta_0(q,u)(p',v_L',L)|^2
      \\
      & \quad + 
      \sum_{\dind {(v_R',v_L')\in\Sigma^2}{(q,d)\in (\Q_0\cup \Q_s)\times\DD}}\;
      \sum_{p'\in \Q_0\cup \Q_t}
      \delta_0(q,\Box)(p',\Box,d)^*\,\delta_0(q,\Box)(p',\Box,\widetilde{d})\kinprod{C}{\overline{C}_{q,\Box,d}^{p',\Box,\tilde{d}}}
    \end{aligned}
  \end{multline*}

  By the local unitary conditions,
  $$\sum_{p'\in \Q_0\cup \Q_t}
  \delta_0(q,\Box)(p',\Box,d)^*\,\delta_0(q,\Box)(p',\Box,\widetilde{d}) = 0$$
  for every $q\in \Q_0\cup \Q_s$. Therefore, by taking into account that 
  $\delta_0(q,u)(p',v_R',R)$ does not depend on $v_L'$, and that $\delta_0(q,u)(p',v_L',L)$ does not depend on $v_R'$, we have  

  \begin{align*}
    \sum_{C'\in \B_c}\tek{C'} W_MW_M^*\ket{C'}  
    & = 
      |\Sigma| \sum_{(q,u)\in \varS_0} \sum_{(p',v_R')\in \T_0}
      |\delta_0(q,u)(p',v_R',R)|^2
    \\
    & \qquad +
      |\Sigma| \sum_{(q,u)\in \varS_0}
      \sum_{(p',v_L')\in \T_0}
      |\delta_0(q,u)(p',v_L',L)|^2
    \\
    & = 
      |\Sigma| \sum_{(q,u)\in \varS_0}
      \sum_{(p',v',d)\in \T_0\times\DD}
      |\delta_0(q,u)(p',v',d)|^2
      \\
  \end{align*}

  But by the local unitary conditions,
  $\sum_{(p',v',d)\in \T_0\times\DD} |\delta_0(q,u)(p',v',d)|^2 = 1$ for
  every $(q,u)\in \varS_0$. Thus, 

  \begin{equation*}
    \sum_{C'\in \B_c}\tek{C'} W_MW_M^*\ket{C'} 
    \ =\ 
    |\Sigma| \sum_{(p'',v'')\in \varS_0} 1
    \ =\ 
    |\Q_0\cup \Q_t|\,|\Sigma|^2
    \ =\ 
    |\B_C|
  \end{equation*}
  
  Finally, let us recall that, by Lemma~\ref{lem:basic-results},
  $\tek{C} W_MW_M^*\ket{C} \leq 1$ for every
  $C\in\GConf_M$. Therefore, for every $C\in\SConf{0}_M$,
  \begin{multline*}
    1 \geq \tek{C} W_MW_M^*\ket{C} \\
    = \sum_{C'\in \B_c}\tek{C'} W_MW_M^*\ket{C'}
    - \sum_{C'\in \B_c\setminus\{C\}}\tek{C'} W_MW_M^*\ket{C'} \\
    = |\B_C| - \sum_{C'\in \B_C\setminus\{C\}}\tek{C'} W_MW_M^*\ket{C'} \\
    \geq |\B_C| - |\B_C\setminus\{C\}|
    = 1
  \end{multline*}
  That is, $\tek{C} W_MW_M^*\ket{C} =1$.
\end{proof}

\begin{theorem}\label{thm:qtm-iff-unitary}
  A pre-QTM is a QTM iff its time evolution operator is unitary.
\end{theorem}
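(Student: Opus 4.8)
The statement to prove is Theorem~\ref{thm:qtm-iff-unitary}: a pre-QTM is a QTM if and only if its time evolution operator $U_M$ is unitary. The approach is to stitch together the two lemmas proved just above it, using the observation that ``being a QTM'' is precisely ``being a pre-QTM whose main transition function satisfies the local unitary conditions''. So the theorem is essentially a bookkeeping statement: I want to convert the chain ``local unitary conditions $\iff$ $U_M$ isometry $\implies$ $U_M$ unitary'' into the clean biconditional ``QTM $\iff$ $U_M$ unitary''.

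\begin{proof}
  Let $M$ be a pre-QTM with time evolution operator $U_M$. By
  definition, $M$ is a QTM exactly when $\delta_0$ satisfies the local
  unitary conditions of Definition~\ref{def:QTM} (the source and
  target parts $\delta_s$, $\delta_t$ are already fixed for any
  pre-QTM).

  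Suppose first that $M$ is a QTM. Then the local unitary conditions
  hold, so by Lemma~\ref{lem:isometry-cond} we have $U_M^*U_M = 1$,
  i.e.\ $U_M$ is an isometry. By
  Lemma~\ref{lem:isometry-imp-unitary}, an isometric $U_M$ satisfies
  $U_MU_M^* = 1$ as well, hence $U_M$ is invertible with
  $U_M^{-1} = U_M^*$, that is, $U_M$ is unitary.

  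Conversely, suppose $U_M$ is unitary. Then in particular $U_M$ is an
  isometry (a unitary operator preserves the inner product by
  definition), so $U_M^*U_M = 1$. By Lemma~\ref{lem:isometry-cond}
  again, this forces the local unitary conditions to hold, and
  therefore $M$ is a QTM.
\end{proof}

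**Main obstacle.** There is no real obstacle here: both directions are immediate once Lemmas~\ref{lem:isometry-cond} and~\ref{lem:isometry-imp-unitary} are in hand, since $U_M^*U_M=1$ is literally the definition of isometry on these spaces and Lemma~\ref{lem:isometry-cond} states the equivalence of that identity with the local unitary conditions. The only thing to be slightly careful about is the direction ``QTM $\implies$ unitary'': isometry alone does not give surjectivity in infinite dimensions (as the authors stress), so one genuinely needs Lemma~\ref{lem:isometry-imp-unitary}, which is where all the combinatorial work on $W_MW_M^*$ and the counting argument over the families $\B_C$ lives. Everything hard has already been discharged in that lemma; the theorem itself is just the synthesis.
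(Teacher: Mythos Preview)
Your proof is correct and follows essentially the same approach as the paper: both directions are obtained by combining Lemma~\ref{lem:isometry-cond} (local unitary conditions $\iff$ $U_M$ is an isometry) with Lemma~\ref{lem:isometry-imp-unitary} (isometry $\implies$ unitary), exactly as you do. The paper's proof is just a more compressed version of yours.
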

\begin{proof}
  If the time evolution operator $U_M$ of the pre-QTM $M$ is unitary,
  and therefore an isometry, then the local unitary conditions hold
  (by Lemma~\ref{lem:isometry-cond}), and thus $M$ is a QTM. On the
  other hand, if $M$ is a QTM, and therefore the unitary conditions
  hold, then $U_M$ is an isometry (by Lemma~\ref{lem:isometry-cond}),
  and it is unitary indeed (by Lemma~\ref{lem:isometry-imp-unitary}).
\end{proof}

\end{document}